\documentclass[11pt,a4paper]{article}
\usepackage{amsmath, amsfonts, amssymb}
\usepackage{a4wide}
\usepackage{parskip}
\usepackage{enumitem}
\usepackage{xcolor}

\usepackage{hyperref}
\usepackage{booktabs}
\usepackage{caption}
\usepackage{siunitx}
\usepackage{tabularx}
\usepackage{comment}
\usepackage{multirow}
\usepackage{adjustbox}
\usepackage{mathtools}
\usepackage{caption}
\usepackage{parskip}
\usepackage{graphicx}
\usepackage{adjustbox}
\usepackage{multirow}
\usepackage{amsthm}
\usepackage{bm, makecell}
\usepackage{lscape}
\usepackage{rotating}
\usepackage{floatrow}
\usepackage[noadjust]{cite}

\newtheorem{theorem}{Theorem}[section]
\newtheorem{lemma}[theorem]{Lemma}
\newtheorem{corollary}[theorem]{Corollary}
\newtheorem{proposition}[theorem]{Proposition}
\theoremstyle{definition}

\newtheorem{remark}[theorem]{Remark}

\numberwithin{equation}{section}

\newcommand{\Min}{\textnormal{min}}
\newcommand{\Tor}{\textnormal{Tor}}

\usepackage{tikz,xcolor,hyperref}
\usepackage{mathdots}

\definecolor{lime}{HTML}{A6CE39}
\DeclareRobustCommand{\orcidicon}{%
	\begin{tikzpicture}
		\draw[lime, fill=lime] (0,0) 
		circle [radius=0.16] 
		node[white] {{\fontfamily{qag}\selectfont \tiny ID}};
		\draw[white, fill=white] (-0.0625,0.095) 
		circle [radius=0.007];
	\end{tikzpicture}
	\hspace{-2mm}
}

\foreach \x in {A, ..., Z}{%
	\expandafter\xdef\csname orcid\x\endcsname{\noexpand\href{https://orcid.org/\csname orcidauthor\x\endcsname}{\noexpand\orcidicon}}
}

\begin{document}
	\date{}
		\title{Constacyclic codes of length $np^s$ over $\frac{\mathbb{F}_{p^m}[u]}{\langle u^t\rangle}$: Torsions and Cardinalities}
		\author{{\bf Akanksha Tiwari  \footnote{email: {\tt akankshafzd8@gmail.com} }\orcidA{}, \bf Pramod Kanwar\footnote{email: {\tt kanwar@ohio.edu}},  and \bf Ritumoni Sarma\footnote{email: {\tt ritumoni407@gmail.com}}\orcidC{}} \\\\ $*,\ddagger$ Department of Mathematics \\ Indian Institute of Technology Delhi\\Hauz Khas, New Delhi-110016, India\\\\ $\dagger$ Department of Mathematics \\ Ohio University-Zanesville \\Zanesville, Ohio, U.S.A.}
  
\maketitle

\begin{abstract}
The purpose of this article is to study constacyclic codes of length $np^s$ over $R^t:=\frac{\mathbb{F}_{p^m}[u]}{\langle u^t \rangle },$ where $t$ is a natural number and $\gcd(n,p)=1$. We give generators of all the ideals of  $R^{t,n}_{\delta}:=\frac{R^t[x]}{\langle x^{np^s}-\delta \rangle},$ where $\delta= \delta_0+u\delta_1+\dots+u^{t-1}\delta_{t-1}$ is a unit in $R^t$. For $n=1,\ 2, \ 3$ and $t=3$, we provide all types of ideals (constacyclic codes) and also give the torsional degrees as well as cardinalities of these codes.  

\medskip

\noindent \textit{Keywords:} Linear Code, Cyclic Code, Constacyclic Code, Finite Chain Ring, Torsion Module
			
\medskip
			
\noindent \textit{2020 Mathematics Subject Classification:} 	94B05, 94B15, 16P70, 13C12

\end{abstract}

\section{Introduction}\label{section1}
Linear codes over fields constitute a fundamental class of codes in the theory of error-correcting codes. The landmark paper by Hammons et al. (\cite{hammons1994z}), where they showed that some good nonlinear codes can be constructed from linear codes over $\mathbb{Z}_4$, brought linear codes over rings to the forefront of coding theory research. Since then, researchers have explored various subclasses of linear codes over certain rings, highlighting their structural, theoretical, and practical significance. 
Among them, cyclic codes, in particular, and constacyclic codes, in general, represent important subclasses of linear codes, distinguished by their algebraic structure and the availability of efficient encoding and decoding algorithms, particularly those based on shift-register implementations. Both these subclasses have been extensively studied over fields as well as different classes of rings. One such class of rings is the rings of the types $\frac{\mathbb{F}_{p^m}[u]}{\langle u^t \rangle}=\mathbb{F}_{p^m}+u\mathbb{F}_{p^m}+\dots +u^{t-1}\mathbb{F}_{p^m}$, where $t$ is a positive integer. Over the last decade and a half, a lot of efforts have been made to the study of constacyclic codes of different lengths over these rings for $t=2,\ 3$, and 4 (for example, see \cite{chen2016constacyclic},  \cite{dinh2010constacyclic}, \cite{dinh2012repeated}, \cite{dinh2013structure}, \cite{dinh2013repeated}, \cite{dinh2017constacyclic}, \cite{dinh2020constacyclic}, \cite{liu2016repeated}, \cite{Sriwirach}). Efforts have also been made to give the structure of polycyclic codes over $\frac{\mathbb{F}_{p^m}[u]}{\langle u^t \rangle}$. In \cite{boudine2023classification}, results from valuation theory are used to give the generators of the ideals of the ring $\frac{\frac{\mathbb{F}_{p^m}[u]}{\langle u^t\rangle }[x]}{\langle f(x)^{p^s}\rangle}$, where $f(x)$ is a monic irreducible polynomial in $\frac{\mathbb{F}_{p^m}[u]}{\langle u^t\rangle }[x]$ and, as a particular case, for $f(x)=x^4+x^3+x^2+x+1$, $t=3$, and $p\equiv 2\,(\textnormal{mod}\,5)$ or $p\equiv 3\,(\textnormal{mod}\,5)$ certain parameters are computed. In \cite{AkaKanSar}, authors pointed out a gap in the proof of the result in \cite{boudine2023classification} giving the structure of the ideals and used basic commutative algebra techniques along with the principle of mathematical induction to give the structure of polycyclic codes over the ring $\frac{\mathbb{F}_{p^m}[u]}{\langle u^t \rangle}$ in a more general setting. Furthermore, as a particular case, the authors in \cite{AkaKanSar} computed parameters for a general irreducible polynomial $f(x)$ over $\mathbb{F}_{p^m}$ when $t=4$ and used these parameters to compute torsional degrees as well as cardinalities of these codes.
\par 
In this article, we continue this direction of research and study constacyclic codes of length $np^s$ over $\frac{\mathbb{F}_{p^m}[u]}{\langle u^t\rangle}$. We first give a special case of Theorem 3.2 in \cite{AkaKanSar} and give generators of all ideals of $\frac{\frac{\mathbb{F}_{p^m}[u]}{\langle u^t\rangle }[x]}{\langle x^{np^s}-\delta\rangle},$ where $\delta$ is a unit in  $\frac{\mathbb{F}_{p^m}[u]}{\langle u^t\rangle}$ and $(n,p)=1$ (Theorem \ref{Maintheorem}). We, then, give the modified form of the generators in the case $x^n-\delta_{0,0}$, where $\delta=\delta_0+u\delta_1 +\dots+u^{t-1}\delta_{t-1}$ and $\delta_{0,0}\in\mathbb{F}_{p^m}$ is such that  $\delta_{0,0}^{p^s}=\delta_0$, is irreducible over $\mathbb{F}_{p^m}$ (Theorem \ref{nps ideals}). We use this to give a complete description of ideals in the cases $n=1,2,$ and $3$ (Theorem \ref{nps ideals}, Theorem \ref{final theorem for section 4}, Theorem \ref{final theorem for section 5}). For the case $t=3$, we give all the types of constacyclic codes of length $p^s$, $2p^s$, and $3p^s$ and also give the torsional degrees as well as cardinalities of these codes, including the missing cases in the works of \cite{Sriwirach}.
We also describe a mild sufficient condition for any polynomial in $\mathbb{F}_{p^m}[x]$ of degree less than $n$ to be invertible in $\frac{R^t[x]}{\langle f(x)\rangle}$, where $f(x)\in R^t[x^{p^s}]$ (Proposition \ref{generaltheorem}), generalizing earlier results for polynomials of degree 1 and 2 for specific $f(x)$.

\section{\texorpdfstring{Constacyclic codes of length $np^s$ over $\frac{\mathbb{F}_{p^m}[u]}{\langle u^t\rangle}$}{}{}}\label{section2}
Recall that for a finite commutative ring $R$ with unity and a unit $\lambda \in R$, an $R$-submodule $C$ of $R^n$ is called a $\lambda$-constacyclic code of length $n$ over $R$ if $(\lambda c_{n-1}, c_0, c_1, \dots,c_{n-2}) \in C$ whenever $(c_0, c_1, \dots,c_{n-1}) \in C$  . Further, if we identify $n$-tuples with polynomials of degree $n-1$, then constacyclic codes are precisely the ideals of the ring $\frac{R[x]}{\langle x^n-\lambda \rangle}$. It is, therefore, enough to study ideals of the ring $\frac{R[x]}{\langle x^n-\lambda \rangle}$ to study $\lambda$-constacyclic codes over $R.$

Let $\mathbb{F}_{p^m}$, where $p$ is a prime, be the finite field of order $p^m$. For $t\geq1$, set $R^t:=\frac{\mathbb{F}_{p^m}[u]}{\langle u^t \rangle }.$ In this section, we give the structure of the ideals of $R^{t,n}_{\delta}:=\frac{R^t[x]}{\langle x^{np^s}-\delta\rangle}$, where  $n$ is a positive integer such that $(n,p)=1$ and $\delta$ is a unit in $R^t$. Writing $\delta=\delta_0+u\delta_1+\dots+u^{t-1}\delta_{t-1}$, where $\delta_i\in \mathbb{F}_{p^m},$ we note that $\delta$ is a unit in $R^t$ if and only if $\delta_0\ne 0$. Let $\delta_{0,0}=\delta_0^{p^{m-r}}$, where $s=mq+r$ and $0\leq r\leq m-1.$ Then $\delta_{0,0}^{p^s}=\delta_0.$ We first state a result from \cite{AkaKanSar} that will be helpful in obtaining the structure of the ideals of $R^{t,n}_{\delta}$.
\begin{proposition}(\cite{AkaKanSar}, Proposition 2.1) \label{FormofIdeals}
Let $A$ and $B$ be commutative rings with unity and let $\phi: A \rightarrow B$ be a surjective ring homomorphism with $ ker(\phi)=\langle \pi\rangle.$ If $I$ is an ideal of $A$ such that $\phi(I)$ is a finitely generated ideal then there exist $a_1,a_2,\dots,a_n$ in $I$ such that 
$$I=\langle a_1,a_2,\dots,a_n\rangle+\pi(I:\pi),$$
where $n$ is the number of generators of $\phi(I)$ and 
$$(I:\pi)=\{x\in R : x\pi \in I \}.$$
In particular, if $B$ is Noetherian then for every ideal $I$ of $A$ there exists a positive integer $n$ and $a_1,a_2,\dots,a_n$ in $I$ such that
$$I=\langle a_1,a_2,\dots,a_n\rangle+\pi(I:\pi).$$
\end{proposition}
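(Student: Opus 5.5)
The plan is to pull generators of $\phi(I)$ back to $I$ and then show that whatever is left over lies in $\langle \pi\rangle$, which is exactly where the colon ideal appears. (Here $(I:\pi)$ is understood as the set of $x\in A$ with $x\pi\in I$.)

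First, the setup. Since $\phi$ is surjective, $\phi(I)$ is an ideal of $B$. By hypothesis it is finitely generated, say
$$\phi(I)=\langle b_1,\dots,b_n\rangle.$$
Each $b_i$ lies in $\phi(I)$, so we can choose $a_i\in I$ with $\phi(a_i)=b_i$. Put $J=\langle a_1,\dots,a_n\rangle+\pi(I:\pi)$.

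Next, the inclusion $J\subseteq I$. Each $a_i$ lies in $I$. If $x\in(I:\pi)$, then $\pi x\in I$ by definition, so $\pi(I:\pi)\subseteq I$.

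Then the reverse inclusion $I\subseteq J$. Take $y\in I$. Since $\phi(y)\in\phi(I)$, we can write $\phi(y)=\sum_i \beta_i b_i$ with $\beta_i\in B$. By surjectivity, pick $r_i\in A$ with $\phi(r_i)=\beta_i$. Then
$$\phi\Big(y-\sum_i r_i a_i\Big)=0,$$
so $y-\sum_i r_ia_i\in\ker\phi=\langle\pi\rangle$, i.e. $y-\sum_i r_ia_i=\pi z$ for some $z\in A$. Both $y$ and $\sum_i r_ia_i$ lie in $I$, hence $\pi z\in I$, which means $z\in(I:\pi)$. Therefore $y\in\langle a_1,\dots,a_n\rangle+\pi(I:\pi)$.

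Finally, the Noetherian case. If $B$ is Noetherian, then every ideal of $B$ is finitely generated, in particular $\phi(I)$, so the first part applies. If $\phi(I)=0$, we may take $n=1$ and $a_1=0$.

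There is no real obstacle here. The only point needing care is the middle step of the reverse inclusion: recognizing that the difference $y-\sum_i r_ia_i$ lies both in $\ker\phi$ and in $I$, which is what places its cofactor $z$ in the colon ideal.
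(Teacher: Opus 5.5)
Your proof is correct. You lift generators of $\phi(I)$ to elements $a_i\in I$, then observe that for $y\in I$ the difference $y-\sum_i r_ia_i$ lies in $\ker\phi\cap I$, so it equals $\pi z$ with $z\in(I:\pi)$; the paper omits the proof and cites \cite{AkaKanSar}, and this lifting argument is the standard one behind that result, with your treatment of the Noetherian case (including $\phi(I)=0$) also fine.
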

We refer the reader to \cite{AkaKanSar} for the proof of this as well as the consequences of this proposition.
Let $x^n-\delta_{0,0}=f_1(x)^{n_1}f_2(x)^{n_2}\cdots f_l(x)^{n_l}$, where $l$ is a natural number and $n_i\geq 1$ for $1\leq i\leq l$, be the factorization of $x^n-\delta_{0,0}$ into irreducible polynomials over $\mathbb{F}_{p^m}$. Then every ideal of $\frac{\mathbb{F}_{p^m}[x]}{\langle (x^{n}-\delta_{0,0})^{p^s}\rangle}$ will be of the form $\langle f_1(x)^{k_1}f_2(x)^{k_2}\cdots f_l(x)^{k_l}\rangle$, where $0\leq k_i\leq n_ip^s$ and $1\leq i\leq l.$ Note that the map $\mu :R^{t,n}_{\delta}\rightarrow \frac{\mathbb{F}_{p^m}[x]}{\langle (x^{n}-\delta_{0,0})^{p^s}\rangle}$ such that $\mu(c(x))=c(x) (\textnormal{mod}\, u)$ is a surjective ring homomorphism with $\ker (\mu)=\langle u \rangle$. Also, the map $\Phi: R^{t,n}_{\delta} \rightarrow R^{t-1,n}_{\delta'}$, where $\delta'=\delta\textnormal{(mod }u^{t-1})\in R^{t-1}$,  given by $$\Phi(c(x))=c(x)\textnormal{ (mod } u^{t-1}),$$ is a surjective ring homomorphism. \\
With these notations, in our next theorem, we give the structure of the ideals of $R^{t,n}_{\delta}$. The approach of the proof is similar to that used in the proof of Theorem 3.2 in \cite{AkaKanSar}. For the sake of completeness, we state it here. As in the proof of Theorem 3.2 in \cite{AkaKanSar}, Proposition \ref{FormofIdeals} plays a critical role in the proof.

\begin{theorem}\label{Maintheorem}
    The ideals of the ring $R^{t,n}_{\delta}$ and their generators have one of the following forms.
    \begin{itemize}
        \item [(i)] Trivial ideals $\langle 0\rangle,$ $\langle 1 \rangle.$
        \item [(ii)] Any generator of a non-trivial ideal contained in $\langle u\rangle $ has the form:
        $$u^{(t-1)-i}(f_1(x)^{k_{1,i}}f_2(x)^{k_{2,i}}\cdots f_l(x)^{k_{l,i}})-u^{(t-1)-(i-1)}g(x),$$
        for some $0\leq i \leq t-2,\, g(x)\in R^{t,n}_{\delta},\,$ and $  0\leq k_{j,i}\leq n_jp^s$ (not all $k_{j,i}=n_jp^s$) for $1\leq j\leq l.$\\
        In fact, any non-trivial ideal $I$ contained in $\langle u \rangle$ has the form:
        \begin{align*}
           I=\langle& u^{(t-1)-i_1}(f_1(x)^{k_{1,i_1}}f_2(x)^{k_{2,i_1}}\cdots f_l(x)^{k_{l,i_1}})-u^{(t-1)-(i_1-1)}g_{i_1}(x),\dots, u^{(t-1)-i_d}\\&(f_1(x)^{k_{1,i_d}}
        f_2(x)^{k_{2,i_d}}\cdots f_l(x)^{k_{l,i_d}})-u^{(t-1)-(i_d-1)}g_{i_d}(x)\rangle, 
        \end{align*}
        where $0\leq i_1<i_2<\dots<i_d\leq t-2$, $g_{i_j}(x)\in R^{t,n}_{\delta}$ for $1\leq j\leq d.$
        \item[(iii)] Any non-trivial ideal not contained in $\langle u \rangle$ has the form:
        $$\langle (f_1(x)^{k_{1}}f_2(x)^{k_{2}}\cdots f_l(x)^{k_{l}})+ ur(x)\rangle+I,$$ 
        where $0\leq k_{i}\leq n_ip^s$ for $1\leq i\leq l$, $r(x)\in R^{t,n}_\delta$, and $I$ is an ideal as described in (ii).
        \end{itemize}
\end{theorem}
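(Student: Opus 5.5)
I would prove the statement by induction on $t$, using Proposition \ref{FormofIdeals} twice: once with the reduction map $\mu$ (kernel $\langle u\rangle$) and once with the truncation map $\Phi$ (kernel $\langle u^{t-1}\rangle$). The base case $t=1$ is immediate. There $R^{1,n}_{\delta}=\frac{\mathbb{F}_{p^m}[x]}{\langle x^{np^s}-\delta_0\rangle}$, and $x^{np^s}-\delta_0=(x^n-\delta_{0,0})^{p^s}=\prod_j f_j(x)^{n_jp^s}$. So every ideal is principal, generated by some $\prod_j f_j^{k_j}$, which matches (i) and (iii) with $u=0$.

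\textbf{Ideals contained in $\langle u\rangle$.} Let $I\subseteq\langle u\rangle$ and apply Proposition \ref{FormofIdeals} with $\Phi$. The image $\Phi(I)$ is an ideal of $R^{t-1,n}_{\delta'}$ contained in $\langle u\rangle$, so by the induction hypothesis it has generators of the form (ii) with $t-1$ in place of $t$. Lifting each generator to an element of $I$ keeps the shape $u^{(t-2)-i}\prod_j f_j^{k_{j,i}}-u^{(t-2)-(i-1)}g(x)$, since the discrepancy lies in $u^{t-1}R^{t,n}_\delta$ and can be absorbed into $g$.

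The kernel part is $u^{t-1}(I:u^{t-1})$. Because $u^{t}=0$, this ideal equals $u^{t-1}J$, where $J=\mu\bigl((I:u^{t-1})\bigr)$ is an ideal of $\frac{\mathbb{F}_{p^m}[x]}{\langle (x^n-\delta_{0,0})^{p^s}\rangle}$. Hence $J=\langle\prod_j f_j^{k_j}\rangle$, and the kernel part is generated by a single element $u^{t-1}\prod_j f_j^{k_{j}}$, which is the $i=0$ generator with zero correction term.

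The exponents of $u$ still have to be matched to the indexing in (ii), $u^{(t-1)-i}$ with $0\le i\le t-2$. Note that a generator $u^{a}(\ldots)$ in $R^{t-1}$ with $a\ge1$ keeps its exponent $a$ when viewed in $R^t$. The generators of $\Phi(I)$ carry exponents $1,\dots,t-2$, and together with the exponent $t-1$ from the kernel this covers exactly the required range. If two generators have the same exponent, I would keep only the one needed; choosing the minimal exponent vector at each level does this. A generator whose polynomial part is $\prod_j f_j^{n_jp^s}=0$ is degenerate, which explains the restriction ``not all $k_{j,i}=n_jp^s$''; such a generator simply falls to a higher power of $u$ and is absorbed into another level.

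\textbf{Ideals not contained in $\langle u\rangle$.} Let $I\not\subseteq\langle u\rangle$ and apply Proposition \ref{FormofIdeals} with $\mu$. The image $\mu(I)$ is principal, generated by $\prod_j f_j^{k_j}$, so we may lift it to an element $\prod_j f_j^{k_j}+ur(x)\in I$. The remaining part is $u(I:u)$, which lies in $\langle u\rangle$, so it is an ideal of the kind already treated. This gives (iii).

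\textbf{Main obstacle.} The delicate point is the bookkeeping in the $\langle u\rangle$ case. One must check that the lifted generators really have the stated shape, with the leading term a pure product of the $f_j$ and every correction carrying one more power of $u$. One must also check that the indices $i_1<\dots<i_d$ can be taken strictly increasing. For the first, I would absorb unit factors of the form $1+u(\cdot)$ into $g$. For the second, I would choose, at each power of $u$, the generator with the smallest $f$-exponents.
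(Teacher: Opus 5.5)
Your proposal is correct and follows essentially the route the paper indicates: it defers to the argument of Theorem 3.2 in \cite{AkaKanSar}, namely induction on $t$ with Proposition \ref{FormofIdeals} applied to $\Phi$ for ideals inside $\langle u\rangle$ and to $\mu$ otherwise. The bookkeeping you flag as the main obstacle is in fact automatic---the levels are distinct because the induction hypothesis gives distinct levels for $\Phi(I)$ (shifted up by one) while the kernel part contributes only the level $i=0$, and the condition ``not all $k_{j,i}=n_jp^s$'' is inherited or comes from $J\neq 0$---so the ``keep the minimal one'' step is unnecessary; note also that $\prod_j f_j^{n_jp^s}=(x^n-\delta_{0,0})^{p^s}=\delta-\delta_0$ vanishes only modulo $u$, not in $R^{t,n}_\delta$.
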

   Next, we give some results that will be helpful in giving the representation of elements of $R^{t,n}_\delta$ as well as in discussing the structure of constacyclic codes of length $np^s$ in more detail.
    \begin{proposition}\label{delta power n iff delta_0 power n}
    Let $\delta=\delta_0+u\delta_1+\dots+u^{t-1}\delta_{t-1}$, where $\delta_0,\delta_1,\dots,\delta_{t-1}\in\mathbb{F}_{p^m}$ be any element of $R^t$ and let $n$ be a natural number such that $p\nmid n$. Then $\delta$ is an $n^\textnormal{th}$ power in $R^t$ if and only if $\delta_0$ is an $n^\textnormal{th}$ power in $\mathbb{F}_{p^m}$.   
\end{proposition}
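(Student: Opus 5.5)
The plan is to prove the two directions separately; the forward direction is immediate and the reverse one is a Hensel-type lifting along the powers of $u$.

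\emph{Forward direction.} Consider the reduction map $R^t\to\mathbb{F}_{p^m}$, $a\mapsto a \pmod u$. It is a surjective ring homomorphism. If $\delta=\beta^n$ with $\beta=\beta_0+u\beta_1+\dots+u^{t-1}\beta_{t-1}$, then reducing modulo $u$ gives $\delta_0=\beta_0^n$.

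\emph{Reverse direction.} Suppose $\delta_0=\beta_0^n$ with $\beta_0\in\mathbb{F}_{p^m}$.

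If $\delta_0=0$, the obstacle is real, and I expect it to be the main point needing care. For example, $\delta=u$ with $n\ge 2$ and $t\ge 2$ is not an $n$th power. Indeed, if $\beta^n=u$, reducing mod $u$ gives $\beta_0=0$, so $\beta\in\langle u\rangle$. Then $\beta^n\in\langle u^n\rangle\subseteq\langle u^2\rangle$, while $u\notin\langle u^2\rangle$. So in the write-up I will assume $\delta$ is a unit, i.e.\ $\delta_0\neq 0$. This is the setting of the paper: $\delta$ is always a unit of $R^t$, and in that case $\beta_0\neq0$.

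\emph{Main construction.} Since $\delta_0^{-1}\delta$ is congruent to $1$ modulo $u$, I write
\[
\delta=\delta_0(1+uy)\qquad\text{for some } y\in R^t .
\]
It suffices to show that $1+uy$ is an $n$th power. I will use the truncated binomial series
\[
(1+uy)^{1/n}:=\sum_{k=0}^{t-1}\binom{1/n}{k}u^ky^k .
\]
The coefficients $\binom{1/n}{k}$ are rational numbers whose denominators divide a power of $n$, because $\binom{1/n}{k}n^{2k}\in\mathbb{Z}$. Since $p\nmid n$, these coefficients make sense in $\mathbb{F}_p$.

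The identity $\bigl(\sum_k \binom{1/n}{k}z^k\bigr)^n=1+z$ holds in $\mathbb{Z}[1/n][[z]]$. Reduce it modulo $p$ and modulo $z^t$, then substitute $z=uy$, using $u^t=0$. This gives $\gamma\in R^t$ with $\gamma^n=1+uy$. Consequently $\delta=(\beta_0\gamma)^n$.

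\emph{Alternative via induction.} If one prefers to avoid power series, lift step by step. Suppose $\beta^n\equiv\delta\pmod{u^k}$. Set $\beta'=\beta+u^kc$ with $c\in\mathbb{F}_{p^m}$. Then
\[
\beta'^n\equiv\beta^n+n\beta_0^{n-1}u^kc\pmod{u^{k+1}}.
\]
Since $n\beta_0^{n-1}$ is invertible in $\mathbb{F}_{p^m}$ (using $p\nmid n$ and $\beta_0\neq0$), one can choose $c$ to match the coefficient of $u^k$. Induction on $k$ from $1$ to $t$ finishes the proof. This inductive step is where the hypotheses $p\nmid n$ and $\delta_0\neq0$ are used essentially.
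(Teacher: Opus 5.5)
Your argument is correct, and its main line differs from the paper's. The paper inducts on $t$ through the reduction $R^t\to R^{t-1}$. From $\beta$ with $\beta^n\equiv\delta \pmod{u^{t-1}}$ it passes to $\beta+u^{t-1}c$ and solves a linear equation for $c\in\mathbb{F}_{p^m}$ to fix the coefficient of $u^{t-1}$. Your ``alternative via induction'' is exactly this Hensel-type argument, indexed by the precision $k$ rather than by $t$.

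Your primary argument instead normalizes to $1+uy$ and writes the root in closed form, $\gamma=\sum_{k<t}\binom{1/n}{k}(uy)^k$. This gives an explicit universal formula with coefficients in $\mathbb{F}_p$ and needs no induction. The price is the integrality fact $\binom{1/n}{k}\in\mathbb{Z}[1/n]$, which you assert without proof. The fact is true: for a prime $q\mid n$ one gets $v_q\bigl(n^{2k}\binom{1/n}{k}\bigr)=k\,v_q(n)-v_q(k!)\ge 0$, and for $q\nmid n$ the integer-valued polynomial $\binom{x}{k}$ maps $\mathbb{Z}_q$ into itself. A quicker justification: if $(1+a_1z+a_2z^2+\cdots)^n=1+z$, then the coefficient of $z^k$ on the left is $na_k$ plus an integer polynomial in $a_1,\dots,a_{k-1}$. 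So the $a_k$ are forced recursively in $\mathbb{Z}[1/n]$. That is the Hensel step again, so the two routes are closer than they look.

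Two further points are in your favour. First, your remark that the statement fails for non-units is right ($\delta=u$ with $t,n\ge 2$). The paper's proof silently assumes $\tilde{\delta}_0\in\mathbb{F}_{p^m}^*$, so the proposition should be read for units, which is the only case the paper uses. Second, your lifting step uses the correct derivative factor $n\beta_0^{n-1}$. The paper's choice $c=(n\beta_0)^{-1}(\delta_{t-1}-c_0)$ is right only for $n=2$. Also, the vanishing of the remaining terms of $(\beta+u^{t-1}c)^n$ comes from $2(t-1)\ge t$, not from $n(t-1)\ge t$.
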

\begin{proof}If $\delta$ is an $n^\textnormal{th}$ power, then so is $\delta_0$ is clear. To prove the converse, let $\delta_0$ be an $n^\textnormal{th}$ power in $\mathbb{F}_{p^m}^*.$ Then there exists $\tilde{\delta_0}\in\mathbb{F}_{p^m}^*$ such that $\delta_0=\tilde{\delta_0}^n.$ We will use induction on $t$. For $t=1,$ the result holds trivially. Assume that the result holds for $t-1$. We will prove the result for $t$. Consider the map $\phi:R^t\rightarrow R^{t-1}$ such that $\phi(a)=a \,(\textnormal{mod}\, u^{t-1}).$ Note that $\phi(\delta)=\delta_0+u\delta_{1}+\dots +u^{t-2}\delta_{t-2}$ is an element of $R^{t-1}$ such that $\delta_0=\tilde{\delta_0}^n$ for some $\tilde{\delta_0}\in \mathbb{F}_{p^m}^*.$
   This implies that there exists $\beta=\beta_0+u\beta_1+\dots+u^{t-2}\beta_{t-2}$ such that $\phi(\delta)=\phi(\beta^n).$ Let $c_0$ be the coefficient of $u^{t-1}$ in $\beta^n$ and let $\tilde{\delta}=\beta + u^{t-1}c$ where $c=(n\beta_0)^{-1}(\delta_{t-1}-c_0) \in \mathbb{F}_{p^m}.$ To prove $\delta=\tilde{\delta}^n$, it is enough to see that the coefficient of $u^{t-1}$ in $\tilde{\delta}^n$ is  $\delta_{t-1}$ which follows since $n(t-1)\geq t$ for $t \geq n$. This proves the claim. \hfill{$\square$} 
\end{proof}
To prove our next result, we consider $f(x)=f_0+f_1x^{p^s}+f_2x^{2p^s}+\dots+f_{n}x^{np^s}\in R^t[x^{p^s}]$, where $f_i=f_{i,0}+uf_{i,1}+\dots+u^{t-1}f_{i,t-1}$ for $0\le i\le n$. Observe that for each $f_{i,j}$, where $0\le i\le n$ and $0\le j\le t-1$, there exist $\tilde{f}_{i,j}$ such that $\tilde{f}_{i,j}^{p^s}=f_{i,j}$.  With these notations, we give the following proposition.
\begin{proposition}\label{generaltheorem}
    If $\tilde{f}_{0,0}+\tilde{f}_{1,0}x+\dots+\tilde{f}_{n,0}x^n \in \mathbb{F}_{p^m}[x]$ is an irreducible polynomial of degree $n$, then every polynomial of degree less than  $n$ over $\mathbb{F}_{p^m}$ is invertible in the quotient ring $\frac{R^t[x]}{\langle f(x)\rangle}$.
\end{proposition}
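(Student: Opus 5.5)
Let $g(x)\in\mathbb{F}_{p^m}[x]$ be nonzero of degree less than $n$, and write $\tilde f(x)=\tilde{f}_{0,0}+\tilde{f}_{1,0}x+\dots+\tilde{f}_{n,0}x^n$. The plan is to reduce invertibility in $\frac{R^t[x]}{\langle f(x)\rangle}$ to a coprimality statement over $\mathbb{F}_{p^m}$, and then lift through the nilpotent element $u$.

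\emph{Reduction modulo $u$.} The ideal $\langle u\rangle$ of $S:=\frac{R^t[x]}{\langle f(x)\rangle}$ is nilpotent, since $u^t=0$. Hence an element of $S$ is a unit if and only if its image in $S/\langle u\rangle\cong\frac{\mathbb{F}_{p^m}[x]}{\langle \bar f(x)\rangle}$ is a unit. Indeed, if $ab=1+uc$, then $1+uc$ is a unit with inverse $\sum_{k=0}^{t-1}(-uc)^k$. Here
\[
\bar f(x)=f_{0,0}+f_{1,0}x^{p^s}+\dots+f_{n,0}x^{np^s}.
\]
So it suffices to show that $g$ is invertible modulo $\bar f$ in $\mathbb{F}_{p^m}[x]$.

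\emph{Frobenius identity.} Since $\tilde{f}_{i,0}^{p^s}=f_{i,0}$ and we are in characteristic $p$, the Frobenius map gives
\[
\bar f(x)=\Big(\sum_i \tilde{f}_{i,0}x^i\Big)^{p^s}=\tilde f(x)^{p^s}.
\]

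\emph{Coprimality.} By hypothesis $\tilde f$ is irreducible of degree $n$. Because $g\neq 0$ and $\deg g<n$, $\tilde f$ does not divide $g$, so $\gcd(g,\tilde f)=1$ and therefore $\gcd(g,\tilde f^{p^s})=1$. Bézout then gives $a,b\in\mathbb{F}_{p^m}[x]$ with $ag+b\tilde f^{p^s}=1$. Thus $g$ is a unit in $\frac{\mathbb{F}_{p^m}[x]}{\langle \bar f\rangle}$, and by the first step $g$ is a unit in $S$.

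\emph{Caveats and the main obstacle.} Two small points need care. First, "every polynomial" must exclude the zero polynomial. Second, the passage $S/uS\cong \mathbb{F}_{p^m}[x]/\langle \bar f\rangle$ should be checked: it holds because $R^t[x]/\langle u, f\rangle=\mathbb{F}_{p^m}[x]/\langle \bar f\rangle$. This identification is valid even if $f$ is not monic, although $\deg\bar f=np^s$ needs $f_{n,0}\neq0$, which is guaranteed since $\deg\tilde f=n$.

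The main obstacle is the lifting step: making sure the nilpotency argument is correct in $S$ without knowing that $S$ is a free $R^t$-module. The argument only needs $u^t=0$ in $S$, which holds automatically, so I expect it to go through.
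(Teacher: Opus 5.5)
Your proof is correct and is essentially the paper's argument. The paper likewise uses $f(x)-\tilde f(x)^{p^s}\in uR^t[x]$ (your Frobenius identity), Bézout with $\gcd(g,\tilde f^{p^s})=1$, and the nilpotency of $u$ to turn $ag\equiv 1 \pmod u$ into invertibility of $g$ in $\frac{R^t[x]}{\langle f(x)\rangle}$; your remark that $g$ must be nonzero is a correct and needed tightening of the statement as written.
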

\begin{proof}We first observe that  
\begin{align*}
f(x) - (\tilde{f}_{0,0}+\tilde{f}_{1,0}x+\dots+\tilde{f}_{n,0}x^{n})^{p^s} &=f(x)-(f_{0,0}+f_{1,0}x^{p^s}+\dots +f_{n,0}x^{np^s})\\&=u(\textnormal{some element of }R^t[x])
\end{align*}
Therefore, in  $\frac{R^t[x]}{\langle f(x)\rangle}$, 
$(\tilde{f}_{0,0}+\tilde{f}_{1,0}x+\dots+\tilde{f}_{n,0}x^{n})^{p^s}$ is a multiple of $u$. Let $g(x) \in \mathbb{F}_{p^m}[x]$ be a polynomial of degree less than $n$.
    Since $\tilde{f}_{0,0}+\tilde{f}_{1,0}x+\dots+\tilde{f}_{n,0}x^n$ is an irreducible polynomial of degree $n$, $\gcd(g(x),(\tilde{f}_{0,0}+\tilde{f}_{1,0}x+\dots+\tilde{f}_{n,0}x^n)^{p^s})=1$. Thus, there exists $a(x),b(x)\in \mathbb{F}_{p^m}[x]$ such that 
    $$a(x)g(x)+b(x)(\tilde{f}_{0,0}+\tilde{f}_{1,0}x+\dots+\tilde{f}_{n,0}x^n)^{p^s}=1.$$
Hence, $1- a(x)g(x)$ is a multiple of $u$ in $\frac{R^t[x]}{\langle f(x)\rangle}.$ Since $u$ is nilpotent, it follows that $a(x)g(x)$ and hence $g(x)$ is a unit.  \hfill{$\square$}
\end{proof}
 \begin{corollary}\label{deg n-1 polynomial in nps case}
 Let $p$ be a prime and $n$ be a natural number such that $p\nmid n$, and let $\alpha=\alpha_0+u\alpha_1+\dots+u^{t-1}\alpha_{t-1}\in R^t$ be such that $x^n-\alpha_{0,0}$ is irreducible over $\mathbb{F}_{p^m}$, where $\alpha_{0,0}^{p^s}=\alpha_0$. Then every non-zero polynomial $g(x)\in \mathbb{F}_{p^m}[x]$ having degree less than $n$ is invertible in  $\frac{R^t[x]}{\langle x^{np^s}-\alpha \rangle}$. In particular, any non-zero polynomial of the form $ax+b\in \mathbb{F}_{p^m}[x]$ is invertible in $\frac{R^t[x]}{\langle x^{np^s}-\alpha \rangle}$ whenever $n\geq 2$ and any non-zero polynomial of the form $ax^2+bx+c\in \mathbb{F}_{p^m}[x]$ is invertible  in $\frac{R^t[x]}{\langle x^{np^s}-\alpha \rangle}$ whenever $n\geq 3$.
 \end{corollary}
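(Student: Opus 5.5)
The corollary is a direct specialization of Proposition \ref{generaltheorem}, so the plan is to check that $f(x)=x^{np^s}-\alpha$ fits the setup of that proposition.

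\textbf{Matching the setup.} Write $f(x)=-\alpha+x^{np^s}\in R^t[x^{p^s}]$. In the notation preceding Proposition \ref{generaltheorem}, the coefficients are
$$f_0=-\alpha,\qquad f_1=\dots=f_{n-1}=0,\qquad f_n=1.$$
Their constant parts are $f_{0,0}=-\alpha_0$, $f_{i,0}=0$ for $0<i<n$, and $f_{n,0}=1$.

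\textbf{Choosing the $p^s$-th roots.} Since Frobenius is a bijection on $\mathbb{F}_{p^m}$, the $p^s$-th roots are unique. They are
$$\tilde f_{n,0}=1,\qquad \tilde f_{i,0}=0 \ (0<i<n),\qquad \tilde f_{0,0}=-\alpha_{0,0},$$
the last because $(-\alpha_{0,0})^{p^s}=-\alpha_0$. For $p=2$ the signs agree, and for odd $p$ we have $(-1)^{p^s}=-1$. Hence
$$\tilde f_{0,0}+\tilde f_{1,0}x+\dots+\tilde f_{n,0}x^n=x^n-\alpha_{0,0}.$$
This polynomial has degree $n$ and is irreducible by hypothesis.

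\textbf{Applying the proposition.} Proposition \ref{generaltheorem} now applies directly. Every nonzero $g(x)\in\mathbb{F}_{p^m}[x]$ of degree less than $n$ is a unit in $R^t[x]/\langle x^{np^s}-\alpha\rangle$. Nonzeroness is needed so that $\gcd\bigl(g,(x^n-\alpha_{0,0})^{p^s}\bigr)=1$: a nonzero $g$ of degree less than $n$ cannot be divisible by the irreducible $x^n-\alpha_{0,0}$, so the gcd is $1$.

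\textbf{The particular cases.} A nonzero $ax+b$ has degree at most $1$, which is less than $n$ when $n\ge 2$. A nonzero $ax^2+bx+c$ has degree at most $2$, which is less than $n$ when $n\ge 3$.

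The only real care point is the identification of $\tilde f_{0,0}$ with $-\alpha_{0,0}$, i.e. the sign check in characteristic $p$ done above. Otherwise the argument is a routine specialization.
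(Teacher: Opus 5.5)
Your proposal is correct and follows the paper's own route: the corollary is just Proposition \ref{generaltheorem} specialized to $f(x)=x^{np^s}-\alpha$, for which $\tilde f_{0,0}+\tilde f_{1,0}x+\dots+\tilde f_{n,0}x^n=x^n-\alpha_{0,0}$. Your sign check $(-\alpha_{0,0})^{p^s}=-\alpha_0$ and your remark that $g\neq 0$ is needed for the gcd to be $1$ supply details the paper leaves implicit.
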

We remark that Proposition 6 of \cite{Sriwirach} follows as a special case of Corollary \ref{deg n-1 polynomial in nps case} by taking $n=2$ and $t=3$. Similarly, taking $n=3$ and $t=2$, Proposition 34 of \cite{dinh2020constacyclic} is obtained as a particular case of Corollary \ref{deg n-1 polynomial in nps case}.
\begin{corollary}\label{ax+b invertible in 2ps case}
   Let $\alpha=\alpha_0+u\alpha_1+\dots+u^{t-1}\alpha_{t-1}$ and $\beta=\beta_0+u\beta_1+\dots+u^{t-1}\beta_{t-1}\in R^t$ be such that $x^2+\alpha_{0,0}x+\beta_{0,0}$ is irreducible over $\mathbb{F}_{p^m}$, where $\alpha_{0,0}^{p^s}=\alpha_0$ and $\beta_{0,0}^{p^s}=\beta_0$. Then any polynomial of the form $ax+b\in \mathbb{F}_{p^m}[x]$, where $a$ and $b$ are not simultaneously zero, is invertible in  $\frac{R^t[x]}{\langle x^{2p^s}+\alpha x^{p^s}+\beta\rangle}$.
\end{corollary}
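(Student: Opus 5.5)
This follows by specializing Proposition \ref{generaltheorem} with $n=2$. Take
$$f(x)=\beta+\alpha x^{p^s}+x^{2p^s}\in R^t[x^{p^s}],$$
so that, in the notation preceding that proposition, $f_0=\beta$, $f_1=\alpha$ and $f_2=1$. The degree-zero parts are $f_{0,0}=\beta_0$, $f_{1,0}=\alpha_0$ and $f_{2,0}=1$. The natural choice of $p^s$-th roots is therefore $\tilde f_{0,0}=\beta_{0,0}$, $\tilde f_{1,0}=\alpha_{0,0}$ and $\tilde f_{2,0}=1$.

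With these roots, the polynomial $\tilde f_{0,0}+\tilde f_{1,0}x+\tilde f_{2,0}x^2$ is exactly $x^2+\alpha_{0,0}x+\beta_{0,0}$. By hypothesis it is irreducible of degree $2$ over $\mathbb{F}_{p^m}$.

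One small point needs checking: that these are the roots Proposition \ref{generaltheorem} refers to. The map $z\mapsto z^{p^s}$ is the Frobenius automorphism iterated, hence injective on $\mathbb{F}_{p^m}$. So each $p^s$-th root is unique, and no ambiguity arises.

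Proposition \ref{generaltheorem} then says that every polynomial over $\mathbb{F}_{p^m}$ of degree less than $2$ is invertible in $\frac{R^t[x]}{\langle f(x)\rangle}=\frac{R^t[x]}{\langle x^{2p^s}+\alpha x^{p^s}+\beta\rangle}$. The polynomials of degree less than $2$ are exactly those of the form $ax+b$.

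Zero is not a unit, so the statement implicitly requires $g\neq 0$, which is guaranteed because $a$ and $b$ are not both zero. The proof of Proposition \ref{generaltheorem} uses nonzeroness in exactly one place. There, $\gcd\bigl(g(x),(x^2+\alpha_{0,0}x+\beta_{0,0})^{p^s}\bigr)=1$ holds because a nonzero polynomial of degree at most $1$ cannot be divisible by an irreducible polynomial of degree $2$.

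The argument is routine, and I expect no genuine obstacle. The only care needed is matching indices correctly: the constant term of $f$ is $\beta$ and the middle coefficient is $\alpha$.
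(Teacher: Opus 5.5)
Your proposal is correct and takes the same approach as the paper: the corollary is stated there as a direct specialization of Proposition \ref{generaltheorem} to $n=2$ with $f(x)=x^{2p^s}+\alpha x^{p^s}+\beta$, so that $\tilde f_{0,0}+\tilde f_{1,0}x+\tilde f_{2,0}x^2=x^2+\alpha_{0,0}x+\beta_{0,0}$. Your remarks on the uniqueness of $p^s$-th roots in $\mathbb{F}_{p^m}$ and on where the nonzeroness of $ax+b$ enters the gcd step are accurate, and they fill in small points the paper leaves implicit.
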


We remark that if we take $\beta = \alpha^2$ in Corollary \ref{ax+b invertible in 2ps case}, we get Lemma 15 of \cite{dinh2020constacyclic} as a particular case.
As a consequence of Proposition \ref{generaltheorem}, we can give the representation of any element of $\frac{R^t[x]}{\langle f(x) \rangle}$ and also the condition for any element to be a unit. We state this in the following remark.
  \begin{remark}\label{generalrepresentation}
   Using the notation and conditions of Proposition \ref{generaltheorem}, we have the following.
    \begin{itemize}
        \item[(i)] An arbitrary element $c(x)$ of $\frac{R^t[x]}{\langle f(x) \rangle}$ can be uniquely written as
        \begin{align*}
            c(x)=\underset{j=0}{\overset{p^s-1}{\sum}}&\big(\underset{i=0}{\overset{n-1}{\sum}}{c_{i,j}^{0}x^i}\big)(\tilde{f}_{0,0}+\tilde{f}_{1,0}x+\dots+\tilde{f}_{n,0}x^n)^j+u\underset{j=0}{\overset{p^s-1}{\sum}}\big(\underset{i=0}{\overset{n-1}{\sum}}{c_{i,j}^{1}x^i}\big)(\tilde{f}_{0,0}+\tilde{f}_{1,0}x+\dots\\&+\tilde{f}_{n,0}x^n)^j+\dots+u^{t-1}\underset{j=0}{\overset{p^s-1}{\sum}}\big(\underset{i=0}{\overset{n-1}{\sum}}{c_{i,j}^{t-1}x^i}\big)(\tilde{f}_{0,0}+\tilde{f}_{1,0}x+\dots+\tilde{f}_{n,0}x^n)^j,
        \end{align*}
        where $c_{i,j}^k\in \mathbb{F}_{p^m}$ for $0\leq j\leq p^s-1$, $0\leq i\leq n-1$, and $0\leq k\leq t-1.$
        \item[(ii)] An element $c(x),$ as given above, is a non-unit if and only if $c_{i,0}^0=0$ for all $0\le i\le n-1$.
    \end{itemize}
\end{remark}
Now, let $k$ be the smallest integer such that $1\leq k\leq t-1$, and $\delta_k \neq0$. Then $(x^n-\delta_{0,0})^{p^s}=x^{np^s}-\delta_{0,0}^{p^s}=\delta -\delta_0=u^k\delta_k+u^{k+1}\delta_{k+1}+\dots+u^{t-1}\delta_{t-1}=u^k(\delta_k+u\delta_{k+1}+\dots+u^{t-1-k}\delta_{t-1}).$ Since $\delta_k\neq 0$, we have $\langle (x^n-\delta_{0,0})^{p^s} \rangle=\langle u^k\rangle$. Moreover, $x^n-\delta_{0,0}$ is nilpotent in $R^{t,n}_{\delta}$ with nilpotency index $\lceil \frac{t}{k}\rceil p^s$. Also, if $x^n-\delta_{0,0}$ is irreducible over $\mathbb{F}_{p^m}$, then, as in  Remark \ref{generalrepresentation}, an arbitrary element $c(x)$ of $R^{t,n}_{\delta}$ can be uniquely written as
        \begin{align*}
            c(x)=\underset{j=0}{\overset{p^s-1}{\sum}}\big(\underset{i=0}{\overset{n-1}{\sum}}{c_{i,j}^{0}x^i}\big)(x^n-\delta_{0,0})^j+u\underset{j=0}{\overset{p^s-1}{\sum}}&\big(\underset{i=0}{\overset{n-1}{\sum}}{c_{i,j}^{1}x^i}\big)(x^n-\delta_{0,0})^j+\dots+\\&u^{t-1}\underset{j=0}{\overset{p^s-1}{\sum}}\big(\underset{i=0}{\overset{n-1}{\sum}}{c_{i,j}^{t-1}x^i}\big)(x^n-\delta_{0,0})^j,
        \end{align*}
        where $c_{i,j}^k\in \mathbb{F}_{p^m}$ for $0\leq j\leq p^s-1,\,0\leq i\leq n-1,$ and $0\leq k\leq t-1$. Further, $c(x)$ is a non-unit if and only if $c_{i,0}^0=0$ for all $0\leq i\leq n-1$. \\
        Using this representation of an element of $R^{t,n}_\delta$, we can write the generator $u^{(t-1)-i}(x^n-\delta_{0,0})^{a_i}-u^{(t-1)-(i-1)}g(x)$ as 
$$u^{(t-1)-i}(x^n-\delta_{0,0})^{a_{i}}+u^{(t-1)-(i-1)}(x^n-\delta_{0,0})^{t_{i-1,0}}g_{i-1,0}(x)+\dots + u^{(t-1)}(x^n-\delta_{0,0})^{t_{i-1,i-1}}g_{i-1,i-1}(x),$$
where each $g_{i-1,j}(x)$ is either $0$ or a unit in $\frac{\mathbb{F}_{p^m}[x]}{\langle x^{np^s}-\delta_0\rangle}$  for $0\leq j\leq i-1$ and $p^s-1\geq a_i>t_{i-1,0}>t_{i-1,1}>\dots>t_{i-1,i-1}\geq 0$. Thus we give the ideals of $R^{t,n}_{\delta}$ in the special case when $x^n-\delta_{0,0}$ is an irreducible polynomial over $\mathbb{F}_{p^m}$ in the following theorem.
\begin{theorem}\label{nps ideals}
    Let $\delta=\delta_0+u\delta_1+\dots+u^{t-1}\delta_{t-1}\in R^t$ be a unit and let $\delta_{0, 0}^{p^s}=\delta_0$. If $x^n-\delta_{0,0}$ is an irreducible polynomial over $\mathbb{F}_{p^m}$ then the ideals of the ring $R^{t,n}_{\delta}$ have one of the following forms.
    \begin{itemize}
        \item [(i)] Trivial ideals $\langle 0\rangle,$ $\langle 1 \rangle.$
        \item [(ii)] Any generator of a non-trivial ideal contained in $\langle u\rangle $ has the form
        \begin{align*}
        \theta_i(u, x^n-\delta_{0,0})&=u^{(t-1)-i}(x^n-\delta_{0,0})^{a_{i}}+u^{(t-1)-(i-1)}(x^n-\delta_{0,0})^{t_{i-1,0}}g_{i-1,0}(x)+\dots +\\&\,\,\,\,\,\, u^{(t-1)}(x^n-\delta_{0,0})^{t_{i-1,i-1}}g_{i-1,i-1}(x),
        \end{align*}
        for some $0\leq i\leq t-2$ and each $g_{i-1,j}(x)$ is either 0 or a unit in $\frac{\mathbb{F}_{p^m}[x]}{\langle x^{np^s}-\delta_0\rangle}$ for $0\leq j\leq i-1$ and $p^s-1\geq a_i>t_{i-1,0}>t_{i-1,1}>\dots>t_{i-1,i-1}\geq 0$.\\
        In fact, any non-trivial ideal $I$ contained in $\langle u \rangle$ has the form:
        $$\langle \theta_{i_1}(u, x^n-\delta_{0,0}),\dots, \theta_{i_d}(u, x^n-\delta_{0,0})\rangle,$$ 
         where $0\leq i_1<i_2<\dots<i_d\leq t-2$, $0\leq a_{i_1}<a_{i_2}<\dots<a_{i_d}\leq p^s-1$, $0\leq t_{(i_1-1),(i_1-1)}<t_{(i_1-1),(i_1-2)}<\dots<t_{(i_1-1),0}<a_{i_1},\, 0\leq t_{(i_2-1),(i_2-1)}<t_{(i_2-1),(i_2-2)}<\dots<t_{(i_2-1),0}<a_{i_2},\, \dots,\,$ and $0\leq t_{(i_d-1),(i_d-1)}<t_{(i_d-1),(i_d-2)}<\dots<t_{(i_d-1),0}<a_{i_d}.$  
        \item[(iii)] Any non-trivial ideal not contained in $\langle u \rangle$ has the form:
        $$\langle (x^n-\delta_{0,0})^a+ uf(x)\rangle+I,$$ where $f(x)\in R^{t,n}_{\delta}$, $I$ is an ideal as described in (ii), and $0\leq a_{i_d}<a\leq p^s-1$.
        \end{itemize}
\end{theorem}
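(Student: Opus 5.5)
We specialize Theorem \ref{Maintheorem} to the case $l=1$, $f_1(x)=x^n-\delta_{0,0}$, $n_1=1$, and then normalize the generators using the unique representation of elements of $R^{t,n}_{\delta}$ from Remark \ref{generalrepresentation}. Since $x^n-\delta_{0,0}$ is irreducible, every ideal of $\frac{\mathbb{F}_{p^m}[x]}{\langle (x^n-\delta_{0,0})^{p^s}\rangle}$ is $\langle (x^n-\delta_{0,0})^k\rangle$ with $0\le k\le p^s$. Hence the generators in parts (ii) and (iii) of Theorem \ref{Maintheorem} take the shape $u^{(t-1)-i}(x^n-\delta_{0,0})^{a_i}-u^{(t-1)-(i-1)}g(x)$ and $(x^n-\delta_{0,0})^a+ur(x)$. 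The exponent $k=p^s$ is excluded because $(x^n-\delta_{0,0})^{p^s}\in\langle u^k\rangle\subseteq\langle u\rangle$ can be absorbed into the $u$-part.

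\textbf{Normalizing $g(x)$.} Expand $g(x)$ $u$-adically as $\sum_k u^k\sum_j(\sum_i c^k_{i,j}x^i)(x^n-\delta_{0,0})^j$. Only the $u$-levels $(t-1)-(i-1),\dots,t-1$ survive after multiplying by $u^{(t-1)-(i-1)}$. At each level, group the terms by $j$. Factor out the lowest power $(x^n-\delta_{0,0})^{t_{i-1,j}}$ that occurs at that level; what remains is a polynomial whose constant part is a nonzero polynomial of degree $<n$. By Corollary \ref{deg n-1 polynomial in nps case} and Remark \ref{generalrepresentation}(ii), this polynomial is a unit. This gives the displayed form of $\theta_i$.

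\textbf{The inequalities.} To get $a_i>t_{i-1,0}>\cdots$, we reduce modulo the generator itself. If a term $u^{(t-1)-(i-1)+j}(x^n-\delta_{0,0})^{t}$ has $t\ge a_i$ (or has $t$ at least the previous exponent), subtract a suitable multiple $u^{j+1}(x^n-\delta_{0,0})^{t-a_i}\cdot(\text{unit})\cdot\theta_i$. This removes the term and only changes terms at higher $u$-powers, where the same procedure is repeated. The process terminates because $u^t=0$.

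\textbf{Several generators.} Here we need $a_{i_1}<\dots<a_{i_d}$ and, in (iii), $a>a_{i_d}$. We use the chain $I\cap\langle u^{(t-1)-i}\rangle$ coming from Proposition \ref{FormofIdeals}. Passing to $(I:u^{(t-1)-i})$ modulo $u$ gives an increasing chain of ideals of $\mathbb{F}_{p^m}[x]/\langle(x^n-\delta_{0,0})^{p^s}\rangle$, because multiplying by $u$ moves deeper levels up. So the exponents decrease as the $u$-power decreases, i.e. they increase with the index $i$. If the inequality were not strict, the generator would be redundant and could be dropped. The same argument at the unit level gives $a>a_{i_d}$.

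\textbf{Main obstacle.} The hardest step is the reduction for the inequalities. One must check that the cancellation at one $u$-level does not spoil the ordering already achieved at the lower levels. It must also keep the coefficients $g_{i-1,j}$ as units rather than arbitrary elements. We would handle this by induction on the $u$-level, processing the levels from the lowest $u$-power upward.
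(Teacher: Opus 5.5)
Your overall route is the paper's: specialize Theorem~\ref{Maintheorem} to $l=1$, $f_1=x^n-\delta_{0,0}$, and rewrite each generator level by level using the unique representation and Corollary~\ref{deg n-1 polynomial in nps case}. The paper does no more than this; it simply asserts the inequalities in the paragraph before the theorem.

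Your first reduction is sound. Suppose a correction term at level $(t-1)-i+1+j$ has exponent $\ge a_i$. Multiplying $\theta_i$ by the unit $1-u^{j+1}(x^n-\delta_{0,0})^{e}w$ kills it and disturbs only deeper levels, so one can always reach $t_{i-1,j}<a_i$.

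The gap is the parenthetical case ``$t$ at least the previous exponent''. There your displayed multiple has negative exponent $t-a_i$. To cancel a term using the previous (non-leading) term you must multiply by $1-u(x^n-\delta_{0,0})^{e}w$. This changes every level from $(t-1)-i+1$ onward, including the shallower ones, so ``only changes terms at higher $u$-powers'' is false. In particular it can create a term of exponent $\ge a_i$ at a level where $g_{i-1,j}=0$, and removing that with the leading term undoes the step. The induction you propose cannot close this, because the strict chain $t_{i-1,0}>t_{i-1,1}>\dots>t_{i-1,i-1}$ is not attainable in general.

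Here is a counterexample. Take $t=5$, $n=1$, $\delta=1$, $p^s\ge 6$, and $y=x-1$ (so $y^{p^s}=0$). Let $I=\langle\theta\rangle$ with $\theta=uy^3+u^3+u^4y$. For $c=\sum_j u^jc_j$ one has
\[\theta c=u\,y^3c_0+u^2\,y^3c_1+u^3\,(y^3c_2+c_0)+u^4\,(y^3c_3+c_1+yc_0),\]
whence $\Tor_\ell(I)=\langle y^3\rangle$ for $\ell=1,2,3,4$.

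Consequently, a presentation as in (ii) with $a_{i_1}<\dots<a_{i_d}$ must consist of a single generator $\theta_3=uy^3+\cdots$. An extra generator at a level $\ell\ge 2$ would need exponent $<3$, which $\Tor_\ell(I)$ forbids. Since $\theta_3\in I$, we have $\theta_3=\theta c$ for some $c$.

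Now compare coefficients:
\begin{itemize}
\item The $u$-coefficient forces $c_0\equiv 1\pmod{y^{p^s-3}}$.
\item The condition $t_{2,0}<a_3=3$ forces $y^3c_1=0$, i.e.\ $c_1\in\langle y^{p^s-3}\rangle$.
\item The $u^3$- and $u^4$-coefficients then have $y$-adic valuations $0$ and $1$.
\end{itemize}
So every admissible generator has $t_{2,1}=0<1=t_{2,2}$, violating the claimed chain.

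What your argument does establish is the weaker normalization $t_{i-1,j}<a_i$ for all nonzero correction terms. The strict chain among the $t_{i-1,j}$ cannot be proved as stated.

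A minor slip in your several-generators paragraph: the exponents increase, not decrease, as the $u$-power decreases. Also, a non-strict step gives a generator that can be replaced by a deeper one, rather than simply dropped.
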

As a special case, we have the following corollary.
\begin{corollary}\label{chainfornpscase} Let $\delta=\delta_0+u\delta_1+\dots+u^{t-1}\delta_{t-1} \in R^t$ be such that $x^n-\delta_{0,0}$ is an irreducible polynomial over $\mathbb{F}_{p^m}$. Then, the ideals of the ring $R^{t,n}_{\delta}$ form a chain if and only if $\delta_1\neq 0$. In fact, the ideals  of $R^{t,n}_{\delta}$ are
$$\langle 0\rangle\subset \langle (x^n-\delta_{0,0})^{tp^s-1}\rangle \subset \langle (x^n-\delta_{0,0})^{tp^s-2}\rangle \subset \dots \subset \langle x^n-\delta_{0,0}\rangle \subset \langle 1\rangle.$$ 
Moreover, the number of codewords in $\langle (x^n-\delta_{0,0})^i\rangle$ is $p^{nm(tp^s-i)}$ for $0\leq i\leq tp^s.$
\end{corollary}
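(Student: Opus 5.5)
Write $w=x^n-\delta_{0,0}$. The plan is to show that when $\delta_1\neq0$ the ring $R^{t,n}_{\delta}$ is a finite chain ring whose maximal ideal is generated by $w$. Then I will show that when $\delta_1=0$ there are two incomparable ideals.

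\emph{The case $\delta_1\neq 0$.} Take $k=1$ in the discussion preceding Theorem \ref{nps ideals}. This gives
\[
w^{p^s}=u(\delta_1+u\delta_2+\dots+u^{t-2}\delta_{t-1}),
\]
and the second factor is a unit of $R^t$. Hence $u\in\langle w\rangle$, and the nilpotency index of $w$ is $\lceil t/1\rceil p^s=tp^s$.

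Next, let $c(x)$ be a non-unit, written in the unique representation. By the unit criterion, every coefficient $c^0_{i,0}$ vanishes. So every term of $c(x)$ is divisible either by $w$ or by $u$, and $u\in\langle w\rangle$, so $c(x)\in\langle w\rangle$. Thus $\langle w\rangle$ is the unique maximal ideal, and the ring is local.

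Now take any nonzero $c(x)$ and let $j$ be the largest integer with $c\in\langle w^j\rangle$; this $j$ is finite because $c\neq 0$ and $w^{tp^s}=0$. Write $c=w^j e$. If $e$ were a non-unit, then $e\in\langle w\rangle$ by the previous step, so $c\in\langle w^{j+1}\rangle$, contradicting the choice of $j$. Hence $e$ is a unit and $\langle c\rangle=\langle w^j\rangle$.

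For an arbitrary ideal $I$, choose an element of $I$ of minimal such $j$. Every element of $I$ then lies in $\langle w^j\rangle$, so $I=\langle w^j\rangle$. This gives the displayed chain. The inclusions are strict because $w$ has exact nilpotency index $tp^s$.

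For the cardinalities, the ring has $p^{mntp^s}$ elements and each quotient $\langle w^i\rangle/\langle w^{i+1}\rangle$ is a vector space over the residue field. The residue field is $R^{t,n}_\delta/\langle w\rangle\cong\mathbb{F}_{p^m}[x]/\langle w\rangle=\mathbb{F}_{p^{mn}}$, using irreducibility of $w$. In a finite chain ring each such quotient is one-dimensional, so each has size $p^{mn}$. Therefore $|\langle w^i\rangle|=p^{nm(tp^s-i)}$.

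\emph{The case $\delta_1=0$.} Here $k\ge2$, or $\delta=\delta_0$, and I must show the ideals do not form a chain. I will use the ideals $\langle u\rangle$ and $\langle w\rangle$.
\begin{itemize}
\item \emph{$w\notin\langle u\rangle$.} The image of $w$ under $\mu$ is nonzero, since its degree $n$ is less than $np^s$ when $s\ge1$. The case $s=0$ is handled the same way using the reduction modulo $u$ together with the sub-case below.
\item \emph{$u\notin\langle w\rangle$.} Suppose $u=w\,h(x)$. Reducing modulo $u^2$ gives an identity in $R^{2,n}_{\delta'}$ with $\delta'=\delta_0$, because $\delta_1=0$. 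In $R^{2,n}_{\delta_0}$ the ideal $\langle w\rangle$ is spanned, via the unique representation, by the terms $w^j$ with $j\ge1$ together with the $u$-multiples of those terms. The element $u$ has a nonzero coefficient $c^1_{0,0}$, so it lies outside $\langle w\rangle$, giving a contradiction.
\end{itemize}
Concretely, $R^{2,n}_{\delta_0}\cong \mathbb{F}_{p^m}[x]/\langle w^{p^s}\rangle\otimes R^2$. Comparing $u$-components, the degree-zero part of $wh$ is $wh_0$, which must vanish, and the $u$-part is $wh_1$, which would have to equal $1$. This is impossible because $w$ is nilpotent.

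The main obstacle is making the $\delta_1=0$ direction rigorous. That is why I reduce modulo $u^2$ and split by $u$-components: once $\delta_1=0$, the ring $R^{2,n}_{\delta_0}$ is a genuine tensor product, and the contradiction follows cleanly.
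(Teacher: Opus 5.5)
Your proof is correct in the intended setting $s\ge 1$, $t\ge 2$. For the chain direction it takes a different route from the paper.

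\textbf{The case $\delta_1\neq 0$.} The paper just cites Theorem \ref{nps ideals}. It leaves implicit how $u\in\langle w^{p^s}\rangle$ collapses all generator types to powers of $w$, and it gives no argument for the count $p^{nm(tp^s-i)}$. You instead prove directly that $R^{t,n}_{\delta}$ is local with principal nilpotent maximal ideal $\langle w\rangle$, i.e.\ a finite chain ring. You then read off both the list of ideals and the cardinalities from the filtration whose layers $\langle w^i\rangle/\langle w^{i+1}\rangle$ are one-dimensional over the residue field $\mathbb{F}_{p^{mn}}$. This is self-contained: it needs only the unique representation and the unit criterion, not the full classification of Theorems \ref{Maintheorem} and \ref{nps ideals}. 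It also supplies the counting argument the paper omits.

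\textbf{The case $\delta_1=0$.} You use the same incomparable pair $\langle u\rangle$, $\langle w\rangle$ as the paper, which phrases it as non-principality of $\langle u,w\rangle$. Your proof of $w\notin\langle u\rangle$ via $\mu$ is the paper's. For $u\notin\langle w\rangle$, the paper lifts the relation to $R^t[x]$ and evaluates at a root $\gamma$ of $w$, obtaining $u\in\langle u^2\rangle$. You instead reduce to $R^{2,n}_{\delta_0}\cong A[u]/\langle u^2\rangle$ with $A=\mathbb{F}_{p^m}[x]/\langle w^{p^s}\rangle$, and compare $u$-components to force $wh_1=1$ with $w$ nilpotent in $A$. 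Both are short; yours avoids passing to an extension field.

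\textbf{One correction.} Delete the sentence saying that $s=0$ ``is handled the same way.'' For $s=0$ and $\delta_1=0$ one has $w=x^n-\delta_0=\delta-\delta_0\in\langle u^2\rangle$, so $w\in\langle u\rangle$. In fact the ring is then local with principal maximal ideal $\langle u\rangle$, since modulo $u$ it is the field $\mathbb{F}_{p^m}[x]/\langle x^n-\delta_0\rangle$. Hence it is a chain ring for every such $\delta$, and the ``only if'' direction is false when $s=0$. The statement, and the paper's own mod-$u$ step (which needs $p^s>1$), tacitly assume $s\ge 1$. You should state that hypothesis explicitly rather than claim the case can be covered.
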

\begin{proof}
    If $\delta_1\ne 0$, then from Theorem \ref{nps ideals}, ideals form a chain. Conversely, assume that $\delta_1=0.$ We will show that $\langle u,x^n-\delta_{0,0}\rangle$ is a non-principal ideal. Let, if possible, $u\in \langle x^n-\delta_{0,0}\rangle,$ then there exist $m_1(x),m_2(x)\in R^t[x]$ such that $u=(x^n-\delta_{0,0})m_1(x)+(x^{np^s}-\delta)m_2(x)$. Let $\gamma$ be a root of $x^n-\delta_{0,0}$ in some extension of $\mathbb{F}_{p^m}$. Then, substituting $\gamma$ in the above equation, we have $u=(\delta_0-\delta)m_2(\gamma)\implies u\in \langle u^2\rangle,$ a contradiction. Now, if possible, assume that $x^n-\delta_{0,0}\in\langle u\rangle.$ Then, there exist $n_1(x),n_2(x)\in R^t[x]$ such that $x^n-\delta_{0,0}=un_1(x)+(x^{np^s}-\delta)n_2(x).$ Taking modulo $u$ both sides we get $x^n-\delta_{0,0}=(x^{n}-\delta_{0,0})^{p^s}(n_2(x) \textnormal{ mod }u)$, which leads to a contradiction.\hfill{$\square$}
\end{proof}
We remark that Proposition 4.2 in \cite{dinh2016repeated} and Theorem 3.5 in \cite{consta2psover} are special cases of Corollary \ref{chainfornpscase}. Also, Corollary \ref{chainfornpscase} in the case $n=1$ was earlier proved for a finite chain ring in \cite{sharma2018structure}.

\section{Ideals of \texorpdfstring{$R^{t,1}_{\delta}$}{}} \label{section3}
In this section, we use the structure of ideals of $R^{t,n}_{\delta}:=\frac{R^t [x]}{\langle x^{np^s}-\delta\rangle},$ where $\delta = \delta_0+u\delta_1+\dots+u^{t-1}\delta_{t-1}$ is a unit in $R^t:=\frac{\mathbb{F}_{p^m}[u]}{\langle u^t \rangle}$ and $\delta_i\in \mathbb{F}_{p^m}$ for $0\leq i \leq t-1$ given in Section \ref{section2} to discuss the case when $n=1$, that is, to discuss the ideals of $R^{t,1}_{\delta}:=\frac{R^t [x]}{\langle x^{p^s}-\delta\rangle}$. We first note that, in this case, $x^n-\delta_{0,0}=x-\delta_{0,0}$, where $\delta_{0,0}^{p^s}=\delta_0$, is irreducible. Hence, taking $n=1$ in Theorem \ref{nps ideals}, we get the generators and the structure of ideals of $R^{t,1}_{\delta}$, in this case.\\
For clarity and better understanding, we remark that, as in Section \ref{section2}, if $k$ is the smallest integer such that $\delta_k \neq0$, where $1\leq k\leq t-1$, then $\langle (x-\delta_{0,0})^{p^s} \rangle=\langle u^k\rangle$, $x-\delta_{0,0}$ is nilpotent in $R^{t,1}_{\delta}$ with nilpotency index $\lceil \frac{t}{k}\rceil p^s$, and an arbitrary element $c(x)$ of $R^{t,1}_{\delta}$ can be uniquely written as
        \begin{align*}
            c&(x)=\underset{j=0}{\overset{p^s-1}{\sum}}c_j^{0}(x-\delta_{0,0})^j+u\underset{j=0}{\overset{p^s-1}{\sum}}c_j^{1}(x-\delta_{0,0})^j+\dots+u^{t-1}\underset{j=0}{\overset{p^s-1}{\sum}}c_{j}^{t-1}(x-\delta_{0,0})^j,
        \end{align*}
        where $c_{j}^k\in \mathbb{F}_{p^m}$ for $0\leq j\leq p^s-1$ and $0\leq k\leq t-1$. Moreover, $c(x)$ is a non-unit if and only if $c_{0}^0=0$. Also, as in  Corollary \ref{chainfornpscase}, the ideals of $R^{t,1}_{\delta}$ form a chain if and only if $\delta_1\ne 0$. In fact, the ideals  of $R^{t,1}_{\delta}$, in this case, are
    $$\langle 0\rangle \subset \langle (x-\delta_{0,0})^{tp^s-1}\rangle\subset \langle (x-\delta_{0,0})^{tp^s-2}\rangle\subset \dots \subset \langle (x-\delta_{0,0})\rangle\subset \langle 1\rangle.$$
    Moreover, for $0 \leq i \leq tp^s$, the cardinality of $\langle (x-\delta_{0,0})^{i}\rangle=p^{m(tp^s-i)}.$
Now, let $t=3$ and let $\delta=\delta_0+u\delta_1+u^2\delta_2 \in R^3$ be a unit. We will discuss the ideals $R^{3,1}_{\delta}$ in this case and also give the cardinality of these ideals. If $\delta_1\neq 0$ then, as described above and also  by Corollary \ref{chainfornpscase}, the ideals $\langle (x-\delta_{0,0})^{i}\rangle$ ($0 \leq i \leq 3p^s$) of $R^{3,1}_{\delta}$ form a chain and cardinality of $\langle (x-\delta_{0,0})^{i}\rangle$ is $p^{m(3p^s-i)}$. If $\delta_1=0=\delta_2$ then the ideals of $R^{3,1}_{\delta}$ are same as the ideals of $\frac{R^3[x]}{\langle (x-\delta_{0,0})^{p^s}\rangle}.$ We refer the reader to \cite{hesari2024torsion} and \cite{AkaKanSar} for the ideals of $R^{3,1}_{\delta}$ and the cardinality of these ideals in this case. Finally, let $\delta_1= 0$ and $\delta_2\neq 0$. We now discuss the cardinality of the ideals of $R^{3,1}_{\delta}$ in this case. First, we list all the different types of ideals of the ring $R^{3,1}_{\delta}$ in this case.
\begin{theorem}\label{ideals of ps specific case}
    The ideals of the ring $\frac{R^3[x]}{\langle x^{p^s}-\delta_0-u^2\delta_2\rangle}$ have one of the following eight types:
     \begin{enumerate}
           \item{\label{Type 1'}} $\langle 0 \rangle,\, \langle 1 \rangle.$
           \item{\label{Type 2'}} $\langle u^2 (x-\delta_{0,0})^a \rangle$ where $0\leq a \leq p^s-1.$
           \item{\label{Type 3'}} $\langle u(x-\delta_{0,0})^a+u^2(x-\delta_{0,0})^{t}h(x)\rangle$ where $0\leq t<L< a\leq p^s-1,$ $h(x)$ is either $0$ or a unit in $\frac{\mathbb{F}_{p^m}[x]}{\langle (x-\delta_{0,0})^{p^s}\rangle},$ and $L$ is the smallest non-negative integer such that $u^2(x-\delta_{0,0})^L\in \langle u(x-\delta_{0,0})^a+u^2(x-\delta_{0,0})^{t}h(x)\rangle.$
           
           \item{\label{Type 4'}}$\langle u(x-\delta_{0,0})^a+u^2(x-\delta_{0,0})^{t}h(x),\, u^2(x-\delta_{0,0})^b\rangle$ where $0\leq t<b<L< a\leq p^s-1$, $h(x)$ is either $0$ or a unit in $\frac{\mathbb{F}_{p^m}[x]}{\langle (x-\delta_{0,0})^{p^s}\rangle},$ and $L$ is the smallest non-negative integer such that $u^2(x-\delta_{0,0})^{L}\in \langle u(x-\delta_{0,0})^a+u^2(x-\delta_{0,0})^{t}h(x)\rangle.$
           
           \item{\label{Type 5'}} $\langle (x-\delta_{0,0})^{a}+u(x-\delta_{0,0})^{t_0}h_0(x)+u^2(x-\delta_{0,0})^{t_1}h_1(x)\rangle$ where $0\leq t_1<t_0<L< a\leq p^s-1,\,0\leq t_1<M<L,\,$ $h_i(x)$ is either $0$ or a unit in $\frac{\mathbb{F}_{p^m}[x]}{\langle (x-\delta_{0,0})^{p^s}\rangle}$ for $i=0,\,1$ and $L,M$ are the smallest non-negative integers such that $u(x-\delta_{0,0})^L+u^2g(x),\, u^2(x-\delta_{0,0})^M\in \langle (x-\delta_{0,0})^{a}+u(x-\delta_{0,0})^{t_0}h_0(x)+u^2(x-\delta_{0,0})^{t_1}h_1(x)\rangle$ for some $g(x)\in \frac{\mathbb{F}_{p^m}[x]}{\langle (x-\delta_{0,0})^{p^s}\rangle}.$

           \item{\label{Type 6'}}$\langle (x-\delta_{0,0})^{a}+u(x-\delta_{0,0})^{t_0}h_0(x)+u^2(x-\delta_{0,0})^{t_1}h_1(x),\,u^2 (x-\delta_{0,0})^b \rangle\rangle$ for $0\leq t_1<t_0< a\leq p^s-1,\, 0\leq t_1<b <L<a,\,0\leq t_0<M<L$, $h_i(x)$ is either $0$ or a unit in $\frac{\mathbb{F}_{p^m}[x]}{\langle (x-\delta_{0,0})^{p^s}\rangle}$ for $i=0,\,1,$ and $L,M$ are the smallest non-negative integers such that $u^2(x-\delta_{0,0})^L,\, u(x-\delta_{0,0})^M+u^2g(x)\in \langle (x-\delta_{0,0})^{a}+u(x-\delta_{0,0})^{t_0}h_0(x)+u^2(x-\delta_{0,0})^{t_1}h_1(x) \rangle$ for some $g(x)\in \frac{\mathbb{F}_{p^m}[x]}{\langle (x-\delta_{0,0})^{p^s}\rangle}.$

           \item{\label{Type 7'}} $\langle (x-\delta_{0,0})^{a}+u(x-\delta_{0,0})^{t_0}h_0(x)+u^2(x-\delta_{0,0})^{t_1}h_1(x),\, u(x-\delta_{0,0})^b+u^2(x-\delta_{0,0})^{t_2}h_2(x)\rangle$ where $0\leq t_1<t_0<b<L< a\leq p^s-1\,, 0\leq t_2<b,\,0\leq t_i<M<b$ for $i=1,2$, $h_i(x)$ is either $0$ or a unit in $\frac{\mathbb{F}_{p^m}[x]}{\langle (x-\delta_{0,0})^{p^s}\rangle}$ for $0\leq i\leq 2,$ $L,M$ are the smallest non-negative integers such that $u(x-\delta_{0,0})^L+u^2g(x)\in \langle (x-\delta_{0,0})^{a}+u(x-\delta_{0,0})^{t_0}h_0(x)+u^2(x-\delta_{0,0})^{t_1}h_1(x)\rangle$ and $u^2(x-\delta_{0,0})^M\in \langle (x-\delta_{0,0})^{a}+u(x-\delta_{0,0})^{t_0}h_0(x)+u^2(x-\delta_{0,0})^{t_1}h_1(x),\, u(x-\delta_{0,0})^b+u^2(x-\delta_{0,0})^{t_2}h_2(x)\rangle$ for some $g(x)\in \frac{\mathbb{F}_{p^m}[x]}{\langle (x-\delta_{0,0})^{p^s}\rangle}.$

           \item{\label{Type 8'}}$\langle (x-\delta_{0,0})^{a}+u(x-\delta_{0,0})^{t_0}h_0(x)+u^2(x-\delta_{0,0})^{t_1}h_1(x),\, u(x-\delta_{0,0})^b+u^2(x-\delta_{0,0})^{t_2}h_2(x),\, u^2(x-\delta_{0,0})^c\rangle$ where $0\leq t_1<t_0< a\leq p^s-1,\, 0\leq t_1,t_2<c<b<a,\,0\leq t_0<b$,  $h_i(x)$ is either $0$ or a unit in $\frac{\mathbb{F}_{p^m}[x]}{\langle (x-\delta_{0,0})^{p^s}\rangle}$ for $0\leq i\leq 2$, $c<M$, the smallest non-negative integer such that $u^2(x-\delta_{0,0})^M\in \langle (x-\delta_{0,0})^{a}+u(x-\delta_{0,0})^{t_0}h_0(x)+u^2(x-\delta_{0,0})^{t_1}h_1(x),\, u(x-\delta_{0,0})^b+u^2(x-\delta_{0,0})^{t_2}h_2(x)\rangle$, and $b<L$, the smallest non-negative integer such that $u(x-\delta_{0,0})^L+u^2g(x)\in \langle (x-\delta_{0,0})^{a}+u(x-\delta_{0,0})^{t_0}h_0(x)+u^2(x-\delta_{0,0})^{t_1}h_1(x)\rangle$ for some $g(x)\in \frac{\mathbb{F}_{p^m}[x]}{\langle (x-\delta_{0,0})^{p^s}\rangle}.$
       \end{enumerate}
\end{theorem}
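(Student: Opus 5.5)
The plan is to specialize Theorem \ref{nps ideals} to $n=1$, $t=3$, and then sort the resulting generator sets by which of the three $u$-layers they involve. Write $y=x-\delta_{0,0}$. Since $\delta_1=0\neq\delta_2$, the computation preceding Theorem \ref{nps ideals} gives $y^{p^s}=u^2\delta_2$. Hence $\langle y^{p^s}\rangle=\langle u^2\rangle$, $u^2y^{p^s}=0$, $uy^{p^s}=u^3\delta_2=0$, and every element has the form $\sum_{k=0}^{2}u^k c_k(y)$ with $\deg c_k<p^s$. An element of this form is a unit exactly when $c_0(0)\neq 0$.

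\textbf{Ideals inside $\langle u\rangle$.} By Theorem \ref{nps ideals}(ii) with $t=3$, such an ideal is generated by a subset of $\{\theta_0,\theta_1\}$, where
\[
\theta_0=u^2y^{c}, \qquad \theta_1=uy^{b}+u^2y^{t}h(y),
\]
and $h$ is $0$ or a unit.
\begin{itemize}
\item The ideal $\langle\theta_0\rangle$ alone is Type \ref{Type 2'}.
\item The ideal $\langle\theta_1\rangle$ alone is Type \ref{Type 3'}. Here $L$ is the least exponent with $u^2y^L\in\langle\theta_1\rangle$. Such an $L$ exists because $u\theta_1=u^2y^b$, so $L\le b$.
\item The ideal $\langle\theta_1,u^2y^c\rangle$ is Type \ref{Type 4'}. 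If $c\ge L$, the generator $u^2y^c$ is redundant, so we may assume $c<L$.
\end{itemize}
The constraint $t<L$ is normalization: any term $u^2y^{t}h$ with $t\ge L$ can be absorbed into the ideal and dropped. We also use $uy^{p^s}=0$, which gives $u\,y^{p^s-b}\theta_1=u^2y^{t+p^s-b}h$. So $L\le t+p^s-b$ and $L\le b$. This interplay is why $L$ is kept implicit rather than computed.

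\textbf{Ideals not inside $\langle u\rangle$.} By Theorem \ref{nps ideals}(iii), such an ideal is $\langle\psi\rangle+I$ with
\[
\psi=y^a+uy^{t_0}h_0+u^2y^{t_1}h_1,
\]
where $I$ is one of $0$, $\langle u^2y^c\rangle$, $\langle\theta_1\rangle$, or $\langle\theta_1,u^2y^c\rangle$. These four choices give Types \ref{Type 5'}, \ref{Type 6'}, \ref{Type 7'}, and \ref{Type 8'} respectively.

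In each case define $L$ and $M$ as the least exponents such that $uy^L+u^2g$, respectively $u^2y^M$, lies in the relevant ideal. These exist because $y^{p^s-a}\psi=u^2\delta_2+uy^{t_0+p^s-a}h_0+\cdots$, and multiplying this by $u$ gives an element of $u^2\cdot(\text{unit})$ modulo higher terms. The inequalities in the statement are the conditions that each added generator is not redundant and that each lower-order tail term is reduced:
\begin{itemize}
\item $b<L$ and $c<M$, so the extra generators are genuinely new;
\item $t_0<L$ and $t_1<M$, obtained by subtracting multiples of known members of the ideal to reduce $t_0$ and $t_1$.
\end{itemize}

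\textbf{Main obstacle.} The hardest step is justifying these reductions. One must show that subtracting ideal elements lowers a $t_i$ below the appropriate threshold without destroying the leading term or the ``$0$ or unit'' form of the $h_i$. For this I would use Remark \ref{generalrepresentation} to factor any nonzero $c_k(y)$ as $y^{j}\cdot(\text{unit})$, and Proposition \ref{FormofIdeals} to obtain minimality, by applying it to the reductions modulo $u$ and modulo $u^2$. I would not attempt to compute $L$ and $M$ explicitly at this stage.
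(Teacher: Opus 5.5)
Your route, specializing Theorem \ref{nps ideals} to $n=1$, $t=3$ and sorting by which of $\theta_0=u^2y^c$, $\theta_1=uy^b+u^2y^th$ and $\psi$ occur, is how the paper gets this list. The paper gives no argument beyond that specialization, and your enumeration of Types 2--8 from it is correct. The gap is the step you defer, and you mischaracterize it. Some inequalities are genuine normalizations of the kind you describe. These are the non-redundancy conditions ($b<L$ in Types 4, 6, 7 and $c<M$ in Type 8) and the tail bounds against another generator (e.g.\ $t_0<b$ and $t_1,t_2<c$ in Type 8). Others are not. $L<a$ (Types 3--7) and the comparison of $M$ with $L$ relate invariants of the ideal, no reduction produces them, and they fail in general:
\begin{itemize}
\item By Lemma \ref{first parameter computation}, $\langle uy^a\rangle$ has $L=a$; your own bound $L\le b$ is attained.
\item For $\langle y^a\rangle$ with $1\le a\le p^s-1$, we have $u^2=\delta_2^{-1}y^{p^s-a}y^a$, so $M=0$. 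But $L=a$, since any $y^aq$ with vanishing $u^0$-layer has $u$-layer $y^aq_1$.
\item In Type 6, $M$ is the $u$-exponent and $L$ the $u^2$-exponent. There $M<L$ is impossible, because $u(uy^M+u^2g)=u^2y^M$.
\end{itemize}
So a completed version of your plan proves a weaker classification. It has $L\le a$, the least $u^2$-exponent at most the least $u$-exponent, and $t_0<L$, $t_1<M$ read as ``or $h_0=0$'', ``or $h_1=0$''. You should say this rather than claim the literal inequalities.

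You are also missing the conditions $t_1<t_0<a$ in Types 5--8. Theorem \ref{nps ideals}(iii) allows an arbitrary tail $uf$, and these bounds come from multiplying $\psi$ by units $1+ur(y)$, not from subtracting ideal elements. If $t_0\ge a$, then $\psi(1-uy^{t_0-a}h_0)=y^a+u^2(y^{t_1}h_1-y^{2t_0-a}h_0^2)$. If $h_0\ne0$ and $t_1\ge t_0$, multiplying by $1-uy^{t_1-t_0}h_1h_0^{-1}$ removes the $u^2$-term.

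For the subtraction steps the needed fact is elementary, not Proposition \ref{FormofIdeals}. Suppose $uy^L+u^2g=\psi q$ plus multiples of generators lying in $\langle u\rangle$. The vanishing $u^0$-layer forces $y^{p^s-a}\mid q_0$, so $q$ is a non-unit and $1-y^{t_0-L}h_0q$ is a unit. Hence replacing $\psi$ by $\psi-y^{t_0-L}h_0(uy^L+u^2g)$ leaves the ideal unchanged; the same argument handles $u^2y^M$.

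Two computations need fixing:
\begin{itemize}
\item $u\,y^{p^s-b}\theta_1=0$. The identity you want is $y^{p^s-b}\theta_1=u^2y^{t+p^s-b}h$, using $uy^{p^s}=u^3\delta_2=0$.
\item $u\cdot y^{p^s-a}\psi=u^2y^{p^s-a+t_0}h_0$ is not $u^2$ times a unit. Existence of $L$ and $M$ follows at once from $u\psi$ and $u^2\psi$. What that computation actually shows is that for $h_0=0$, $y^{p^s-a}\psi=u^2(\delta_2+y^{p^s-a+t_1}h_1)$ is $u^2$ times a unit, which is why $M=0$ there (Lemma \ref{second parameter computation}).
\end{itemize}
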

In the following three lemmas, we compute the parameters that help us in computing the $i^{\textnormal{th}}$ torsional degree and consequently the cardinality of the ideals of $R^{3,1}_{\delta}$ when  $\delta_1= 0$ and $\delta_2\neq 0$.
\begin{lemma}\label{first parameter computation}
        Let $L$ be the smallest non-negative integer such that $u^2(x-\delta_{0,0})^L\in \langle u(x-\delta_{0,0})^a+u^2(x-\delta_{0,0})^{t}h(x)\rangle,$ where $h(x)$, if non-zero, is a unit in $\frac{\mathbb{F}_{p^m}[x]}{\langle (x-\delta_{0,0})^{p^s}\rangle}$. Then
        $$L= \begin{cases} 
             a & \textnormal{ if } h(x)=0, \\
             \Min\{a,\,p^s-a+t\}  & \textnormal{ if } h(x) \ne 0. \\
   \end{cases}$$
\end{lemma}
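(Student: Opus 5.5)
Throughout, write $y=x-\delta_{0,0}$, and let $I=\langle u y^a+u^2y^th(x)\rangle$ denote the principal ideal in the statement. Since $\delta_1=0$, we have $y^{p^s}=x^{p^s}-\delta_0=u^2\delta_2$ in $R^{3,1}_\delta$. This gives the two reduction rules we will use throughout:
\[
u\,y^{p^s}=u^3\delta_2=0, \qquad u^2y^{p^s}=0 .
\]
In particular, in any element of $uR^{3,1}_\delta$, a power $y^j$ with $j\geq p^s$ may simply be discarded. The plan is to prove $L\le$ (the claimed value) by exhibiting explicit multipliers, and $L\ge$ (the claimed value) by a valuation argument. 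For the latter we use the unique representation of elements of $R^{3,1}_\delta$ recalled at the start of this section.

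\emph{Upper bound.} Multiplying the generator by $u$ gives $u^2y^a\in I$, so $L\le a$ in both cases. When $h(x)\neq 0$, multiply the generator by $y^{p^s-a}$. This gives
\[
u\,y^{p^s}+u^2y^{p^s-a+t}h(x)=u^2y^{p^s-a+t}h(x).
\]
Since $h$ is a unit modulo $y^{p^s}$, and $u^2$ kills everything divisible by $y^{p^s}$ or by $u$, we may multiply by an inverse of $h$ and obtain $u^2y^{p^s-a+t}\in I$. Hence $L\le \min\{a,\,p^s-a+t\}$.

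\emph{Lower bound.} Suppose $u^2y^L\in I$ with $L<p^s$; we may assume this, since the upper bound already gives $L\le a<p^s$. Write the multiplier as $c=c_0(y)+uc_1(y)+u^2c_2(y)$, where the $c_i\in\mathbb{F}_{p^m}[y]$ have degree less than $p^s$. Expanding with the rules above, the product of $c$ with the generator equals
\[
u\,\bigl(y^ac_0 \bmod y^{p^s}\bigr)+u^2\,\bigl(y^ac_1+y^th\,c_0 \bmod y^{p^s}\bigr).
\]
Comparing the $u$-components via the unique representation forces $y^{p^s}\mid y^ac_0$, that is, $y^{p^s-a}\mid c_0$. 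Comparing the $u^2$-components then gives
\[
y^L\equiv y^ac_1+y^th\,c_0 \pmod{y^{p^s}} .
\]
The first summand has $y$-adic valuation at least $a$. The second summand is $0$ if $h=0$, and otherwise has valuation at least $p^s-a+t$. Since $L<p^s$, the congruence forces $L\geq a$ when $h=0$, and $L\ge\min\{a,\,p^s-a+t\}$ when $h\ne0$. This matches the upper bound and completes the computation.

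The main obstacle is making the comparison of coefficients legitimate. One has to check that reducing $y^j$ for $j\ge p^s$ introduces only terms in $u^3=0$, so that the $u$- and $u^2$-components can genuinely be read off modulo $y^{p^s}$ in $\mathbb{F}_{p^m}[y]$. Once that is in place, the rest is a straightforward valuation comparison.
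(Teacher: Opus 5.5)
Your proof is correct and follows essentially the same route as the paper. The upper bound comes from the multipliers $u$ and $y^{p^s-a}h^{-1}$, and the lower bound comes from using $u\,y^{p^s}=0$ to force $y^{p^s-a}\mid c_0$ from the $u$-component, then comparing $y$-adic valuations in the $u^2$-component; your single valuation argument also gives the lower bound when $h=0$, which the paper only asserts.
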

\begin{proof}
Let $C= \langle u(x-\delta_{0,0})^a+u^2(x-\delta_{0,0})^{t}h(x)\rangle$. If $h(x)=0$ then $L=a$ as $u^2(x-\delta_{0,0})^a \in C$. Now, let  $h(x)\ne 0$ and let $\alpha$ be a non-negative integer such that $u^2(x-\delta_{0,0})^{\alpha}\in C$. Then there exist  $c_i(x)=\underset{j=0}{\overset{p^s-1}{\sum}}a_{i,j}(x-\delta_{0,0})^j\in \frac{\mathbb{F}_{p^m}[x]}{\langle (x-\delta_{0,0})^{p^s}\rangle}$ for $i=0,1$ such that
\begin{align*}
  u^2(x-\delta_{0,0})^{\alpha}&=u(x-\delta_{0,0})^{a}c_0(x)+u^2(x-\delta_{0,0})^{a}c_1(x)+u^2(x-\delta_{0,0})^{t}h(x)c_0(x)\\& 
  =u(x-\delta_{0,0})^{a}\underset{j=0}{\overset{p^s-1}{\sum}}a_{0,j}(x-\delta_{0,0})^j+u^2(x-\delta_{0,0})^{a}c_1(x)\\&+u^2(x-\delta_{0,0})^{t}h(x)\underset{j=0}{\overset{p^s-1}{\sum}}a_{0,j}(x-\delta_{0,0})^j.
  \end{align*}Since $\langle(x-\delta_{0,0})^{p^s}\rangle=\langle u^2\rangle$, we have
  \begin{align*}
  u^2(x-\delta_{0,0})^{\alpha}&=
  u(x-\delta_{0,0})^{a}\underset{j=0}{\overset{p^s-1-a}{\sum}}a_{0,j}(x-\delta_{0,0})^j+u^2(x-\delta_{0,0})^{a}c_1(x)\\&+u^2(x-\delta_{0,0})^{t}h(x)\underset{j=0}{\overset{p^s-1-a}{\sum}}a_{0,j}(x-\delta_{0,0})^j\\&
  =u(x-\delta_{0,0})^{a}\underset{j=0}{\overset{p^s-1-a}{\sum}}a_{0,j}(x-\delta_{0,0})^j+u^2(x-\delta_{0,0})^{a}c_1(x)\\&
  +u^2(x-\delta_{0,0})^{p^s-a+t}h(x)\underset{j=0}{\overset{a-1}{\sum}}a_{0,j+p^s-a}(x-\delta_{0,0})^j.
\end{align*}
Let $\underset{j=0}{\overset{p^s-1-a}{\sum}}a_{0,j}(x-\delta_{0,0})^j=c_{0,0}(x)$ and $\underset{j=0}{\overset{a-1}{\sum}}a_{0,{j+p^s-a}}(x-\delta_{0,0})^j=c_{0,1}(x).$ Then $c_0(x)=c_{0,0}(x)+(x-\delta_{0,0})^{p^s-a}c_{0,1}(x).$ Furthermore, we have 
\begin{align}
&(x-\delta_{0,0})^a c_{0,0}(x) = 0, \tag{1}\\
&(x-\delta_{0,0})^{a}c_1(x) + (x-\delta_{0,0})^{p^s-a+t} h(x)c_{0,1}(x)=(x-\delta_{0,0})^{\alpha}. \tag{2}
\end{align}
Equation (1) gives $c_{0,0}(x)=0$. Using this in Equation (2), we get\\
\[(x-\delta_{0,0})^ac_1(x)+(x-\delta_{0,0})^{p^s-a+t}h(x)c_{0,1}(x)=(x-\delta_{0,0})^{\alpha}.\]
Thus $\alpha \geq \min\{a,\, p^s-a+t\}.$ In particular, since $u^2(x-\delta_{0,0})^{L}\in C$, we have $L\geq \min\{a,\, p^s-a+t\}.$ Also, if we take $c_1(x)=1,\, c_{0,1}(x)=0$, we get $u^2(x-\delta_{0,0})^a\in C$ and if we take $c_1(x)=0,\, c_{0,1}(x)=h(x)^{-1}$, we get $u^2(x-\delta_{0,0})^{p^s-a+t}\in C.$ Since $L$ is the smallest non-negative integer satisfying $u^2(x-\delta_{0,0})^{L}\in C $, we have  $L\leq \min\{a,\, p^s-a+t\}$ and hence $L=\min\{a,\,p^s-a+t\}.$\hfill $\square$
\end{proof}    
\begin{lemma}\label{second parameter computation}
        Let $L$ be the smallest non-negative integer such that $u^2(x-\delta_{0,0})^L\in \langle (x-\delta_{0,0})^{a}+u(x-\delta_{0,0})^{t_0}h_0(x)+u^2(x-\delta_{0,0})^{t_1}h_1(x)\rangle$, where for $i=0$ and $1$, $h_i(x)$, if non-zero, is a unit in $\frac{\mathbb{F}_{p^m}[x]}{\langle (x-\delta_{0,0})^{p^s}\rangle}.$ Then
        $$L= \begin{cases} 
             0 & \textnormal{ if } h_0(x)=0, \\
             0&\textnormal{ if } h_0(x)\neq 0 \textnormal{ and } a<p^s-a+t_0,\\
             \Min\{a,\,\beta\}  & \textnormal{ if } h_0(x) \ne 0 \textnormal{ and } a\geq p^s-a+t_0, \\
   \end{cases}$$
   where $\beta:=\max\{k:(x-\delta_{0,0})^k\mid ((x-\delta_{0,0})^{t_0}h_0(x) -h_0(x)^{-1}(x-\delta_{0,0})^{2a-p^s-t_0}-h_1(x)h_0(x)^{-1}(x-\delta_{0,0})^{a+t_1-t_0}h_1(x)h_0(x)^{-1})\}.$
\end{lemma}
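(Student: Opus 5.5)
We write $X:=x-\delta_{0,0}$ and $C:=\langle g\rangle$ with $g=X^{a}+uX^{t_0}h_0(x)+u^2X^{t_1}h_1(x)$. The one relation we use throughout is $X^{p^s}=x^{p^s}-\delta_0=u^2\delta_2$ in $R^{3,1}_{\delta}$, and $\delta_2$ is a unit. Consequently $uX^{p^s}=0$ and $u^2X=\dots$ is the only way a power $X^{\ge p^s}$ survives.

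As in the proof of Lemma \ref{first parameter computation}, every element of $C$ can be written as $(c_0(x)+uc_1(x)+u^2c_2(x))g$ with $c_i\in \frac{\mathbb{F}_{p^m}[x]}{\langle X^{p^s}\rangle}$. The plan has two halves: exhibit explicit multipliers giving the upper bound for $L$, then compare $u$-adic coefficients for the lower bound.

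\emph{Upper bounds.} The basic move is to multiply $g$ by $X^{p^s-a}$ and use $X^{p^s}=u^2\delta_2$:
\[
X^{p^s-a}g=u^2\delta_2+uX^{p^s-a+t_0}h_0+u^2X^{p^s-a+t_1}h_1 .
\]
\begin{itemize}[leftmargin=*]
\item If $h_0=0$, the right-hand side is $u^2\bigl(\delta_2+X^{p^s-a+t_1}h_1\bigr)$. Since $p^s-a+t_1\ge 1$, this is $u^2$ times a unit, so $u^2\in C$ and $L=0$.
\item If $h_0\neq0$, we also use $ug=uX^{a}+u^2X^{t_0}h_0\in C$. This gives $uX^{a+k}\equiv -u^2X^{t_0+k}h_0 \pmod C$ for every $k\ge 0$.
\item When $p^s-a+t_0>a$, take $k=p^s-2a+t_0$ to replace the middle term $uX^{p^s-a+t_0}h_0$ by $-u^2X^{p^s-2a+2t_0}h_0^2$. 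This leaves $u^2(\delta_2+\text{terms in positive powers of }X)$, which is $u^2$ times a unit. Hence $L=0$ again.
\item When $a\ge p^s-a+t_0$, we still have $u^2X^{a}=u(ug)-u^2\cdot uX^{t_0}h_0\in C$ (indeed $u\cdot ug=u^2X^a$), so $L\le a$.
\item In the same case we eliminate the $u$-term instead. Multiplying $X^{p^s-a}g$ by $h_0^{-1}$ and subtracting $X^{p^s-a+t_0-a}\cdot ug$-type corrections should produce an element $u^2X^{\beta}\cdot(\text{unit})$. Here $\beta$ is the $X$-adic valuation of the leftover $u^2$-coefficient, i.e.\ of the combination of $X^{t_0}h_0$, $h_0^{-1}X^{2a-p^s-t_0}$ and $h_1h_0^{-1}X^{a+t_1-t_0}$ appearing in the statement. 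This gives $L\le\min\{a,\beta\}$.
\end{itemize}

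\emph{Lower bound in the last case.} Suppose $u^2X^{\alpha}=(c_0+uc_1+u^2c_2)g$. We read off the coefficients of $u^0,u^1,u^2$ as in Equations (1)--(2) of Lemma \ref{first parameter computation}, remembering that products reaching $X^{p^s}$ feed back as $u^2\delta_2$.
\begin{itemize}[leftmargin=*]
\item The $u^0$-coefficient forces $X^{a}c_0\equiv0$ below degree $p^s$, so $c_0=X^{p^s-a}c_0'$.
\item The $u^1$-coefficient then reads $X^{a}c_1+X^{p^s-a+t_0}h_0c_0'\equiv 0 \pmod{X^{p^s}}$. This determines $c_1$ modulo $X^{p^s-a}$ in terms of $c_0'$: using $a\ge p^s-a+t_0$, we get $c_1\equiv -X^{p^s-2a+t_0}h_0c_0'$ up to a multiple of $X^{p^s-a}$.
\item Substituting into the $u^2$-coefficient gives $X^{\alpha}$ as an $\mathbb{F}_{p^m}[x]$-combination of $X^{a}$ (from $c_2$ and the free part of $c_1$) and $c_0'\cdot(\text{the }\beta\text{-expression})$.
\end{itemize}
Taking valuations then gives $\alpha\ge\min\{a,\beta\}$.

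\emph{Main obstacle.} The delicate step is this last bookkeeping. We must track exactly which terms are forced into the $u^2$-component when $c_0'$ is a unit, and why the valuation of the combined expression is exactly $\beta$ with no further cancellation. We expect to handle this by assuming $c_0'$ has unit constant term: otherwise we factor out a power of $X$, which only increases $\alpha$. We would then compare lowest-order terms, following the model of Lemma \ref{first parameter computation}.
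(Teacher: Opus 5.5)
Your treatment of the cases $h_0=0$ and $a<p^s-a+t_0$ is correct and matches the paper, and so is the bound $L\le a$ from $u^2g=u^2X^a$. The gap is in the remaining case $2a\ge p^s+t_0$, which is the only substantive part of the lemma: there you eliminate in the wrong direction. In the $u^1$-equation $X^{a}c_1+X^{p^s-a+t_0}h_0c_0'\equiv 0 \pmod{X^{p^s}}$ the smaller power is now $X^{p^s-a+t_0}$. Cancelling it gives $c_0'\equiv -h_0^{-1}X^{2a-p^s-t_0}c_1 \pmod{X^{a-t_0}}$. So $c_1$ is the free parameter, and $c_0'$ is forced to have valuation at least $2a-p^s-t_0$. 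In particular, your planned normalization ($c_0'$ with unit constant term) is impossible unless $2a=p^s+t_0$.

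Your relation $c_1\equiv -X^{p^s-2a+t_0}h_0c_0'$ has exponent $p^s-2a+t_0\le 0$, and $X$ is nilpotent, so negative powers do not exist. It is the relation valid in the case $a<p^s-a+t_0$. Substituting it into the $u^2$-coefficient $(\delta_2+X^{p^s-a+t_1}h_1)c_0'+X^{t_0}h_0c_1+X^{a}c_2$ yields $c_0'(\delta_2+X^{p^s-a+t_1}h_1-X^{p^s-2a+2t_0}h_0^2)+X^{p^s-a+t_0}h_0w+X^ac_2$. This does not contain the $\beta$-expression at all, and the free part of $c_1$ contributes $X^{p^s-a+t_0}$, not $X^a$. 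The same exponent problem breaks your witness for $u^2X^{\beta}$: the correction $X^{p^s-2a+t_0}\cdot ug$ is unavailable when $2a>p^s+t_0$.

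The missing idea is to multiply $g$ by $X^{a-t_0}$ instead of $X^{p^s-a}$. Then its $u$-term becomes $uX^{a}h_0$ and is cancelled by $ug$, giving $(u-h_0^{-1}X^{a-t_0})g=u^2E$ with $E=X^{t_0}h_0-\delta_2h_0^{-1}X^{2a-p^s-t_0}-h_1h_0^{-1}X^{a+t_1-t_0}$. For the lower bound, substitute $c_0'=-h_0^{-1}X^{2a-p^s-t_0}c_1+X^{a-t_0}w$. The $u^2$-coefficient becomes $c_1E+(\delta_2+X^{p^s-a+t_1}h_1)X^{a-t_0}w+X^{a}c_2$. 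This is the paper's Equation (6); the paper sets $c_{0,1}=-h_0^{-1}X^{2a-p^s-t_0}c_1$ and reads off $\alpha\ge\min\{a,\beta\}$.

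Do not drop the $w$-term. It gives $u^2X^{a-t_0}\in C$, since $X^{p^s-t_0}g=u^2X^{a-t_0}(\delta_2+X^{p^s-a+t_1}h_1)$, so you also need $\beta\le a-t_0$. For odd $p$ this holds: $2a-2t_0\ne p^s$, so the terms of $E$ of valuation $t_0$ and $2a-p^s-t_0$ cannot cancel, whence $\beta\le 2a-p^s-t_0<a-t_0$. For $p=2$ they can cancel. For example, $p^s=4$, $a=3$, $t_0=1$, $h_0=\delta_2=1$, $h_1=0$ gives $E=0$ but $u^2X^2\in C$, so that case needs separate handling.
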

\begin{proof}
Let $C=\langle(x-\delta_{0,0})^{a}+u(x-\delta_{0,0})^{t_0}h_0(x)+u^2(x-\delta_{0,0})^{t_1}h_1(x)\rangle$ and let $\alpha$ be a non-negative integer such that $u^2(x-\delta_{0,0})^{\alpha} \in C$. Then there exist $c_i(x)=\underset{j=0}{\overset{p^s-1}{\sum}}a_{i,j}(x-\delta_{0,0})^j\in\frac{\mathbb{F}_{p^m}[x]}{\langle (x-\delta_{0,0})^{p^s}\rangle}$ for $i=0,1,2$ such that
\begin{align*}
  u^2(x-\delta_{0,0})^{\alpha}&=(x-\delta_{0,0})^a c_0(x) +u\bigl((x-\delta_{0,0})^ac_1(x)+ (x-\delta_{0,0})^{t_0}h_0(x)c_0(x)\bigr)+\\&
  u^2\bigl((x-\delta_{0,0})^ac_2(x)+(x-\delta_{0,0})^{t_0}h_0(x)c_1(x) +(x-\delta_{0,0})^{t_1}h_1(x)c_0(x)\bigr).
  \end{align*}
   Let $\underset{j=0}{\overset{p^s-1-a}{\sum}}(a_{0,j}+b_{0,j}x)(x-\delta_{0,0})^j=c_{0,0}(x)$ and $\underset{j=0}{\overset{a-1}{\sum}}(a_{0,{j+p^s-a}}+b_{0,{j+p^s-a}}x)(x-\delta_{0,0})^j=c_{0,1}(x).$ Then $c_0(x)=c_{0,0}(x)+(x-\delta_{0,0})^{p^s-a}c_{0,1}(x).$ Since $\langle(x-\delta_{0,0})^{p^s}\rangle=\langle u^2\rangle$, we have
   \begin{align*}
  u^2(x-\delta_{0,0})^{\alpha}&=(x-\delta_{0,0})^a c_{0,0}(x) +u\bigl((x-\delta_{0,0})^ac_1(x)+ (x-\delta_{0,0})^{t_0}h_0(x)(c_{0,0}(x)+\\&(x-\delta_{0,0})^{p^s-a}c_{0,1}(x))\bigr)+
  u^2\bigl((x-\delta_{0,0})^ac_2(x)+c_{0,1}(x)+(x-\delta_{0,0})^{t_0}h_0(x)c_1(x)\\& +(x-\delta_{0,0})^{t_1}h_1(x)(c_{0,0}(x)+(x-\delta_{0,0})^{p^s-a}c_{0,1}(x))\bigr)
\end{align*}
and hence 
\begin{align*}
&(x-\delta_{0,0})^a c_{0,0}(x) = 0, \tag{1}\\
&(x-\delta_{0,0})^ac_1(x)+ (x-\delta_{0,0})^{t_0}h_0(x)(c_{0,0}(x)+(x-\delta_{0,0})^{p^s-a}c_{0,1}(x)) = 0, \tag{2}\\
&(x-\delta_{0,0})^ac_2(x)+c_{0,1}(x)+(x-\delta_{0,0})^{t_0}h_0(x)c_1(x) +(x-\delta_{0,0})^{t_1}h_1(x)(c_{0,0}(x)+(x-\delta_{0,0})^{p^s-a}\\&c_{0,1}(x))=(x-\delta_{0,0})^{\alpha}. \tag{3}
\end{align*}

Equation (1) gives $c_{0,0}(x) = 0$. 
Using this in Equations (2) and (3), we get
\begin{align*}
&(x-\delta_{0,0})^ac_1(x)+ (x-\delta_{0,0})^{p^s-a+t_0}h_0(x)c_{0,1}(x) = 0, \tag{4}\\
&(x-\delta_{0,0})^ac_2(x)+c_{0,1}(x)+(x-\delta_{0,0})^{t_0}h_0(x)c_1(x) +(x-\delta_{0,0})^{t_1}h_1(x)(x-\delta_{0,0})^{p^s-a}c_{0,1}(x)\\&=(x-\delta_{0,0})^{\alpha}.\tag{5}
\end{align*}
If $h_0(x)= 0$ then $((x-\delta_{0,0})^{a}+u^2(x-\delta_{0,0})^{t_1}h_1(x))(x-\delta_{0,0})^{p^s-a}\in C$, which gives $u^2(1+(x-\delta_{0,0})^{p^s-a+t_1}h_1(x))\in C$. Since $p^s-a+t_1 \geq 1,$ we have $u^2\in C$ and hence $L=0$.\\
So, let $h_0(x) \neq 0$.
We will consider two cases.\\
\textbf{Case 1.}  $a < p^s-a+t_0$.\\ In this case, we have $((x-\delta_{0,0})^{a}+u^2(x-\delta_{0,0})^{t_1}h_1(x))((x-\delta_{0,0})^{p^s-a}-u(x-\delta_{0,0})^{p^s-2a+t_0}h_0(x))\in C,$ which gives $u^2\{1+(x-\delta_{0,0})^{p^s-a+t_1}h_1(x)-(x-\delta_{0,0})^{p^s-2a+2t_0}h_0(x)^2\}\in C.$ Since $p^s-2a+2t_0>0$ and $p^s-a+t_1>0,$ we get $u^2\in C$ and hence, once again, $L=0.$
\\
\textbf{Case 2.}  $a \geq p^s-a+t_0$.\\
In this case, Equation (4) gives
\[{c}_{0,1}(x)=-h_0(x)^{-1}(x-\delta_{0,0})^{2a-p^s-t_0}c_1(x).\]Using this in Equation (5), we get 
\begin{align*}
&(x-\delta_{0,0})^ac_2(x)+\{(x-\delta_{0,0})^{t_0}h_0(x) -h_0(x)^{-1}(x-\delta_{0,0})^{2a-p^s-t_0}-h_1(x)h_0(x)^{-1}\\&(x-\delta_{0,0})^{a+t_1-t_0}h_1(x)h_0(x)^{-1}\}c_1(x)=(x-\delta_{0,0})^{\alpha}.\tag{6}
\end{align*}
Thus $\alpha \geq \min\{a,\,\beta\}$, where $\beta:=\max\{k:(x-\delta_{0,0})^k\mid ((x-\delta_{0,0})^{t_0}h_0(x) -h_0(x)^{-1}(x-\delta_{0,0})^{2a-p^s-t_0}-h_1(x)h_0(x)^{-1}(x-\delta_{0,0})^{a+t_1-t_0}h_1(x)h_0(x)^{-1})\}.$ In particular, since $u^2(x-\delta_{0,0})^{L}\in C$, we have $L\geq\min\{a,\,\beta\}.$ Also if we take $c_1(x)=1, c_2(x)=0$, we get $u^2(x-\delta_{0,0})^a \in C$ and if we take $c_1(x)=0, c_2(x)=1$, we get $u^2(x-\delta_{0,0})^{\beta} \in C$. Since $L$ is the smallest non-negative integer satisfying $u^2(x-\delta_{0,0})^{L} \in C$, we have $L\leq \min\{a,\beta\}$ and hence $L= \min\{a, \beta\}.$\hfill $\square$


\end{proof}
\begin{lemma}\label{third parameter computation}
        Let $L$ be the smallest non-negative integer such that $u^2(x-\delta_{0,0})^L\in \langle (x-\delta_{0,0})^{a}+u(x-\delta_{0,0})^{t_0}h_0(x)+u^2(x-\delta_{0,0})^{t_1}h_1(x),\, u(x-\delta_{0,0})^b+u^2(x-\delta_{0,0})^{t_2}h_2(x)\rangle$, where for 
    $i=0,1,$ and $2$, $h_i(x)$, if non-zero, is a unit in $\frac{\mathbb{F}_{p^m}[x]}{\langle (x-\delta_{0,0})^{p^s}\rangle}$. Then
        $$L= \begin{cases} 
             0 & \textnormal{ if } h_0(x)=0, \\
             0  & \textnormal{ if } h_0(x) \ne 0 \textnormal{ and } a<p^s-a+t_0, \\
              \Min\{b,\,\beta_1,\,\beta_2\}  & \textnormal{ if } h_0(x) \ne 0 \textnormal{ and } b\geq p^s-a+t_0, \\
             
   \end{cases}$$
   where $\beta_1:=\max\{k:(x-\delta_{0,0})^k\mid ((x-\delta_{0,0})^{t_0}h_0(x) -(x-\delta_{0,0})^{2a-p^s-t_0}h_0(x)^{-1}-(x-\delta_{0,0})^{a+t_1-t_0}h_1(x)h_0(x)^{-1})\}$ and $\beta_2:=\max\{k: (x-\delta_{0,0})^k\mid ((x-\delta_{0,0})^{t_2}h_2(x)-(x-\delta_{0,0})^{a+b-p^s-t_0}\\h_0(x)^{-1}-(x-\delta_{0,0})^{b+t_1-t_0}h_1(x)h_0(x)^{-1})\}.$
\end{lemma}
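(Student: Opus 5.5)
Write $y=x-\delta_{0,0}$. The plan is to follow the method of Lemmas \ref{first parameter computation} and \ref{second parameter computation}. Set
$$C=\langle y^{a}+uy^{t_0}h_0+u^2y^{t_1}h_1,\; uy^b+u^2y^{t_2}h_2\rangle .$$
We use two facts throughout: $y^{p^s}=u^2\delta_2$ up to a unit, which we absorb into the coefficients, and the unique representation of elements given in Section \ref{section3}.

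\textbf{Degenerate cases.} When $h_0=0$, or when $h_0\neq 0$ and $a<p^s-a+t_0$, the conclusion $L=0$ comes from the single-generator computation. The first generator alone already forces $u^2\in C$: multiply it by $y^{p^s-a}$ when $h_0=0$, and by $y^{p^s-a}-uy^{p^s-2a+t_0}h_0$ in the second case. Both computations appear verbatim in the proof of Lemma \ref{second parameter computation}. Adding the second generator does not change this.

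\textbf{Main case} ($h_0\ne0$ and $b\ge p^s-a+t_0$). Suppose $u^2y^\alpha\in C$. Write
$$u^2y^\alpha=(y^a+uy^{t_0}h_0+u^2y^{t_1}h_1)c_0+(uy^b+u^2y^{t_2}h_2)(d_0+ud_1),$$
with $c_0=c_{0,0}+y^{p^s-a}c_{0,1}+uc_1+u^2c_2$ and all $c_{\cdot},d_{\cdot}\in \mathbb{F}_{p^m}[x]/\langle y^{p^s}\rangle$. Compare coefficients of $u^0,u^1,u^2$, reducing $y^{p^s}$ to $u^2$ each time.

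From $u^0$ we get $c_{0,0}=0$. The $u^1$ coefficient gives
$$y^ac_1+y^{p^s-a+t_0}h_0c_{0,1}+y^bd_0=0$$
in $\mathbb{F}_{p^m}[x]/\langle y^{p^s}\rangle$. Since $b\ge p^s-a+t_0$ and $a\ge p^s-a+t_0$ (the latter follows from $b<a$), we can solve
$$c_{0,1}=-h_0^{-1}\bigl(y^{2a-p^s-t_0}c_1+y^{a+b-p^s-t_0}d_0\bigr)$$
modulo an annihilator term. That annihilator term is a multiple of $y^{a-t_0}$, so it contributes only multiples of $y^{a}$ and of $y^{b}$ at level $u^2$.

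Substitute into the $u^2$ coefficient. This expresses $y^\alpha$ as
$$y^ac_2+y^bd_1+E_1\,c_1+E_2\,d_0 ,$$
where $E_1$ is the expression defining $\beta_1$ and $E_2$ is the one defining $\beta_2$. The $h_1y^{t_1}\cdot y^{p^s-a}c_{0,1}$ term produces the $h_1h_0^{-1}$ summands, and the $y^{t_2}h_2d_0$ term produces the leading part of $E_2$. Since $a>b$, this yields $\alpha\ge\min\{b,\beta_1,\beta_2\}$.

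\textbf{Upper bound.} Each value is attained by a choice of coefficients, as in the earlier lemmas:
\begin{itemize}
\item $(d_1=1$, others $0)$ gives $u^2y^b\in C$;
\item $(c_1=1)$ gives $u^2y^{\beta_1}$ up to a unit;
\item $(d_0=1)$ gives $u^2y^{\beta_2}$ up to a unit.
\end{itemize}
Hence $L=\min\{b,\beta_1,\beta_2\}$.

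\textbf{Main obstacle.} The delicate step is the bookkeeping in the $u^1$ equation. Solving for $c_{0,1}$ requires dividing by $y^{p^s-a+t_0}$ inside a ring with nilpotents, so the solution is only determined modulo $\mathrm{Ann}(y^{p^s-a+t_0})=\langle y^{a-t_0}\rangle$. The freedom in $c_{0,1}$ is therefore a multiple of $y^{a-t_0}$. At level $u^2$, the $c_{0,1}$ term appears with coefficient $1$ from $y^{p^s}$ reducing to $u^2$, and with coefficient $y^{p^s-a+t_1}h_1$. I would check that the freedom contributes only terms divisible by $y^{b}$ there, or else by $y^{\min(\beta_1,\beta_2)}$. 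This is exactly where the hypothesis $b\ge p^s-a+t_0$ is needed, and if the check fails it is the step that could require an extra case.
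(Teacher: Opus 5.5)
Your route is the paper's route step for step: reduce $y^{p^s}$ before comparing coefficients, solve the $u^1$-equation for $c_{0,1}$, substitute into the $u^2$-equation, and exhibit witnesses. You have also correctly located the step the paper glosses over, since its proof treats $c_{0,1}$ as uniquely determined by its equation (4). But your way of disposing of that step is false, and you flag it yourself as the possible failure point.

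The $u^1$-equation fixes $c_{0,1}$ only modulo $y^{a-t_0}$. At level $u^2$ the unknown $c_{0,1}$ enters with the unit coefficient $\delta_2+y^{p^s-a+t_1}h_1$. So the free part $y^{a-t_0}e$ contributes a unit multiple of $y^{a-t_0}e$. That is a multiple of $y^b$ only if $a-t_0\ge b$, and nothing in the hypotheses gives this. Explicitly, write $g_1,g_2$ for the two generators of $C$ and take $c_1=d_0=0$, $c_{0,1}=y^{a-t_0}$. Then
\[
g_1\,y^{p^s-t_0}=u^2y^{a-t_0}\bigl(\delta_2+y^{p^s-a+t_1}h_1\bigr),
\]
so $u^2y^{a-t_0}\in C$ always. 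What the coefficient comparison actually yields is $L=\min\{b,\beta_1,\beta_2,a-t_0\}$.

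To reach the stated value you must show $\beta_1\le a-t_0$. Two facts help:
\begin{itemize}
\item $2a-p^s-t_0<a-t_0$;
\item the $h_1$-term of $E_1$ has valuation at least $a-t_0$.
\end{itemize}
So $\beta_1< a-t_0$ unless the two lowest terms $y^{t_0}h_0$ and $h_0^{-1}y^{2a-p^s-t_0}$ of $E_1$ cancel. Cancellation forces $2t_0=2a-p^s$, hence $p=2$. For odd $p$ this observation closes the annihilator issue.

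For $p=2$ the statement itself fails. Take $p^s=8$, $\delta_2=1$, $a=7$, $t_0=3$, $t_1=1$, $b=5$, $t_2=1$, $h_0=h_1=1$, $h_2=1+y^2+y^4$. Then $E_1=E_2=y^5$, so the formula gives $L=5$. But the display gives $u^2y^4(1+y^2)\in C$, and in fact $L=4$.

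Independently, $\delta_2$ cannot be ``absorbed into the coefficients''. It multiplies only the $c_{0,1}$ term, so the $h_0^{-1}$-terms of $E_1$ and $E_2$ must carry a factor $\delta_2$. Without it, the stated $\beta_2$ is wrong when $\delta_2\ne 1$. For instance, take $p^s=9$, $\delta_2=-1$, $a=7$, $t_0=2$, $b=4$, $t_2=0$, $h_0=1$, $h_1=0$, $h_2=1+y^5$. Then $g_1y^2-g_2=u^2(\delta_2-1-y^5)$, which is a unit times $u^2$, so $L=0$. The formula gives $\min\{4,2,5\}=2$.

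The paper's proof contains both omissions. A correct statement needs $a-t_0$ inside the minimum (or $p$ odd), and $\delta_2$ retained in the definitions of $\beta_1,\beta_2$.
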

\begin{proof}
Let $C=\langle (x-\delta_{0,0})^{a}+u(x-\delta_{0,0})^{t_0}h_0(x)+u^2(x-\delta_{0,0})^{t_1}h_1(x),\, u(x-\delta_{0,0})^b+u^2(x-\delta_{0,0})^{t_2}h_2(x)\rangle$ and let $\alpha$ be a non-negative integer such that $u^2(x-\delta_{0,0})^{\alpha} \in C$. Then there exists  
$c_i(x)=\underset{j=0}{\overset{p^s-1}{\sum}}a_{i,j}(x-\delta_{0,0})^j,\,d_k(x)=\underset{j=0}{\overset{p^s-1}{\sum}}b_{k,j}(x-\delta_{0,0})^j\in\frac{\mathbb{F}_{p^m}[x]}{\langle (x-\delta_{0,0})^{p^s}\rangle}$ for $i=0,1,2$ and $k=0,1$ such that
\begin{align*}
  u^2(x-\delta_{0,0})^{\alpha}&=(x-\delta_{0,0})^a c_0(x) +u\bigl(( x^2-\delta_{0,0})^{b}d_0(x)+(x-\delta_{0,0})^ac_1(x)+ (x-\delta_{0,0})^{t_0}h_0(x)\\&c_0(x)\bigr)+
  u^2\bigl((x-\delta_{0,0})^ac_2(x)+(x-\delta_{0,0})^{t_0}h_0(x)c_1(x) +(x-\delta_{0,0})^{t_1}h_1(x)c_0(x)\\&+(x-\delta_{0,0})^{t_2}h_2(x)d_0(x)+(x-\delta_{0,0})^{b}d_1(x)\bigr).
  \end{align*}
    Let $\underset{j=0}{\overset{p^s-1-a}{\sum}}a_{0,j}(x-\delta_{0,0})^j=c_{0,0}(x)$ and $\underset{j=0}{\overset{a-1}{\sum}}a_{0,{j+p^s-a}}(x-\delta_{0,0})^j=c_{0,1}(x)$. Then $c_0(x)=c_{0,0}(x)+(x-\delta_{0,0})^{p^s-a}c_{0,1}(x).$  Since $\langle(x-\delta_{0,0})^{p^s}\rangle=\langle u^2\rangle$, we have
  \begin{align*}
   u^2(x-\delta_{0,0})^{\alpha}&=(x-\delta_{0,0})^a c_{0,0}(x) +u\bigl(( x^2-\delta_{0,0})^{b}d_0(x)+(x-\delta_{0,0})^ac_1(x)+ (x-\delta_{0,0})^{t_0}h_0(x)\\&(c_{0,0}(x)+(x-\delta_{0,0})^{p^s-a}c_{0,1}(x))\bigr)+
  u^2\bigl((x-\delta_{0,0})^ac_2(x)+c_{0,1}(x)+(x-\delta_{0,0})^{t_0}\\&h_0(x)c_1(x) +(x-\delta_{0,0})^{t_1}h_1(x)(c_{0,0}(x)+(x-\delta_{0,0})^{p^s-a}c_{0,1}(x))+(x-\delta_{0,0})^{t_2}\\&h_2(x)d_0(x)+(x-\delta_{0,0})^{b}d_1(x)\bigr).
\end{align*}
Hence, 
\begin{align*}
&(x-\delta_{0,0})^a c_{0,0}(x) = 0, \tag{1}\\
&( x^2-\delta_{0,0})^{b}d_0(x)+(x-\delta_{0,0})^ac_1(x)+ (x-\delta_{0,0})^{t_0}h_0(x)(c_{0,0}(x)+(x-\delta_{0,0})^{p^s-a}c_{0,1}(x))= 0, \tag{2}\\
&(x-\delta_{0,0})^ac_2(x)+c_{0,1}(x)+(x-\delta_{0,0})^{t_0}h_0(x)c_1(x) +(x-\delta_{0,0})^{t_1}h_1(x)(c_{0,0}(x)+(x-\delta_{0,0})^{p^s-a}\\&c_{0,1}(x))+(x-\delta_{0,0})^{t_2}h_2(x)d_0(x)+(x-\delta_{0,0})^{b}d_1(x)=(x-\delta_{0,0})^{\alpha}. \tag{3}
\end{align*}

Equation (1) gives $c_{0,0}(x) = 0$. 
Using this in Equations (2) and (3), we get
\begin{align*}
&( x^2-\delta_{0,0})^{b}d_0(x)+(x-\delta_{0,0})^ac_1(x)+ (x-\delta_{0,0})^{p^s-a+t_0}h_0(x)c_{0,1}(x) = 0, \tag{4}\\
&(x-\delta_{0,0})^ac_2(x)+c_{0,1}(x)+(x-\delta_{0,0})^{t_0}h_0(x)c_1(x) +(x-\delta_{0,0})^{p^s-a+t_1}h_1(x)c_{0,1}(x)+\\&(x-\delta_{0,0})^{t_2}h_2(x)d_0(x)+(x-\delta_{0,0})^{b}d_1(x)=(x-\delta_{0,0})^{\alpha}. \tag{5}
\end{align*}

If $h_0(x)= 0$ then $((x-\delta_{0,0})^{a}+u^2(x-\delta_{0,0})^{t_1}h_1(x))(x-\delta_{0,0})^{p^s-a}\in C$, which gives $u^2(1+(x-\delta_{0,0})^{p^s-a+t_1}h_1(x))\in C$. Since $p^s-a+t_1 \geq 1,$ we have $u^2\in C.$ So, $L=0.$\\
So, let $h_0(x) \neq 0$. We consider two cases.\\
\textbf{Case 1.}  $b < p^s-a+t_0$.\\ In this case, we have
\begin{align*}
    ((x-\delta_{0,0})^{a}+&u^2(x-\delta_{0,0})^{t_1}h_1(x))(x-\delta_{0,0})^{p^s-a}-\\&(u(x-\delta_{0,0})^b+u^2(x-\delta_{0,0})^{t_2}h_2(x))(x-\delta_{0,0})^{p^s-a-b+t_0}\in C,
\end{align*}
that is, $$u^2\{1+(x-\delta_{0,0})^{p^s-a+t_1}h_1(x)-(x-\delta_{0,0})^{p^s-a-b+t_0+t_2}h_0(x)h_2(x)\}\in C.$$
Since $p^s-a-b+t_0>0$ and $p^s-a+t_1>0,$ we get $u^2\in C.$ Hence, $L=0.$\\
\textbf{Case 2.}  $b \geq p^s-a+t_0$.\\
In this case, Equation (4) gives
\[{c}_{0,1}(x)=-h_0(x)^{-1}\{(x-\delta_{0,0})^{a+b-p^s-t_0}d_0(x)+(x-\delta_{0,0})^{2a-p^s-t_0}c_1(x)\}.\]Using this in Equation (5), we get 
\begin{align*}
&(x-\delta_{0,0})^ac_2(x)+\{(x-\delta_{0,0})^{t_0}h_0(x) -(x-\delta_{0,0})^{2a-p^s-t_0}h_0(x)^{-1}-(x-\delta_{0,0})^{a+t_1-t_0}\\&h_1(x)h_0(x)^{-1}\}c_1(x)+\{(x-\delta_{0,0})^{t_2}h_2(x)-(x-\delta_{0,0})^{a+b-p^s-t_0}h_0(x)^{-1}-(x-\delta_{0,0})^{b+t_1-t_0}\\&h_1(x)h_0(x)^{-1}\}d_0(x)+(x-\delta_{0,0})^{b}d_1(x)=(x-\delta_{0,0})^{\alpha}.\tag{6}
\end{align*}
Thus $\alpha \geq \min\{a,\,b,\,\beta_1,\,\beta_2\}=\min\{b,\,\beta_1,\,\beta_2\}$, where $\beta_1:=\max\{k:(x-\delta_{0,0})^k\mid ((x-\delta_{0,0})^{t_0}h_0(x) -(x-\delta_{0,0})^{2a-p^s-t_0}h_0(x)^{-1}-(x-\delta_{0,0})^{a+t_1-t_0}h_1(x)h_0(x)^{-1})\}$ and $\beta_2:=\max\{k: (x-\delta_{0,0})^k\mid ((x-\delta_{0,0})^{t_2}h_2(x)-(x-\delta_{0,0})^{a+b-p^s-t_0}h_0(x)^{-1}-(x-\delta_{0,0})^{b+t_1-t_0}h_1(x)h_0(x)^{-1})\}.$ In particular, since $u^2(x-\delta_{0,0})^{L}\in C$, we have $L\geq\min\{b,\,\beta_1,\,\beta_2\}.$ Also, if we take $d_1(x)=1, c_2(x)=c_1(x)=d_0(x)=0,$, we get $u^2(x-\delta_{0,0})^b \in C$ and if we take $c_1(x)=1, c_2(x)=d_1(x)=d_0(x)=0,$, we get $u^2(x-\delta_{0,0})^{\beta_1} \in C$ and if we take $d_1(x)=1, c_2(x)=c_1(x)=d_0(x)=0$, we get $u^2(x-\delta_{0,0})^b \in C$ and if we take $d_0(x)=1, c_2(x)=c_1(x)=d_0(x)=0$, we get $u^2(x-\delta_{0,0})^{\beta_2} \in C$. Since $L$ is the smallest non-negative integer satisfying $u^2(x-\delta_{0,0})^{L} \in C$, we have $L\leq \min\{b,\beta_1,\,\beta_2\}$ and hence $L= \min\{b, \beta_1,\, \beta_2\}.$\hfill $\square$
\end{proof}\begin{remark}
    Note that when computing the parameters, one may be tempted to write the equations by making a direct comparison of $u^2(x-\delta_{0,0})^{\alpha}$ with its original expression in terms of the generators of the ideal. This, however, may be misleading. The presence of the relation $\langle(x-\delta_{0,0})^{p^s}\rangle=\langle u^2\rangle$ makes computations challenging. One must, therefore, be cautious and use this relation before writing the equations.
\end{remark}
Before we give the cardinality of the ideals of $R^{3,1}_\delta$, when  $\delta_1= 0$ and $\delta_2\neq 0$, we recall that the $i$-th $ (0\leq i \leq 2)$ torsion of $C$, denoted by $\Tor_i(C)$, is the ideal
$$\Tor_i(C)=\mu(\{c(x)\in \frac{\mathbb{F}_{p^m}[u]}{\langle u^3\rangle}[x]/\langle x^{p^s}-\delta_0-u^2\delta_2\rangle: c(x)u^i\in C\}).$$
$\frac{\mathbb{F}_{p^m}[x]}{\langle (x-\delta_{0,0})^{p^s}\rangle}$. $\Tor_0(C)$ is referred as residue of $C,$ and is denoted by $\textnormal{Res}(C).$ In fact, $\Tor_i(C)=\langle (x-\delta_{0,0})^{T_i}\rangle $ for some integer $T_i$ such that $0\leq T_i\leq p^s.$ $T_i$ is called the $i$-th torsional degree of $C$. In the following lemma, we give the $i$-th torsional degree for the ideals given in Theorem \ref{ideals of ps specific case}.
\begin{lemma}\label{Torsions}
    Let $C$ be an ideal of $\frac{R^3[x]}{\langle x^{p^s}-\delta_0 -u^2\delta_2\rangle}$ given in Theorem \ref{ideals of ps specific case}. Then
    \begin{enumerate}
        \item If $C=\langle 0 \rangle$, then $T_0(C)=T_1(C)=T_2(C)=p^s.$
        \item If $C=\langle 1 \rangle,$ then $T_0(C)=T_1(C)=T_2(C)=0.$
        \item If $C$ is described in Theorem \ref{ideals of ps specific case} (\ref{Type 2'}), then $T_0(C)=T_1(C)=p^s,$ and $T_2(C)=a.$
        \item If $C$ is described in Theorem \ref{ideals of ps specific case} (\ref{Type 3'}), then $T_0(C)=p^s,\, T_1(C)=a$, and $T_2(C)=L,$ where $L$ is as mentioned in Theorem \ref{ideals of ps specific case} (\ref{Type 3'}).
        \item If $C$ is described in Theorem \ref{ideals of ps specific case} (\ref{Type 4'}), then $T_0(C)=p^s,\, T_1(C)=a$, and $T_2(C)=b.$
         \item If $C$ is described in Theorem \ref{ideals of ps specific case} (\ref{Type 5'}), then $T_0(C)=a,\, T_1(C)=L$, and $T_2(C)=M,$ where $L$ and $M$ are as mentioned in Theorem \ref{ideals of ps specific case} (\ref{Type 5'}).
         \item If $C$ is described in Theorem \ref{ideals of ps specific case} (\ref{Type 6'}), then $T_0(C)=a,\, T_1(C)=M$, and $T_2(C)=b,$ where $M$ is as mentioned in Theorem \ref{ideals of ps specific case} (\ref{Type 6'}). 
         \item If $C$ is described in Theorem \ref{ideals of ps specific case} (\ref{Type 7'}), then $T_0(C)=a,\, T_1(C)=b$, and $T_2(C)=M,$ where $M$ is as mentioned in Theorem \ref{ideals of ps specific case} (\ref{Type 7'}).
         \item If $C$ is described in Theorem \ref{ideals of ps specific case} (\ref{Type 8'}), then $T_0(C)=a,\, T_1(C)=b$, and $T_2(C)=c$.
    \end{enumerate}  
    \end{lemma}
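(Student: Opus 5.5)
The plan is to compute each torsion $\Tor_i(C)$ directly from its definition. Write $y:=x-\delta_{0,0}$. We work in $R=\frac{R^3[x]}{\langle x^{p^s}-\delta_0-u^2\delta_2\rangle}$, where $y^{p^s}=u^2\delta_2$ and $\delta_2$ is a unit, so $\langle y^{p^s}\rangle=\langle u^2\rangle$. Two general facts will be used throughout.

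\emph{Monotonicity.} $\Tor_0(C)\subseteq\Tor_1(C)\subseteq\Tor_2(C)$, hence $T_0\ge T_1\ge T_2$.

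\emph{Characterisation of $T_i$.} $T_i$ is the least $k$ such that $C$ contains an element $u^iy^k+u^{i+1}(\cdots)$. This holds because $c(x)u^i\in C$ depends only on $c \bmod u^{3-i}$.

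Concretely, $T_2$ is the least $k$ with $u^2y^k\in C$. Likewise, $T_1$ is the least $k$ with $uy^k+u^2g\in C$ for some $g$. These are exactly the quantities $L$ and $M$ named in Theorem~\ref{ideals of ps specific case}.

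The trivial ideals and Type~\ref{Type 2'} are immediate. For Type~\ref{Type 2'}, every element of $C$ lies in $\langle u^2\rangle$, so $\Tor_0=\Tor_1=0$. Moreover, $\Tor_2(C)=\mu(\{c: u^2c\in C\})$, and $u^2c\in\langle u^2y^a\rangle$ iff $y^a\mid \mu(c)$, which gives $\langle y^a\rangle$.

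For every remaining type, the argument runs in two steps.

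\emph{Upper bounds.} Each generator exhibits an element of the required shape. For instance, the generator $y^a+u(\cdots)$ shows $T_0\le a$. The generator $uy^b+u^2(\cdots)$ shows $T_1\le b$, and $u^2y^c$ shows $T_2\le c$. For the types where a torsion is not given by a listed generator, the defining minimality of $L$ or $M$ supplies the bound directly: this covers Types~\ref{Type 3'}, \ref{Type 5'} and \ref{Type 7'} for $T_2$, and Types~\ref{Type 5'} and \ref{Type 6'} for $T_1$.

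\emph{Lower bounds.} Take an arbitrary element $\sum_j r_j G_j$ of $C$. Reduce it modulo $u$, respectively $u^2$, taking care to rewrite $y^{p^s}$ as $u^2\delta_2$ before comparing coefficients (as the Remark warns).

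For $T_0$, I will use that modulo $u$ every element of $C$ is a multiple of $y^a$ in $\mathbb{F}_{p^m}[x]/\langle y^{p^s}\rangle$. The other generators vanish mod $u$, and $y^a\cdot r$ reduces to $y^a\mu(r)$ even when the $y^{p^s}$ reduction is applied, since that only introduces $u^2$ terms. This gives $T_0=a$ for Types~\ref{Type 5'}--\ref{Type 8'}, and $T_0=p^s$ for ideals inside $\langle u\rangle$.

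For $T_1$ in Types~\ref{Type 3'} and \ref{Type 4'}, an element of $C$ with zero $u^0$-part has $u$-part $y^a\mu(r_1)$. This is because $u^2y^b$ contributes nothing at level $u$, and the $y^{p^s}$ relation pushes terms to level $u^3=0$. Hence $T_1=a$.

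For Types~\ref{Type 7'} and \ref{Type 8'}, I will show $T_1=b$. The $u$-part of an element of $C$ whose $u^0$-part is zero is $y^b\mu(d_0)+y^a\mu(c_1)+y^{t_0}h_0\mu(c_0)$, with $y^a\mu(c_0)=0$, i.e.\ $y^{p^s-a}\mid\mu(c_0)$. The critical term is $y^{t_0}h_0\mu(c_0)$. It is divisible by $y^{p^s-a+t_0}$, and the conditions of the types force $b\le p^s-a+t_0$ (equivalently $b<L$, with $L$ computed as in Lemma~\ref{second parameter computation}). With $a>b$, every such $u$-part is divisible by $y^b$.

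The $T_2$ lower bound for Types~\ref{Type 4'}, \ref{Type 6'} and \ref{Type 8'} is handled similarly. One uses the hypotheses $b<L$ (Types~\ref{Type 4'}, \ref{Type 6'}) or $c<M$ (Type~\ref{Type 8'}): any $u^2y^k\in C$ decomposes as an element of the ideal without the $u^2y^b$ (or $u^2y^c$) generator plus a multiple of $u^2y^b$ (or $u^2y^c$). Here I use that the $u^2$-component of $r\cdot u^2y^c$ is $y^c\mu(r)$. Therefore $k\ge\min(L,b)=b$, respectively $k\ge\min(M,c)=c$.

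For Type~\ref{Type 6'}, $T_1=M$ follows because adding $u^2y^b$ does not change the $u$-level component, so the least exponent is the same as for the principal part, namely $M$.

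The main obstacle is the bookkeeping in these lower bounds: splitting coefficients as $c_0=c_{0,0}+y^{p^s-a}c_{0,1}$, exactly as in Lemmas~\ref{first parameter computation}--\ref{third parameter computation}, so that the wrap-around $y^{p^s}\to u^2\delta_2$ is tracked correctly. In Types~\ref{Type 7'} and \ref{Type 8'} one must also check that the stated inequalities rule out a smaller $u$-exponent arising from $h_0$.
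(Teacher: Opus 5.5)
Your proposal is correct and follows essentially the paper's argument. The paper also computes each $\Tor_i(C)$ directly from the definition, but writes out only the trivial and Type~2 ideals and calls the rest similar; your upper-bound/lower-bound bookkeeping (with $y^{p^s}=u^2\delta_2$ and the $\min(L,b)$, $\min(M,c)$ reductions) supplies exactly those omitted cases.

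One small slip in Types~7 and~8. The $L$ there is the $u$-level parameter of $\langle G_1\rangle$, not the $u^2$-level quantity computed in Lemma~\ref{second parameter computation}. Also, $b<L$ is not equivalent to $b\le p^s-a+t_0$ but only implies it: for $h_0\neq 0$ the element $h_0^{-1}y^{p^s-a}G_1=uy^{p^s-a+t_0}+u^2(\cdots)$ lies in $\langle G_1\rangle$, which gives $L\le p^s-a+t_0$.
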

    \begin{proof}
        The proof is similar to that of Lemma 4.7 in \cite{AkaKanSar}. For the sake of completeness, we outline it here.\\
        If $C=\langle 0\rangle,$ then clearly $T_0(C)=T_1(C)=T_2(C)=p^s.$ If $C=\langle 1\rangle,$ then clearly $T_0(C)=T_1(C)=T_2(C)=0.$ If $C=\langle  u^2(x-\delta_{0,0})^{a}\rangle,$ then $T_0(C)=T_1(C)=p^s$. By definition, $\Tor_2(C)=\mu\{c(x)\in R^{3,1}_{\delta} \,|\, c(x)u^2\in C\}.$ Note that $\mu((x-\delta_{0,0})^{a})\in \Tor_2(C)\textnormal{ and hence } \langle (x-\delta_{0,0})^{a}\subset \Tor_2(C).$ Conversely, if $\mu(a(x))\in \Tor_2(C),$ for some $a(x)\in R^{3,1}_{\delta},$ then $a(x)u^2\in C\implies a(x)u^2=u^2(x-\delta_{0,0})^{a}h(x)$ for some $h(x)\in R^{3,1}_{\delta}.$ Then we have $u^2\underset{j=0}{\overset{p^s-1}{\sum}}a_{j}^{(0)}(x-\delta_{0,0})^j=u^2(x-\delta_{0,0})^{a}\underset{j=0}{\overset{p^s-1}{\sum}}h_{j}^{(0)}f(x)^j.$ Hence $\mu(a(x))\in \langle (x-\delta_{0,0})^{a}\rangle\implies \Tor_2(C)\subset \langle (x-\delta_{0,0})^{a}\rangle.$ 
        The procedure for calculating torsional degrees in other cases is similar.\hfill{$\square$}
    \end{proof}
    Using Lemma 4.5 and Theorem 4.6 in \cite{AkaKanSar} and Lemma \ref{Torsions} above, we have the following theorem.
    \begin{theorem}\label{Cardinality}
     Let $C$ be an ideal of $\frac{R^3[x]}{\langle x^{p^s}-\delta_0 -u^2\delta_2\rangle}$ given in Theorem \ref{ideals of ps specific case}. Then
            \begin{enumerate}
        \item If $C=\langle 0 \rangle$, then $|C|=1.$
        \item If $C=\langle 1 \rangle,$ then $|C|=p^{3mp^s}.$
        \item If $C$ is described in Theorem \ref{ideals of ps specific case} (\ref{Type 2'}), then $|C|=p^{m(p^s-a)}.$
        \item If $C$ is described in Theorem \ref{ideals of ps specific case} (\ref{Type 3'}), then $|C|=p^{m(2p^s-a-L)},$ where $L$ is as mentioned in Theorem \ref{ideals of ps specific case} (\ref{Type 3'}).
        \item If $C$ is described in Theorem \ref{ideals of ps specific case} (\ref{Type 4'}), then $|C|=p^{m(2p^s-a-b)}.$
         \item If $C$ is described in Theorem \ref{ideals of ps specific case} (\ref{Type 5'}), then $|C|=p^{m(3p^s-a-L-M)},$ where $L$ and $M$ are as mentioned in Theorem \ref{ideals of ps specific case} (\ref{Type 5'}).
         \item If $C$ is described in Theorem \ref{ideals of ps specific case} (\ref{Type 6'}), then $|C|=p^{m(3p^s-a-b-M)},$ where $M$ is as mentioned in Theorem \ref{ideals of ps specific case} (\ref{Type 6'}).
         \item If $C$ is described in Theorem \ref{ideals of ps specific case} (\ref{Type 7'}), then $|C|=p^{m(3p^s-a-b-M)},$ where $M$ is as mentioned in Theorem \ref{ideals of ps specific case} (\ref{Type 7'}).
         \item If $C$ is described in Theorem \ref{ideals of ps specific case} (\ref{Type 8'}), then $|C|=p^{m(3p^s-a-b-c)}.$
    \end{enumerate}
    \end{theorem}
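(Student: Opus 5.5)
The plan is to reduce the count to the torsional degrees computed in Lemma \ref{Torsions}. The key step is a filtration argument: for an ideal $C$ of $R^{3,1}_\delta$, consider the chain $C\supseteq C\cap\langle u\rangle\supseteq C\cap\langle u^2\rangle\supseteq 0$. This gives
$$|C|=|\Tor_0(C)|\,|\Tor_1(C)|\,|\Tor_2(C)|.$$
To justify it, I would define, for $i=0,1,2$, the map $C\cap\langle u^i\rangle\to \frac{\mathbb{F}_{p^m}[x]}{\langle (x-\delta_{0,0})^{p^s}\rangle}$ by sending $u^i c(x)\mapsto \mu(c(x))$. I need to check that this map is well defined and that its kernel is exactly $C\cap\langle u^{i+1}\rangle$.

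Well-definedness follows from the fact that $u^i c=u^i c'$ forces $c-c'\in\langle u^{3-i}\rangle$. For $i\le 2$ this lies in $\langle u\rangle=\ker\mu$. Second, the image is exactly $\Tor_i(C)$ by the definition of the torsion recalled before Lemma \ref{Torsions}. Third, the kernel consists of the $u^ic$ with $c\in\langle u\rangle$, i.e.\ $C\cap\langle u^{i+1}\rangle$. This is precisely the content of Lemma 4.5 and Theorem 4.6 of \cite{AkaKanSar}, which the paper allows me to invoke.

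Since $\Tor_i(C)=\langle (x-\delta_{0,0})^{T_i}\rangle$ in $\frac{\mathbb{F}_{p^m}[x]}{\langle (x-\delta_{0,0})^{p^s}\rangle}$, its $\mathbb{F}_{p^m}$-dimension is $p^s-T_i$. Hence
$$|C|=p^{m(3p^s-T_0-T_1-T_2)}.$$

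It then remains to substitute the values from Lemma \ref{Torsions} type by type. For $\langle 0\rangle$ all $T_i=p^s$, giving $1$. For $\langle 1\rangle$ all $T_i=0$, giving $p^{3mp^s}$. For Type \ref{Type 2'} we get $p^{m(p^s-a)}$. For Type \ref{Type 3'} we get $p^{m(2p^s-a-L)}$, and for Type \ref{Type 4'} we get $p^{m(2p^s-a-b)}$. For Types \ref{Type 5'}--\ref{Type 8'} we get $3p^s$ minus the respective sums $a+L+M$, $a+M+b$, $a+b+M$, and $a+b+c$. Each of these matches the statement.

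The main obstacle is not the counting, which is routine. It is the reliance on the torsional degrees in Lemma \ref{Torsions}, especially that $T_1$ and $T_2$ equal the minimal exponents $L,M$ from the definitions in Theorem \ref{ideals of ps specific case}. The delicate part is the relation $\langle (x-\delta_{0,0})^{p^s}\rangle=\langle u^2\rangle$, which could make $\mu$-images larger than expected. Since Lemma \ref{Torsions} is already established, I take it as given and only confirm that the identification of the image with $\Tor_i(C)$ uses nothing beyond $\ker\mu=\langle u\rangle$.
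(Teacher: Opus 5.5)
Your proposal is correct and follows the paper's route: the paper obtains the theorem by combining the product formula $|C|=\prod_{i=0}^{2}|\Tor_i(C)|=p^{m(3p^s-T_0-T_1-T_2)}$ (Lemma 4.5 and Theorem 4.6 of \cite{AkaKanSar}) with the torsional degrees of Lemma \ref{Torsions}, and you do the same. Your filtration $C\supseteq C\cap\langle u\rangle\supseteq C\cap\langle u^2\rangle\supseteq 0$ just proves that cited product formula directly; its well-definedness step holds because $R^{3,1}_\delta$ is free over $R^3$ ($x^{p^s}-\delta$ is monic), which gives $\mathrm{Ann}(u^i)=\langle u^{3-i}\rangle$.
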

\section{Ideals of \texorpdfstring{$R^{t,2}_{\delta}$}{}}\label{section4}
In this section, we use the structure of ideals of $R^{t,n}_{\delta}:=\frac{R^t [x]}{\langle x^{np^s}-\delta\rangle},$ where $\delta = \delta_0+u\delta_1+\dots+u^{t-1}\delta_{t-1}$ is a unit in $R^t:=\frac{\mathbb{F}_{p^m}[u]}{\langle u^t \rangle}$, $\delta_i\in \mathbb{F}_{p^m}$ for $0\leq i \leq t-1$ given in Section \ref{section2} to discuss the case when $n=2$ and $p$ is an odd prime, that is, to discuss the ideals of $R^{t,2}_{\delta}:=\frac{R^t [x]}{\langle x^{2p^s}-\delta\rangle}$. We note that $\delta_0\ne 0$ since $\delta$ is a unit.
 
 We first assume that $\delta$ is a square in $R^t$, that is, there exists $\tilde{\delta}=\tilde{\delta}_0+u\tilde{\delta}_1+\dots+u^{t-1}\tilde{\delta}_{t-1}\in R^t$ such that $\tilde{\delta}^2=\delta$. Then,
$$x^{2p^s}-\delta=(x^{p^s}-\tilde{\delta})(x^{p^s}+\tilde{\delta}).$$
Also, since $\delta$ is a unit in $R^t$, $\tilde{\delta}$ is a unit in $R^t$. Moreover, as $\tilde{\delta}\ne -\tilde{\delta},$ we have
$$R^{t,2}_{\delta}\cong \frac{R^t[x]}{\langle x^{p^s}-\tilde{\delta}\rangle} \times\frac{R^t[x]}{\langle x^{p^s}+\tilde{\delta} \rangle}.$$
Thus, the ideals of the ring $R^{t,2}_{\delta}$ are of the form $I_1 \times I_2$, where $I_1$ is an ideal of $R^{t,1}_{\tilde\delta}$ and $I_2$ is an ideal of $R^{t,1}_{-\tilde\delta}$. We can, now, as in Section \ref{section3}, get the structure of the ideal $I_1$ by taking $n=1$ and replacing $x-\delta_{0,0}$ with $x-\tilde{\delta}_{0,0}$ in Theorem \ref{nps ideals} and that of $I_2$ by taking $n=1$ and replacing $x-\delta_{0,0}$ with $x+\tilde{\delta}_{0,0}$ in Theorem \ref{nps ideals}.

Next, we assume that $\delta$ is not a square. By Proposition \ref{delta power n iff delta_0 power n}, $\delta_{0}$ is not a square. Also, using the notation in Section \ref{section2}, if $\delta_{0,0} \in \mathbb{F}_{p^m}$ is such that $\delta_{0,0}^{p^s}=\delta_0$ then $\delta_{0,0}$ is also not a square. Thus, $x^2-\delta_{0,0} \in \mathbb{F}_{p^m}[x]$ is an irreducible polynomial. 
Hence, taking $n=2$ in Theorem \ref{nps ideals}, we get the generators and the structure of ideals of $R^{t,2}_{\delta}$, in this case.\\
Note that, as in Section \ref{section2} and Section \ref{section3}, if $k$ is the smallest integer such that $\delta_k \neq0$, where $1\leq k\leq t-1$, then $\langle (x^2-\delta_{0,0})^{p^s} \rangle=\langle u^k\rangle$, $x^2-\delta_{0,0}$ is nilpotent in $R^{t,2}_{\delta}$ with nilpotency index $\lceil \frac{t}{k}\rceil p^s$, and an arbitrary element $c(x)$ of $R^{t,2}_{\delta}$ can be uniquely written as
        \begin{align*}
            c&(x)=\underset{j=0}{\overset{p^s-1}{\sum}}(a_{j}^{0}+b_{j}^{0}x)(x^2-\delta_{0,0})^j+\dots+u^{t-1}\underset{j=0}{\overset{p^s-1}{\sum}}(a_{j}^{t-1}+b_{j}^{t-1}x)(x^2-\delta_{0,0})^j,
        \end{align*}
        where $a_{j}^k,b_{j}^k\in \mathbb{F}_{p^m}$ for $0\leq j\leq p^s-1$ and $0\leq k\leq t-1.$ Further, an element $c(x)$, as given above, is a non-unit if and only if $a_{0}^0=0$ and $b_{0}^0=0$.
        
Also, as in  Corollary \ref{chainfornpscase}, the ideals of $R^{t,2}_{\delta}$ form a chain if and only if $\delta_1\ne 0$. In fact, the ideals  of $R^{t,2}_{\delta}$, in this case, are
        $$\langle 0\rangle\subset \langle (x^2-\delta_{0,0})^{tp^s-1}\rangle \subset \langle (x^2-\delta_{0,0})^{tp^s-2}\rangle \subset \dots \subset \langle x^2-\delta_{0,0}\rangle \subset \langle 1\rangle.$$ 
Moreover, for $0\leq i\leq tp^s$, the number of codewords in $\langle (x^2-\delta_{0,0})^i\rangle$ is $p^{2m(tp^s-i)}$.

In the next theorem, we summarize the description of ideals of the ring $R^{t,2}_{\delta}$ in all cases.
\begin{theorem}
    \label{final theorem for section 4}Let $\delta=\delta_0+u\delta_1+\dots+u^{t-1}\delta_{t-1} \in R^t$ be a unit and let $\delta_{0, 0}^{p^s}=\delta_0$.
\begin{enumerate}
    \item[(a)] If $\delta$ is a square in $R^t$, say $\delta=\tilde{\delta}^2$, then $R^{t, 2}_\delta\cong R^{t, 1}_{\tilde\delta}\times R^{t, 1}_{-\tilde\delta}$ and any ideal of $R^{t, 2}_\delta$ has the form $I_1 \times I_2$ where $I_1$ and $I_2$ are ideals of $R^{t, 1}_{\tilde{\delta}}$ and $R^{t, 1}_{-\tilde\delta}$, respectively. The structure of $I_1$ and $I_2$ can be obtained by using Theorem \ref{nps ideals} with appropriate substitutions, as described above.
\item[(b)]  If $\delta$ is not a square then the ideals of the ring $R^{t,2}_{\delta}$ are given by Theorem \ref{nps ideals} for $n=2$.
\end{enumerate}
\end{theorem}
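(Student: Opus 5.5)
The statement splits according to whether $\delta$ is a square in $R^t$, so I will handle the two parts separately. Both reduce to results already proved: part (a) to the Chinese Remainder Theorem plus Theorem \ref{nps ideals} with $n=1$, and part (b) to Theorem \ref{nps ideals} with $n=2$.

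\textbf{Part (a).} Write $\delta=\tilde\delta^2$. Then
$$x^{2p^s}-\delta=(x^{p^s}-\tilde\delta)(x^{p^s}+\tilde\delta).$$
\begin{itemize}
\item Since $\delta$ is a unit, so is $\tilde\delta$. Because $p$ is odd, $2$ is a unit in $R^t$, and hence $2\tilde\delta$ is a unit.
\item This gives the explicit coprimality relation
$$(2\tilde\delta)^{-1}\big[(x^{p^s}+\tilde\delta)-(x^{p^s}-\tilde\delta)\big]=1.$$
\item The Chinese Remainder Theorem then yields $R^{t,2}_\delta\cong R^{t,1}_{\tilde\delta}\times R^{t,1}_{-\tilde\delta}$.
\item In a finite product of commutative rings with unity, every ideal is a product $I_1\times I_2$ of ideals of the factors. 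To see this, multiply by the idempotents $(1,0)$ and $(0,1)$.
\item Each factor $R^{t,1}_{\pm\tilde\delta}$ has $n=1$, and $x-(\pm\tilde\delta)_{0,0}$ is linear, hence irreducible. So Theorem \ref{nps ideals} with $n=1$ describes $I_1$ and $I_2$.
\item Note that $(-\tilde\delta)_{0,0}=-\tilde\delta_{0,0}$, because $(-1)^{p^s}=-1$ for odd $p$. This justifies the substitution $x+\tilde\delta_{0,0}$ for the second factor.
\end{itemize}

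\textbf{Part (b).} Suppose $\delta$ is not a square.
\begin{itemize}
\item By Proposition \ref{delta power n iff delta_0 power n} with $n=2$ (valid since $p\nmid 2$), $\delta_0$ is not a square in $\mathbb{F}_{p^m}$.
\item Next, $\delta_{0,0}$ is not a square. Otherwise $\delta_{0,0}=\gamma^2$ would give $\delta_0=\delta_{0,0}^{p^s}=(\gamma^{p^s})^2$, a contradiction.
\item Hence $x^2-\delta_{0,0}$ has no root in $\mathbb{F}_{p^m}$. Being of degree $2$, it is irreducible.
\item The hypotheses of Theorem \ref{nps ideals} therefore hold with $n=2$, and that theorem gives the ideals.
\end{itemize}

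The only mildly delicate points are the two places where $p$ odd is used: invertibility of $2\tilde\delta$ for coprimality in (a), and $(-1)^{p^s}=-1$ for identifying the constant $-\tilde\delta_{0,0}$. The rest is routine.
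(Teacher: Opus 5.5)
Your proposal is correct and follows the paper's own route. In (a) you use the factorization $x^{2p^s}-\delta=(x^{p^s}-\tilde\delta)(x^{p^s}+\tilde\delta)$, the Chinese Remainder Theorem, and Theorem \ref{nps ideals} with $n=1$; in (b) you use Proposition \ref{delta power n iff delta_0 power n} to get irreducibility of $x^2-\delta_{0,0}$ and then Theorem \ref{nps ideals} with $n=2$. Your explicit comaximality identity via the invertibility of $2\tilde\delta$ (for odd $p$) is a tighter justification than the paper's remark that $\tilde\delta\neq-\tilde\delta$.
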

Now, as a special case, we give all the types of ideals of $R^{3,2}_\delta$ and give their cardinalities, that is, we discuss the case when $t=3$. Let $\delta=\delta_0 + u\delta_1 + u^2\delta_2 \in R^3$ be a unit. If $\delta$ is a square, say $\delta=\tilde{\delta}^2$, then as discussed at the beginning of this section, any ideal $I$ of $\frac{R^t[x]}{\langle x^{2p^s}-\delta\rangle}$ has the form $I_1 \times I_2 $ where $I_1$ is an ideal of $\frac{R^t[x]}{\langle x^{p^s}+\tilde{\delta}\rangle}$ and $I_2$ is an ideal of $\frac{R^t[x]}{\langle x^{p^s}-\tilde{\delta}\rangle}$. We can now use Lemma \ref{Torsions} and Theorem \ref{Cardinality} to get the torsional degrees and cardinalities of $I_1$ and $I_2$. The cardinality of $I$ is, then, the product of the cardinalities of $I_1$ and $I_2$. 
If $\delta$ is not a square and $\delta_1 \ne 0$ or $\delta_1=0,\,\delta_2= 0$, we refer the reader to \cite{Sriwirach}. 
Therefore, let $\delta$ be not a square and $\delta_1=0,\,\delta_2\neq 0$, that is, $\delta=\delta_0 + u^2\delta_2$.   
    With these notations, we, in the next theorem, list all the different types of ideals of $R^{3,2}_{\delta}$, in this case. 
    \begin{theorem}\label{ideals whose parameters are to be computed}
       Suppose that $\delta=\delta_0+u^{2}\delta_{2}\in R^3$ be a unit which is not a square and let $\delta_{0, 0}^{p^s}=\delta_0$. Then the ideals of the ring $R^{3,2}_{\delta}$ have one of the following eight types:
       \begin{enumerate}
          \item{\label{Type 1}} $\langle 0 \rangle,\, \langle 1 \rangle.$
           \item{\label{Type 2}} $\langle u^2 (x^2-\delta_{0,0})^a \rangle$ where $0\leq a \leq p^s-1.$
           \item{\label{Type 3}} $\langle u(x^2-\delta_{0,0})^a+u^2(x^2-\delta_{0,0})^{t}h(x)\rangle$ where $0\leq t<L< a\leq p^s-1,$ $h(x)$ is either $0$ or a unit in $\frac{\mathbb{F}_{p^m}[x]}{\langle (x^2-\delta_{0,0})^{p^s}\rangle},$ and $L$ is the smallest non-negative integer such that $u^2(x^2-\delta_{0,0})^L\in \langle u(x^2-\delta_{0,0})^a+u^2(x^2-\delta_{0,0})^{t}h(x)\rangle.$
           
           \item{\label{Type 4}}$\langle u(x^2-\delta_{0,0})^a+u^2(x^2-\delta_{0,0})^{t}h(x),\, u^2(x^2-\delta_{0,0})^b\rangle$ where $0\leq t<b<L< a\leq p^s-1$, $h(x)$ is either $0$ or a unit in $\frac{\mathbb{F}_{p^m}[x]}{\langle (x^2-\delta_{0,0})^{p^s}\rangle},$ and $L$ the smallest non-negative integer such that $u^2(x^2-\delta_{0,0})^{L}\in \langle u(x^2-\delta_{0,0})^a+u^2(x^2-\delta_{0,0})^{t}h(x)\rangle.$
           
           \item{\label{Type 5}} $\langle (x^2-\delta_{0,0})^{a}+u(x^2-\delta_{0,0})^{t_0}h_0(x)+u^2(x^2-\delta_{0,0})^{t_1}h_1(x)\rangle$ where $0\leq t_1<t_0<L< a\leq p^s-1,\,0\leq t_1<M<L,\,$ $h_i(x)$ is either $0$ or a unit in $\frac{\mathbb{F}_{p^m}[x]}{\langle (x^2-\delta_{0,0})^{p^s}\rangle}$ for $i=0,\,1$ and $L,M$ are the smallest non-negative integers such that $u(x^2-\delta_{0,0})^L+u^2g(x),\, u^2(x^2-\delta_{0,0})^M\in \langle (x^2-\delta_{0,0})^{a}+u(x^2-\delta_{0,0})^{t_0}h_0(x)+u^2(x^2-\delta_{0,0})^{t_1}h_1(x)\rangle$ for some $g(x)\in \frac{\mathbb{F}_{p^m}[x]}{\langle (x^2-\delta_{0,0})^{p^s}\rangle}.$

           \item{\label{Type 6}}$\langle (x^2-\delta_{0,0})^{a}+u(x^2-\delta_{0,0})^{t_0}h_0(x)+u^2(x^2-\delta_{0,0})^{t_1}h_1(x),\,u^2 (x^2-\delta_{0,0})^b \rangle\rangle$ for $0\leq t_1<t_0< a\leq p^s-1,\, 0\leq t_1<b <L<a,\,0\leq t_0<M<L$, $h_i(x)$ is either $0$ or a unit in $\frac{\mathbb{F}_{p^m}[x]}{\langle (x^2-\delta_{0,0})^{p^s}\rangle}$ for $i=0,\,1,$ and $L,M$ are the smallest non-negative integers such that $u^2(x^2-\delta_{0,0})^L,\, u(x^2-\delta_{0,0})^M+u^2g(x)\in \langle (x^2-\delta_{0,0})^{a}+u(x^2-\delta_{0,0})^{t_0}h_0(x)+u^2(x^2-\delta_{0,0})^{t_1}h_1(x) \rangle$ for some $g(x)\in \frac{\mathbb{F}_{p^m}[x]}{\langle (x^2-\delta_{0,0})^{p^s}\rangle}.$

           \item{\label{Type 7}} $\langle (x^2-\delta_{0,0})^{a}+u(x^2-\delta_{0,0})^{t_0}h_0(x)+u^2(x^2-\delta_{0,0})^{t_1}h_1(x),\, u(x^2-\delta_{0,0})^b+u^2(x^2-\delta_{0,0})^{t_2}h_2(x)\rangle$ where $0\leq t_1<t_0<b<L< a\leq p^s-1\,, 0\leq t_2<b,\,0\leq t_i<M<b$ for $i=1,2$, $h_i(x)$ is either $0$ or a unit in $\frac{\mathbb{F}_{p^m}[x]}{\langle (x^2-\delta_{0,0})^{p^s}\rangle}$ for $0\leq i\leq 2,$ $L,M$ are the smallest non-negative integers such that $u(x^2-\delta_{0,0})^L+u^2g(x)\in \langle (x^2-\delta_{0,0})^{a}+u(x^2-\delta_{0,0})^{t_0}h_0(x)+u^2(x^2-\delta_{0,0})^{t_1}h_1(x)\rangle$ and $u^2(x^2-\delta_{0,0})^M\in \langle (x^2-\delta_{0,0})^{a}+u(x^2-\delta_{0,0})^{t_0}h_0(x)+u^2(x^2-\delta_{0,0})^{t_1}h_1(x),\, u(x^2-\delta_{0,0})^b+u^2(x^2-\delta_{0,0})^{t_2}h_2(x)\rangle$ for some $g(x)\in \frac{\mathbb{F}_{p^m}[x]}{\langle (x^2-\delta_{0,0})^{p^s}\rangle}.$

           \item{\label{Type 8}}$\langle (x^2-\delta_{0,0})^{a}+u(x^2-\delta_{0,0})^{t_0}h_0(x)+u^2(x^2-\delta_{0,0})^{t_1}h_1(x),\, u(x^2-\delta_{0,0})^b+u^2(x^2-\delta_{0,0})^{t_2}h_2(x),\, \\u^2(x^2-\delta_{0,0})^c\rangle$ where $0\leq t_1<t_0< a\leq p^s-1,\, 0\leq t_1,t_2<c<b<a,\,0\leq t_0<b$,  $h_i(x)$ is either $0$ or a unit in $\frac{\mathbb{F}_{p^m}[x]}{\langle (x^2-\delta_{0,0})^{p^s}\rangle}$ for $0\leq i\leq 2$, $c<M$, the smallest non-negative integer such that $u^2(x^2-\delta_{0,0})^M\in \langle (x^2-\delta_{0,0})^{a}+u(x^2-\delta_{0,0})^{t_0}h_0(x)+u^2(x^2-\delta_{0,0})^{t_1}h_1(x),\, u(x^2-\delta_{0,0})^b+u^2(x^2-\delta_{0,0})^{t_2}h_2(x)\rangle$, and $b<L$, the smallest non-negative integer such that $u(x^2-\delta_{0,0})^L+u^2g(x)\in \langle (x^2-\delta_{0,0})^{a}+u(x^2-\delta_{0,0})^{t_0}h_0(x)+u^2(x^2-\delta_{0,0})^{t_1}h_1(x)\rangle$ for some $g(x)\in \frac{\mathbb{F}_{p^m}[x]}{\langle (x^2-\delta_{0,0})^{p^s}\rangle}.$
       \end{enumerate}
    \end{theorem}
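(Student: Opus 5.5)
Since $\delta=\delta_0+u^2\delta_2$ is not a square, Proposition \ref{delta power n iff delta_0 power n} shows that $\delta_0$ is not a square, and hence neither is $\delta_{0,0}$. So $x^2-\delta_{0,0}$ is irreducible over $\mathbb{F}_{p^m}$, and the plan is to specialise Theorem \ref{nps ideals} to $n=2$, $t=3$. With $t=3$ the index $i$ in part (ii) of Theorem \ref{nps ideals} takes only the values $0$ and $1$. The possible generators inside $\langle u\rangle$ are therefore
\[
\theta_0=u^2(x^2-\delta_{0,0})^{c}
\quad\text{and}\quad
\theta_1=u(x^2-\delta_{0,0})^{b}+u^2(x^2-\delta_{0,0})^{t_2}h_2(x),
\]
where $h_2$ is $0$ or a unit. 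Part (iii) contributes one further generator, $(x^2-\delta_{0,0})^a+u(x^2-\delta_{0,0})^{t_0}h_0(x)+u^2(x^2-\delta_{0,0})^{t_1}h_1(x)$. To get this normal form I would expand $uf(x)$ using the unique representation of elements of $R^{3,2}_\delta$ in powers of $x^2-\delta_{0,0}$ with coefficients $a_j+b_jx$. The leading nonzero block at each $u$-level can then be written as $(x^2-\delta_{0,0})^{t}h(x)$ with $h$ a unit, using Corollary \ref{deg n-1 polynomial in nps case}: every nonzero $a+bx$ is invertible.

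The eight types then come from enumerating which subsets of the three generator slots occur. If the ideal lies in $\langle u\rangle$, we may have $\theta_0$ alone (Type 2), $\theta_1$ alone (Type 3), or both (Type 4). Otherwise the degree-zero generator is present together with any subset of $\{\theta_1,\theta_0\}$, giving Types 5–8. Type 1 consists of the trivial ideals.

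Next, I would justify the inequality constraints in each type. These are exactly the conditions that stop a generator from being redundant. For example, in Type 4 we need $b<L$, where $L$ is the least exponent with $u^2(x^2-\delta_{0,0})^L\in\langle\theta_1\rangle$; otherwise $\theta_0$ already lies in $\langle\theta_1\rangle$. In Types 7 and 8 we need $b<L$, where $L$ is the least exponent for which a $u$-level element of the form $u(x^2-\delta_{0,0})^L+u^2g$ lies in the principal ideal of the first generator. The conditions $t_j<$ (relevant exponent) come from reduction: any term $u^2(x^2-\delta_{0,0})^{t}h$ with $t\ge M$ can be absorbed by subtracting a multiple of an element already in the ideal. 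This mirrors the argument for Theorem \ref{ideals of ps specific case}, with $x-\delta_{0,0}$ replaced throughout by $x^2-\delta_{0,0}$.

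Throughout, the relation $\langle (x^2-\delta_{0,0})^{p^s}\rangle=\langle u^2\rangle$ is used to cap all exponents at $p^s-1$. This holds because $(x^2-\delta_{0,0})^{p^s}=u^2\delta_2$ and $\delta_2$ is a unit.

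The main obstacle is making the redundancy-reduction rigorous when the coefficients $h_j(x)$ have degree-one parts $a+bx$. In the $n=1$ case these coefficients were scalars. Here the key tool is that they are still units by Corollary \ref{deg n-1 polynomial in nps case}, so "divisible by $(x^2-\delta_{0,0})^k$" behaves exactly as valuation in a chain-like setting. I would first verify that $\mathbb{F}_{p^m}[x]/\langle (x^2-\delta_{0,0})^{p^s}\rangle$ is a chain ring with maximal ideal generated by $x^2-\delta_{0,0}$. With that in hand, the $n=1$ arguments transfer verbatim.
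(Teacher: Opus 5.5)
Your reduction follows the paper's own route. The paper gives no separate proof of this theorem: once $x^2-\delta_{0,0}$ is known to be irreducible (which you establish correctly), it specialises Theorem~\ref{nps ideals} to $n=2$, $t=3$. Then $i\in\{0,1\}$, the three possible generators are the ones you write down, and Types 2--8 are the seven ways of choosing which of them occur. Your normal form for $uf(x)$ via Corollary~\ref{deg n-1 polynomial in nps case} is also fine.

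The step that will not go through is your claim that the printed inequalities are ``exactly the conditions that stop a generator from being redundant''. Note also that the $n=1$ theorem you want to mirror is itself stated without proof and carries the same constraints, so there is no argument to transfer. Write $f=x^2-\delta_{0,0}$.

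\begin{itemize}
\item \textbf{Type 6 is empty as printed.} If $uf^M+u^2g\in C$, then $u^2f^M=u(uf^M+u^2g)\in C$. So the $u^2$-threshold never exceeds the $u$-threshold. In Type 6, $L$ is the $u^2$-threshold and $M$ the $u$-threshold, so $L\le M$, and the printed condition $t_0<M<L$ can never hold.
\item \textbf{Type 3 misses ideals.} For $h=0$ one has $L=a$, as the $n=2$ version of Lemma~\ref{first parameter computation} records. The printed $L<a$ therefore excludes $\langle uf^a\rangle$, and in particular $\langle u\rangle$.
\item \textbf{Types 5--8 miss ideals.} Each of them forces $a\ge 2$, via $L<a$, $b<L<a$ or $c<b<a$. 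Hence neither $\langle f\rangle$ nor the maximal ideal $\langle u,f\rangle$ (whose residue is $\langle f\rangle$) appears anywhere in the list.
\item \textbf{Some true conditions are not redundancy conditions.} For $t_1<t_0$, redundancy alone only gives $t_1<M$. The inequality $t_1<t_0$ has to be extracted from the actual computation of $M$, which involves the carry $f^{p^s}=u^2\delta_2$.
\end{itemize}

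So an honest redundancy analysis gives the same eight generator shapes but with different constraints. Examples are: $h=0$ or $t<L\le a$ in Type 3; $t<b<L\le a$ in Type 4; and $L\le M$ in Type 6. You should state and prove the constraints you actually obtain, rather than assert the printed ones.
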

 As in Section \ref{section3}, we can give the torsional degree as well as cardinalities of the ideals in Theorem \ref{ideals whose parameters are to be computed}. The results are similar to Lemma \ref{Torsions} and Theorem \ref{Cardinality}. The critical piece for obtaining the torsional degree and cardinality is once again obtaining the parameters used in the description of ideals in Theorem \ref{ideals whose parameters are to be computed}. For the sake of completeness, we state here the results giving the parameters. The proofs of these results and computations are similar to those of Lemmas \ref{first parameter computation}, Lemma \ref{second parameter computation}, and Lemma \ref{third parameter computation}, respectively. The main differences include using $c_i(x)=\underset{j=0}{\overset{p^s-1}{\sum}}(a_{i,j}+b_{i,j}x)(x^2-\delta_{0,0})^j$, where $a_{i,j},b_{i,j}\in \mathbb{F}_{p^m}$ for $i=0,1,\,2$ and $0\leq j \leq p^{p^s-1}$, writing $c_0(x)=c_{0,0}(x)+(x^2-\delta_{0,0})^{p^s-a}c_{0,1}(x)$ where $c_{0,0}(x)=\underset{j=0}{\overset{p^s-1-a}{\sum}}(a_{0,j}+b_{0,j}x)(x^2-\delta_{0,0})^j$ and $c_{0,1}(x)=\underset{j=0}{\overset{a-1}{\sum}}(a_{0,{j+p^s-a}}+b_{0,{j+p^s-a}}x)(x^2-\delta_{0,0})^j,$ and replacing $(x-\delta_{0,0})$ with $(x^2-\delta_{0,0}).$\\
    \begin{lemma}
        Let $L$ be the smallest non-negative integer such that  $u^2(x^2-\delta_{0,0})^L\in \langle u(x^2-\delta_{0,0})^a+u^2(x^2-\delta_{0,0})^{t}h(x)\rangle,$ where $h(x)$, if non-zero, is a unit in $\frac{\mathbb{F}_{p^m}[x]}{\langle (x^2-\delta_{0,0})^{p^s}\rangle}$. Then
        $$L= \begin{cases} 
             a & \textnormal{ if } h(x)=0, \\
             \Min\{a,\,p^s-a+t\}  & \textnormal{ if } h(x) \ne 0. \\
   \end{cases}$$
\end{lemma}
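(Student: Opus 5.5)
We follow the proof of Lemma \ref{first parameter computation}, replacing $x-\delta_{0,0}$ by $x^2-\delta_{0,0}$ throughout. Since $x^2-\delta_{0,0}$ is irreducible, Remark \ref{generalrepresentation} and Corollary \ref{deg n-1 polynomial in nps case} let us write each element of $\frac{\mathbb{F}_{p^m}[x]}{\langle (x^2-\delta_{0,0})^{p^s}\rangle}$ as $\sum_{j=0}^{p^s-1}(a_j+b_jx)(x^2-\delta_{0,0})^j$. Such an element is a unit exactly when $a_0+b_0x\neq 0$. Consequently, the $(x^2-\delta_{0,0})$-adic valuation behaves as in the case $n=1$: if $(x^2-\delta_{0,0})^k$ divides an element $v\,(x^2-\delta_{0,0})^e$ with $v$ a unit, then $k\le e$.

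Let $C=\langle u(x^2-\delta_{0,0})^a+u^2(x^2-\delta_{0,0})^th(x)\rangle$. If $h=0$, then $u^2(x^2-\delta_{0,0})^a\in C$. Every element of $C$ has the form $u(x^2-\delta_{0,0})^a c(x)$, so the $u^2$-component of any such element is divisible by $(x^2-\delta_{0,0})^a$. Hence $L=a$.

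If $h\ne 0$, suppose $u^2(x^2-\delta_{0,0})^\alpha\in C$. Then
\[
u^2(x^2-\delta_{0,0})^\alpha=u(x^2-\delta_{0,0})^ac_0+u^2(x^2-\delta_{0,0})^ac_1+u^2(x^2-\delta_{0,0})^th\,c_0
\]
for suitable $c_0,c_1$ (components in $u^0$ are irrelevant here). Split $c_0=c_{0,0}+(x^2-\delta_{0,0})^{p^s-a}c_{0,1}$, where $c_{0,0}$ collects the terms of degree $j\le p^s-1-a$ in $x^2-\delta_{0,0}$.

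The key step, flagged in the Remark after Lemma \ref{third parameter computation}, is to apply $\langle (x^2-\delta_{0,0})^{p^s}\rangle=\langle u^2\rangle$ before comparing coefficients. Indeed, $(x^2-\delta_{0,0})^{p^s}=u^2\delta_2$ because $\delta_1=0$. The term $u(x^2-\delta_{0,0})^{p^s}c_{0,1}$ therefore becomes $u^3(\cdot)=0$. The term $u^2(x^2-\delta_{0,0})^{t}h\,c_0$ becomes $u^2(x^2-\delta_{0,0})^th\,c_{0,0}+u^2(x^2-\delta_{0,0})^{p^s-a+t}h\,c_{0,1}$. Comparing the $u$-components in the unique representation gives $(x^2-\delta_{0,0})^ac_{0,0}=0$. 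Because $c_{0,0}$ has degree below $p^s-a$ in $x^2-\delta_{0,0}$, this forces $c_{0,0}=0$. Comparing the $u^2$-components then gives
\[
(x^2-\delta_{0,0})^ac_1+(x^2-\delta_{0,0})^{p^s-a+t}h\,c_{0,1}=(x^2-\delta_{0,0})^\alpha,
\]
and the valuation argument yields $\alpha\ge\min\{a,p^s-a+t\}$.

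For the reverse inequality, we exhibit elements of $C$. Taking $c_1=1$ and $c_{0,1}=0$ gives $u^2(x^2-\delta_{0,0})^a\in C$. Taking $c_1=0$ and $c_{0,1}=h^{-1}$ gives $u^2(x^2-\delta_{0,0})^{p^s-a+t}\in C$; here $h^{-1}$ exists because $h$ is a unit. Together these give $L=\min\{a,p^s-a+t\}$.

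The main care needed is in the comparison step: the relation $(x^2-\delta_{0,0})^{p^s}=u^2\delta_2$ must be applied before equating coefficients. We also need that the coefficients $a_j+b_jx$ are units in the quotient by Corollary \ref{deg n-1 polynomial in nps case}, so that dividing by them does not change $(x^2-\delta_{0,0})$-valuations.
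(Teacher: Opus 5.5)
Your proposal is correct and follows the same route as the paper, which proves this lemma by transplanting the proof of Lemma \ref{first parameter computation} with $x-\delta_{0,0}$ replaced by $x^2-\delta_{0,0}$. That is, you split $c_0=c_{0,0}+(x^2-\delta_{0,0})^{p^s-a}c_{0,1}$, use $(x^2-\delta_{0,0})^{p^s}=u^2\delta_2$ before comparing the $u$- and $u^2$-components, and exhibit the multipliers giving $u^2(x^2-\delta_{0,0})^a$ and $u^2(x^2-\delta_{0,0})^{p^s-a+t}$. Your explicit argument for the lower bound $L\ge a$ when $h=0$ also fills in a step that the paper only asserts.
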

\begin{lemma}
        Let $L$ be the smallest non-negative integer such that $u^2(x^2-\delta_{0,0})^L\in \langle (x^2-\delta_{0,0})^{a}+u(x^2-\delta_{0,0})^{t_0}h_0(x)+u^2(x^2-\delta_{0,0})^{t_1}h_1(x)\rangle$,  where for $i=0$ and $1$, $h_i(x)$, if non-zero, is a unit in $\frac{\mathbb{F}_{p^m}[x]}{\langle (x^2-\delta_{0,0})^{p^s}\rangle}.$ Then
        $$L= \begin{cases} 
             0 & \textnormal{ if } h_0(x)=0, \\
             0&\textnormal{ if } h_0(x)\neq 0 \textnormal{ and } a<p^s-a+t_0,\\
             \Min\{a,\,\beta\}  & \textnormal{ if } h_0(x) \ne 0 \textnormal{ and } a\geq p^s-a+t_0, \\
   \end{cases}$$
   where $\beta:=\max\{k:(x^2-\delta_{0,0})^k\mid ((x^2-\delta_{0,0})^{t_0}h_0(x) -h_0(x)^{-1}(x^2-\delta_{0,0})^{2a-p^s-t_0}-h_1(x)h_0(x)^{-1}(x^2-\delta_{0,0})^{a+t_1-t_0}h_1(x)h_0(x)^{-1})\}.$
\end{lemma}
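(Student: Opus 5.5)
We follow the proof of Lemma \ref{second parameter computation} essentially verbatim, with $x-\delta_{0,0}$ replaced by $f:=x^2-\delta_{0,0}$ and with coefficients in $\frac{\mathbb{F}_{p^m}[x]}{\langle f^{p^s}\rangle}$ written in the $f$-adic form with linear coefficients, as in the discussion preceding Theorem \ref{final theorem for section 4}.

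\textbf{Setting up the equations.} Put $C=\langle f^{a}+uf^{t_0}h_0+u^2f^{t_1}h_1\rangle$. Suppose $u^2f^\alpha\in C$, so that $u^2 f^\alpha$ is a combination with multipliers $c_0+uc_1+u^2c_2$. Split
$$c_0=c_{0,0}+f^{p^s-a}c_{0,1},$$
where $c_{0,0}$ collects the terms $(a_{0,j}+b_{0,j}x)f^j$ with $j\le p^s-1-a$.

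The key point, stressed in the Remark after Lemma \ref{third parameter computation}, is to rewrite $f^{p^s}$ as a $u^2$-multiple before comparing coefficients. Here $f^{p^s}=x^{2p^s}-\delta_0=u^2\delta_2$ in $R^{3,2}_\delta$. So the term $f^a\cdot f^{p^s-a}c_{0,1}$ contributes $u^2\delta_2 c_{0,1}$ (the paper absorbs $\delta_2$ into the unit). Comparing the coefficients of $u^0,u^1,u^2$ then gives the analogues of equations (1)--(3). Uniqueness of the representation and the fact that $\deg(a+bx)<2$ make the comparison legitimate.

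\textbf{Deducing $L$.} Equation (1) forces $c_{0,0}=0$, since $f^a c_{0,0}$ has $f$-degree below $p^s$. This leaves (4) and (5).

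\emph{Case $h_0=0$.} Multiply the generator by $f^{p^s-a}$. This gives $u^2(\delta_2+f^{p^s-a+t_1}h_1)\in C$. The bracket is a unit, because $p^s-a+t_1\ge1$ makes $f$ nilpotent there (using Proposition \ref{generaltheorem} for units). Hence $u^2\in C$ and $L=0$.

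\emph{Case $h_0\ne0$ and $a<p^s-a+t_0$.} Multiply by $f^{p^s-a}-uf^{p^s-2a+t_0}h_0$. The $u$-terms cancel, and we get $u^2$ times a unit in $C$. Again $L=0$.

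\emph{Case $h_0\ne0$ and $a\ge p^s-a+t_0$.} Equation (4), $f^ac_1+f^{p^s-a+t_0}h_0c_{0,1}=0$, lets us solve $c_{0,1}=-h_0^{-1}f^{2a-p^s-t_0}c_1$, modulo terms killed by the relevant power of $f$. Substituting into (5) yields $f^ac_2+Q\,c_1=f^\alpha$, where $Q$ is the bracketed polynomial defining $\beta$. Comparing $f$-adic valuations gives $\alpha\ge\min\{a,\beta\}$. Conversely, taking $(c_1,c_2)=(0,1)$ or $(1,0)$ shows that $u^2f^a$ and $u^2f^\beta$ lie in $C$. Hence $L=\min\{a,\beta\}$.

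\textbf{Main obstacle.} The delicate step is the solve-and-substitute in the last case. Equation (4) only determines $c_{0,1}$ modulo the annihilator of $f^{p^s-a+t_0}$, so we must check that this ambiguity does not lower the valuation in (5). We would handle it by writing $c_{0,1}$ as the particular solution plus $f^{a-t_0}$ times an arbitrary polynomial. We then observe that the extra term in (5) is a multiple of $f^{a-t_0}\cdot f^{0}$ already of valuation at least $a$ (after the $\delta_2$ normalisation), so it is absorbed into the $f^a c_2$ term. Invertibility of $h_0$ and of the linear coefficients uses Corollary \ref{deg n-1 polynomial in nps case}, since $f$ is irreducible.
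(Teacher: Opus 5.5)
Your set-up and the two $L=0$ cases are fine. The overall route is the paper's, which proves this lemma only by pointing to Lemma \ref{second parameter computation} with $x-\delta_{0,0}$ replaced by $f=x^2-\delta_{0,0}$. The gap is in the last case, exactly at the step you flag as the main obstacle, and the resolution you give there is false.

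The general solution of (4) is $c_{0,1}=-h_0^{-1}f^{2a-p^s-t_0}c_1+f^{a-t_0}e$. Substituting into (5) produces the extra term $(\delta_2+f^{p^s-a+t_1}h_1)f^{a-t_0}e$, which is a unit times $f^{a-t_0}e$. For $e=1$ its valuation is $a-t_0$, which is strictly less than $a$ whenever $t_0\ge 1$. That always holds for the Type 5 ideals, where $t_1<t_0$. So this term cannot be absorbed into $f^ac_2$. You can see this directly: $f^{p^s-t_0}(f^a+uf^{t_0}h_0+u^2f^{t_1}h_1)=u^2f^{a-t_0}(\delta_2+f^{p^s-a+t_1}h_1)$, so $u^2f^{a-t_0}\in C$. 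What (4) and (5) actually give is $L=\min\{a,\beta,a-t_0\}$. The paper's own argument simply takes the particular solution and never addresses this term.

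To reach the stated formula you must additionally prove $\beta<a-t_0$. This is where the oddness of $p$ (forced by $\gcd(2,p)=1$) is used:
\begin{itemize}
\item Since $p^s$ is odd, $t_0\neq 2a-p^s-t_0$. So the terms $f^{t_0}h_0$ and $\delta_2h_0^{-1}f^{2a-p^s-t_0}$ of $Q$ have different $f$-adic valuations and cannot cancel.
\item The $h_1$-term has valuation at least $a+t_1-t_0\ge a-t_0$.
\item Hence $\beta=\min\{t_0,\,2a-p^s-t_0\}\le 2a-p^s-t_0<a-t_0$, using $a<p^s$.
\end{itemize}
Therefore $\min\{a,\beta,a-t_0\}=\beta=\min\{a,\beta\}$. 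The same observation shows that writing $h_0^{-1}$ instead of $\delta_2h_0^{-1}$ in $Q$ does not change $\beta$.

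This step is essential, not cosmetic. Consider the $n=1$, $p=2$ analogue with $p^s=4$, $a=3$, $t_0=1$, $\delta_2=h_0=1$ and $h_1=0$. There $Q=0$, yet $L=a-t_0=2$ rather than $a=3$.
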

\begin{lemma}
        Let $L$ be the smallest non-negative integer such that $u^2(x^2-\delta_{0,0})^L\in \langle (x^2-\delta_{0,0})^{a}+u(x^2-\delta_{0,0})^{t_0}h_0(x)+u^2(x^2-\delta_{0,0})^{t_1}h_1(x),\, u(x^2-\delta_{0,0})^b+u^2(x^2-\delta_{0,0})^{t_2}h_2(x)\rangle$  where for $i=0,1,$ and $2$, $h_i(x)$, if non-zero, is a unit in $\frac{\mathbb{F}_{p^m}[x]}{\langle (x^2-\delta_{0,0})^{p^s}\rangle}$. Then
        $$L= \begin{cases} 
             0 & \textnormal{ if } h_0(x)=0, \\
             0  & \textnormal{ if } h_0(x) \ne 0 \textnormal{ and } a<p^s-a+t_0, \\
              \Min\{b,\,\beta_1,\,\beta_2\}  & \textnormal{ if } h_0(x) \ne 0 \textnormal{ and } b\geq p^s-a+t_0, \\
             
   \end{cases}$$
   where $\beta_1:=\max\{k:(x^2-\delta_{0,0})^k\mid ((x^2-\delta_{0,0})^{t_0}h_0(x) -(x^2-\delta_{0,0})^{2a-p^s-t_0}h_0(x)^{-1}-(x^2-\delta_{0,0})^{a+t_1-t_0}h_1(x)h_0(x)^{-1})\}$ and $\beta_2:=\max\{k: (x^2-\delta_{0,0})^k\mid ((x^2-\delta_{0,0})^{t_2}h_2(x)-(x^2-\delta_{0,0})^{a+b-p^s-t_0}h_0(x)^{-1}-(x^2-\delta_{0,0})^{b+t_1-t_0}h_1(x)h_0(x)^{-1})\}.$
\end{lemma}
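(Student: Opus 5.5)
I will follow the proof of Lemma \ref{third parameter computation}, replacing $x-\delta_{0,0}$ by $x^2-\delta_{0,0}$ and letting coefficients be polynomials of degree at most $1$ in $x$. Write $y=x^2-\delta_{0,0}$ and $C=\langle y^a+uy^{t_0}h_0+u^2y^{t_1}h_1,\; uy^b+u^2y^{t_2}h_2\rangle$. Since $\delta_1=0$ and $\delta_2\neq 0$, we have $y^{p^s}=u^2\delta_2$. Rescaling by the unit $\delta_2$ does not change any divisibility exponent, so I treat it as $y^{p^s}=u^2$, as the paper does.

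By Corollary \ref{deg n-1 polynomial in nps case}, every nonzero $a+bx$ is a unit. Hence every element of $\frac{\mathbb{F}_{p^m}[x]}{\langle y^{p^s}\rangle}$ is $y^k$ times a unit, and divisibility by powers of $y$ behaves as in the $n=1$ case. For $u^2y^\alpha\in C$, write it as $c_0g_1+uc_1g_1+u^2c_2g_1+d_0g_2+ud_1g_2$, where $g_1,g_2$ are the two generators and $c_i,d_k$ are of the form $\sum_j(a_j+b_jx)y^j$. Split $c_0=c_{0,0}+y^{p^s-a}c_{0,1}$ and reduce $y^{p^s}$ to $u^2$ before comparing coefficients of $u^0,u^1,u^2$. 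This reduction is the point stressed in the Remark. Comparing the $u^0$ coefficient gives $c_{0,0}=0$. The $u^1$ coefficient gives
$$y^bd_0+y^ac_1+y^{p^s-a+t_0}h_0c_{0,1}=0,$$
and the $u^2$ coefficient gives $y^ac_2+c_{0,1}+y^{t_0}h_0c_1+y^{p^s-a+t_1}h_1c_{0,1}+y^{t_2}h_2d_0+y^bd_1=y^\alpha$.

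I then split into cases.

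\textbf{Case $h_0=0$.} Multiply the first generator by $y^{p^s-a}$ to get $u^2(1+y^{p^s-a+t_1}h_1)\in C$. The bracket is a unit because $y$ is nilpotent, so $L=0$.

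\textbf{Case $h_0\neq0$ and $b<p^s-a+t_0$.} Form $y^{p^s-a}g_1-y^{p^s-a-b+t_0}h_0g_2$. This gives $u^2\cdot(\text{unit})\in C$, so again $L=0$. The statement writes this condition as $a<p^s-a+t_0$. I will check whether that is a typo for $b$, or whether it suffices because $b<a$; the third case is stated with $b\ge p^s-a+t_0$.

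\textbf{Case $b\ge p^s-a+t_0$.} Since $h_0$ is a unit, solve the $u^1$ equation for $c_{0,1}$:
$$c_{0,1}=-h_0^{-1}\bigl(y^{a+b-p^s-t_0}d_0+y^{2a-p^s-t_0}c_1\bigr).$$
Substituting into the $u^2$ equation gives
$$y^ac_2+E_1c_1+E_2d_0+y^bd_1=y^\alpha,$$
with $E_1,E_2$ the bracketed expressions defining $\beta_1,\beta_2$. The substitution $c_{0,1}\cdot y^{p^s-a+t_1}h_1$ also produces higher-order terms; these only raise $y$-adic valuations, and I will fold them into the brackets. Reading off $y$-adic valuations gives $\alpha\ge\min\{a,b,\beta_1,\beta_2\}=\min\{b,\beta_1,\beta_2\}$, using $b<a$. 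For the reverse inequality, choose one of $c_1,d_0,d_1$ equal to $1$ and the rest $0$. This realizes $u^2y^{\beta_1}$, $u^2y^{\beta_2}$ and $u^2y^b$ in $C$, so $L=\min\{b,\beta_1,\beta_2\}$.

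The main obstacle is the valuation step. To conclude $\alpha\ge\min$ valuation, I need that a sum of multiples of $y^k$ with coefficients in $\mathbb{F}_{p^m}[x]/\langle y^{p^s}\rangle$ cannot have smaller valuation. I also need $y^\beta\cdot(\text{unit})$ to generate the same ideal as $y^\beta$. Both follow from the unit criterion for $a+bx$ and the unique representation in Remark \ref{generalrepresentation}, so I will state these facts explicitly before reading off valuations.
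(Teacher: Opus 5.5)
Your route is the paper's own. The paper proves this lemma only by pointing to the proof of Lemma~\ref{third parameter computation}, with $x-\delta_{0,0}$ replaced by $y=x^2-\delta_{0,0}$ and linear coefficients, and you reproduce that computation step by step. Your multiplier $y^{p^s-a}g_1-y^{p^s-a-b+t_0}h_0g_2$ also correctly covers the range $b<p^s-a+t_0\le a$, which the statement's condition ``$a<p^s-a+t_0$'' leaves out. However, two steps fail as you state them; the paper's proof has the same two defects.

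First, dropping $\delta_2$ is not harmless. In $R^{3,2}_\delta$ one has exactly $y^{p^s}=\delta_2u^2$, so the $u^2$-equation contains $(\delta_2+y^{p^s-a+t_1}h_1)c_{0,1}$. After eliminating $c_{0,1}$, the middle summands of your brackets become $-\delta_2h_0^{-1}y^{2a-p^s-t_0}$ and $-\delta_2h_0^{-1}y^{a+b-p^s-t_0}$. A constant on one summand can change the valuation of a sum when exponents coincide. Take $p=3$, $m=1$, $s=2$, $\delta=2+2u^2$ (so $y=x^2+1$ and $y^9=2u^2$), with $a=8$, $b=7$, $t_0=4$, $t_1=0$, $t_2=2$, $h_0=h_1=h_2=1$. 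Then
\[
g_2-y^3g_1=(uy^7+u^2y^2)-(2u^2y^2+uy^7+u^2y^3)=-u^2y^2(1+y),
\]
so $L\le 2$. Both brackets in the statement, however, equal $-y^3$, so the displayed formula gives $L=3$. Your claim that rescaling by $\delta_2$ changes no divisibility exponent is therefore false. Setting $y'=\lambda y$ with $\lambda^{p^s}=\delta_2^{-1}$ does give $y'^{p^s}=u^2$, but rewriting $g_1,g_2$ in $y'$ multiplies the $h_i$ by powers of $\lambda$. You must either carry $\delta_2$ into $\beta_1,\beta_2$ or add $\delta_2=1$ as an explicit hypothesis.

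Second, the $u^1$-equation holds only modulo $y^{p^s}$. So $y^{p^s-a+t_0}h_0c_{0,1}\equiv-(y^bd_0+y^ac_1)$ determines $c_{0,1}$ only modulo $y^{a-t_0}$. The general solution is your expression plus $y^{a-t_0}e$, and this adds $(\delta_2+y^{p^s-a+t_1}h_1)\,y^{a-t_0}e$ to your final identity. The extra term is real: $y^{p^s-t_0}g_1=u^2y^{a-t_0}(\delta_2+y^{p^s-a+t_1}h_1)$ lies in $C$. Reading off valuations therefore only yields $\alpha\ge\min\{b,\beta_1,\beta_2,a-t_0\}$. The ``higher-order terms'' you mention are not the issue, since the $h_1$-product produces exactly the third summands of the brackets.

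To finish, you need $\beta_1<a-t_0$, and this is where $p\nmid 2$ enters. Since $p^s$ is odd, $t_0\neq 2a-p^s-t_0$, and the third exponent $a+t_1-t_0$ exceeds $2a-p^s-t_0$. The lowest term of the $\beta_1$-bracket is therefore unique and has a unit coefficient. This gives $\beta_1=\min\{t_0,2a-p^s-t_0\}\le 2a-p^s-t_0<a-t_0$. Without this parity argument, the term $a-t_0$ need not be absorbed.
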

\section{Ideals of \texorpdfstring{$R^{t,3}_{\delta}$}{}}\label{section5}
In this section, we use the structure of ideals of $R^{t,n}_{\delta}:=\frac{R^t [x]}{\langle x^{np^s}-\delta\rangle},$ where $\delta = \delta_0+u\delta_1+\dots+u^{t-1}\delta_{t-1}$ is a unit in $R^t:=\frac{\mathbb{F}_{p^m}[u]}{\langle u^t \rangle}$, $\delta_i\in \mathbb{F}_{p^m}$ for $0\leq i \leq t-1$ given in Section \ref{section2} to discuss the case when $n=3$ and $\gcd(3,p)=1$, that is, to discuss the ideals of $R^{t,3}_{\delta}:=\frac{R^t [x]}{\langle x^{3p^s}-\delta\rangle}$. We note that $\delta_0\ne 0$ since $\delta$ is a unit. Also, since $\gcd(3,p)=1$, $p^m\equiv 1 (\textnormal{mod } 3)$ or $p^m\equiv 2 (\textnormal{mod } 3)$.

We first assume that $\delta$ is a cube in $R^t$, that is, there exists $\tilde{\delta}=\tilde{\delta}_0+u\tilde{\delta}_1+\dots+u^{t-1}\tilde{\delta}_{t-1}\in R^t$ such that $\tilde{\delta}^3=\delta$. Also, since $\delta$ is a unit in $R^t$, $\tilde{\delta}$ is a unit in $R^t$.  Moreover, 
$$x^{3p^s}-\delta=(x^{p^s}-\tilde{\delta})(x^{2p^s}+\tilde{\delta}x^{p^s}+\tilde{\delta}^2).$$ 
Since these are coprime factors, we have 
\begin{equation}
R^{t,3}_{\delta}\cong \frac{R^t[x]}{\langle x^{p^s}-\tilde{\delta}\rangle} \times\frac{R^t[x]}{\langle x^{2p^s}+\tilde{\delta}x^{p^s}+\tilde{\delta}^2 \rangle}. \label{chinese1}    
\end{equation}
Since the structure of ideals of $R^{t,1}_{\tilde\delta}$ can be obtained by taking $n=1$ and replacing $x-\delta_{0,0}$ with $x-\tilde{\delta}_{0,0}$ in Theorem \ref{nps ideals}, it is enough to give the structure of ideals of $\frac{R^t[x]}{\langle x^{2p^s}+\tilde{\delta}x^{p^s}+\tilde{\delta}^2 \rangle}$.

\noindent \textbf{Case 1.} $p^m\equiv 1 (\textnormal{ mod } 3)$.\\
By Lemma 8 in \cite{dinh2020constacyclic}, there exist $b,c \in \mathbb{F}_{p^m}\setminus\{0, 1\}$ such that $b\ne c$ and  
$x^{2p^s}+\tilde{\delta}x^{p^s}+\tilde{\delta}^2=(x^{p^s}-b\tilde{\delta})(x^{p^s}-c\tilde{\delta})$. Hence,
$$\frac{R^t[x]}{\langle x^{3p^s}-\delta\rangle}\cong \frac{R^t[x]}{\langle x^{p^s}-\tilde{\delta}\rangle} \times\frac{R^t[x]}{\langle x^{p^s}-b\tilde{\delta} \rangle}\times \frac{R^t[x]}{\langle x^{p^s}-c\tilde{\delta} \rangle}.$$
Thus, the ideals of the ring $R^{t,3}_{\delta}$ are of the form $I_1 \times I_2 \times I_3$ where $I_1$ is an ideal of $R^{t,1}_{\tilde\delta}$, $I_2$ is an ideal of $R^{t,1}_{b\tilde\delta}$, and $I_3$ is an ideal of $R^{t,1}_{c\tilde\delta}$. We can, now, get the structure of the ideal $I_1$ by taking $n=1$ and replacing $x-\delta_{0,0}$ with $x-\tilde{\delta}_{0,0}$ in Theorem \ref{nps ideals}, that of $I_2$ by taking $n=1$ and replacing $x-\delta_{0,0}$ with $x-b\tilde{\delta}_{0,0}$ in Theorem \ref{nps ideals}, and that of $I_3$ by taking $n=1$ and replacing $x-\delta_{0,0}$ with $x-c\tilde{\delta}_{0,0}$ in Theorem \ref{nps ideals}.

\textbf{Case 2.} $p^m\equiv 2 (\textnormal{ mod } 3)$.\\
 Recall that $\tilde{\delta}=\tilde{\delta}_0+u\tilde{\delta}_1+\dots+u^{t-1}\tilde{\delta}_{t-1}$, where $\tilde{\delta}_i \in \mathbb{F}_{p^m}$ for $0\leq i\leq t-1$, is a unit in $ {R^t}$. Also for $0\leq i\leq t-1$, there exists $\tilde{\delta}_{i,0}$ in $\mathbb{F}_{p^m}$ such that $\tilde{\delta}_{i,0}^{p^s}=\tilde{\delta}_i$. Let $k$ be the smallest integer such that $1 \leq k \leq t-1$ and $\delta_k \neq 0$. We note that this is equivalent to $k$ being the smallest integer such that $\tilde{\delta}_k \neq 0$. Then, in $\frac{R^t[x]}{\langle x^{2p^s}+\tilde{\delta}x^{p^s}+\tilde{\delta}^2 \rangle}$
$$x^{2p^s}+\tilde{\delta}x^{p^s}+\tilde{\delta}^2=(x^2+\tilde{\delta}_{0,0}x+\tilde{\delta}_{0,0}^2)^{p^s}+u^{k}z+u^{k+1}(\textnormal{some element of }\frac{R^t[x]}{\langle x^{2p^s}+\tilde{\delta}x^{p^s}+\tilde{\delta}^2 \rangle}),$$
where $z=x^{p^s}\tilde{\delta}_k+2\tilde{\delta}_k\tilde{\delta}_0=\tilde{\delta}_k(x^{p^s}+2\tilde{\delta}_0)$. Thus, in $\frac{R^t[x]}{\langle x^{2p^s}+\tilde{\delta}x^{p^s}+\tilde{\delta}^2 \rangle}$,
$$(x^{2}+\tilde{\delta}_{0,0}x+\tilde{\delta}_{0,0}^2)^{p^s}=-u^k\left(\tilde{\delta}_k(x^{p^s}+2\tilde{\delta}_0)+u(\textnormal{some element of }\frac{R^t[x]}{\langle x^{2p^s}+\tilde{\delta}x^{p^s}+\tilde{\delta}^2 \rangle})\right).$$ 

Also, by Lemma 11 in \cite{dinh2020constacyclic}, $x^{2}+\tilde{\delta}_{0,0}x+\tilde{\delta}_{0,0}^2$ is irreducible over $\mathbb{F}_{p^m}$. Thus, by Corollary \ref{ax+b invertible in 2ps case}, any non-zero polynomial of the form $p(x)=ax+b\in\mathbb{F}_{p^m}[x]$ is invertible in $\frac{R^t[x]}{\langle x^{2p^s}+\tilde{\delta}x^{p^s}+\tilde{\delta}^2\rangle}$. Also, there exists $\alpha_0\in \mathbb{F}_{p^m}$ such that $\alpha_0^{p^s}=2$. Thus, we can write $z$ as $\{\tilde{\delta}_{k,0}(x+\alpha_0\tilde{\delta}_{0,0})\}^{p^s}$. Since every non-zero polynomial of the form $ax+b$ is invertible, $\tilde{\delta}_{k,0}(x+\alpha_0\tilde{\delta}_{0,0})$ and hence $\{\tilde{\delta}_{k,0}(x+\alpha_0\tilde{\delta}_{0,0})\}^{p^s}$ is invertible in $\frac{R^t[x]}{\langle x^{2p^s}+\tilde{\delta}x^{p^s}+\tilde{\delta}^2\rangle}$. 
Hence, in $\frac{R^t[x]}{\langle x^{2p^s}+\tilde{\delta}x^{p^s}+\tilde{\delta}^2\rangle},$ $\langle (x^{2}+\tilde{\delta}_{0,0}x+\tilde{\delta}_{0,0}^2)^{p^s}\rangle=\langle u^k\rangle$. Moreover, $x^{2}+\tilde{\delta}_{0,0}x+\tilde{\delta}_{0,0}^2$ is nilpotent with nilpotency index $\lceil \frac{t}{k}\rceil p^s.$
Using this and the irreducibility of $x^2+\tilde{\delta}_{0,0}x+\tilde{\delta}_{0,0}^2$ over $\mathbb{F}_{p^m}$, we see, by Remark \ref{generalrepresentation}, that an arbitrary element $c(x)$ of $\frac{R^t[x]}{\langle x^{2p^s}+\tilde{\delta}x^{p^s}+\tilde{\delta}^2\rangle}$ can be uniquely written as  
\begin{align*}
            c(x)=&\underset{i=0}{\overset{p^s-1}{\sum}}(a_{0,i}^{0}+a_{1,i}^{0}x)(x^2+\tilde{\delta}_{0,0}x+\tilde{\delta}_{0,0}^2)^i+u\underset{i=0}{\overset{p^s-1}{\sum}}(a_{0,i}^{1}+a_{1,i}^{1}x)(x^2+\tilde{\delta}_{0,0}x+\tilde{\delta}_{0,0}^2)^i\\&+\dots+u^{t-1}\underset{i=0}{\overset{p^s-1}{\sum}}(a_{0,i}^{t-1}+a_{1,i}^{t-1}x)(x^2+\tilde{\delta}_{0,0}x+\tilde{\delta}_{0,0}^2)^i,
 \end{align*}
        where $a_{j,i}^k\in \mathbb{F}_{p^m}$ for $0\leq j \leq 1,\,0\leq i\leq p^s-1,\,$ and $0\leq k\leq t-1.$
        Moreover, an arbitrary element $c(x)$ is a non-unit if and only if both $a_{0,0}^0$ and $a_{1,0}^0$ are zero. 
 Using this representation of an element of $R^{t,3}_\delta$, we can write the generator $u^{(t-1)-i}(x^2+\tilde{\delta}_{0,0}x+\tilde{\delta}_{0,0}^2)^{a_i}-u^{(t-1)-(i-1)}g(x)$ as 
\begin{align*}
u^{(t-1)-i}(x^2+\tilde{\delta}_{0,0}x&+\tilde{\delta}_{0,0}^2)^{a_{i}}+u^{(t-1)-(i-1)}(x^2+\tilde{\delta}_{0,0}x+\tilde{\delta}_{0,0}^2)^{t_{i-1,0}}g_{i-1,0}(x)+\dots \\&+ u^{(t-1)}(x^2+\tilde{\delta}_{0,0}x+\tilde{\delta}_{0,0}^2)^{t_{i-1,i-1}}g_{i-1,i-1}(x),
\end{align*}
where each $g_{i-1,j}(x)$ is either 0 or a unit in $\frac{\mathbb{F}_{p^m}[x]}{\langle (x^2+\tilde{\delta}_{0,0}x+\tilde{\delta}_{0,0}^2)^{p^s}\rangle}$ and $p^s-1\geq a_i>t_{i-1,0}>t_{i-1,1}>\dots>t_{i-1,i-1}\geq 0$. 

Using this, we can now give the following theorem describing the ideals of $\frac{R^t[x]}{\langle x^{2p^s}+\tilde{\delta}x^{p^s}+\tilde{\delta}^2\rangle}$. The proof of the theorem is similar to that of Corollary 3.6 of \cite{AkaKanSar}.
\begin{theorem}\label{ideals of Rt[x]x2psdeltaxpstildedelta2}
     Let $p^m\equiv 2 (\textnormal{mod }3)$ and let $\delta$ be a unit in $R^t$ which is a cube, that is, there exists $\tilde{\delta} = \tilde{\delta}_0+u\tilde{\delta}_1+\dots+u^{t-1}\tilde{\delta}_{t-1} \in R^t$ such that $\delta=\tilde{\delta}^3$. Let $\tilde{\delta}_0=\tilde{\delta}_{0, 0}^{p^s}$. Then the ideals of the ring
 $\frac{R^t[x]}{\langle x^{2p^s}+\tilde{\delta}x^{p^s}+\tilde{\delta}^2\rangle}$ have one of the following forms.
    \begin{itemize}
        \item [(i)] Trivial ideals $\langle 0\rangle,$ $\langle 1 \rangle.$
        \item [(ii)] Any generator of a non-trivial ideal contained in $\langle u\rangle $ has the form:
        \begin{align*}
        \theta_i(u, x^2+&\tilde{\delta}_{0,0}x+\tilde{\delta}_{0,0}^2)=u^{(t-1)-i}(x^2+\tilde{\delta}_{0,0}x+\tilde{\delta}_{0,0}^2)^{a_{i}}+u^{(t-1)-(i-1)}(x^2+\tilde{\delta}_{0,0}x+\\&\tilde{\delta}_{0,0}^2)^{t_{i-1,0}}g_{i-1,0}(x)+\dots + u^{(t-1)}(x^2+\tilde{\delta}_{0,0}x+\tilde{\delta}_{0,0}^2)^{t_{i-1,i-1}}g_{i-1,i-1}(x),
        \end{align*}
        for some $0\leq i \leq t-2$ and each $g_{i-1,j}(x)$ is either 0 or a unit in $\frac{\mathbb{F}_{p^m}[x]}{\langle (x^2+\tilde{\delta}_{0,0}x+\tilde{\delta}_{0,0}^2)^{p^s}\rangle}$ and $p^s-1\geq a_i>t_{i-1,0}>t_{i-1,1}>\dots>t_{i-1,i-1}\geq 0$ for $0\leq j\leq i-1$.\\
        In fact, any non-trivial ideal contained in $\langle u \rangle$ has the form:
        $$\langle \theta_{i_1}(u, x^2+\tilde{\delta}_{0,0}x+\tilde{\delta}_{0,0}^2),\dots, \theta_{i_d}(u, x^2+\tilde{\delta}_{0,0}x+\tilde{\delta}_{0,0}^2)\rangle,$$  where $0\leq i_1<i_2<\dots<i_d\leq t-2$, $0\leq a_{i_1}<a_{i_2}<\dots<a_{i_d}\leq p^s-1$, $0\leq t_{(i_1-1),(i_1-1)}<t_{(i_1-1),(i_1-2)}<\dots<t_{(i_1-1),0}<a_{i_1},\, 0\leq t_{(i_2-1),(i_2-1)}<t_{(i_2-1),(i_2-2)}<\dots<t_{(i_2-1),0}<a_{i_2},\, \dots,\,$ and $0\leq t_{(i_d-1),(i_d-1)}<t_{(i_d-1),(i_d-2)}<\dots<t_{(i_d-1),0}<a_{i_d}.$  
        \item[(iii)] Any non-trivial ideal not contained in $\langle u \rangle$ has the form:
        $$\langle (x^2+\tilde{\delta}_{0,0}x+\tilde{\delta}_{0,0}^2)^a+ uf(x)\rangle+I,$$
         where $f(x)\in \frac{R^t[x]}{\langle x^{2p^s}+\tilde{\delta}x^{p^s}+\tilde{\delta}^2\rangle}$, $I$ is an ideal described as in (ii), and $0\leq a_{i_d}<a\leq p^s-1$.
        \end{itemize}
\end{theorem}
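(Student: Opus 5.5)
Write $S:=\frac{R^t[x]}{\langle x^{2p^s}+\tilde{\delta}x^{p^s}+\tilde{\delta}^2\rangle}$ and $\varphi(x):=x^2+\tilde{\delta}_{0,0}x+\tilde{\delta}_{0,0}^2$. The plan is to run the argument of Theorem \ref{Maintheorem} and Theorem \ref{nps ideals} with $x^n-\delta_{0,0}$ replaced by $\varphi(x)$. The inputs are those established just before the statement:
\begin{itemize}
\item $\varphi(x)$ is irreducible over $\mathbb{F}_{p^m}$;
\item $\langle \varphi(x)^{p^s}\rangle=\langle u^k\rangle$ in $S$, which uses Corollary \ref{ax+b invertible in 2ps case} to invert $\{\tilde{\delta}_{k,0}(x+\alpha_0\tilde{\delta}_{0,0})\}^{p^s}$;
\item every element of $S$ has a unique $\varphi$-adic expansion, and it is a non-unit exactly when $a^0_{0,0}=a^0_{1,0}=0$ (Remark \ref{generalrepresentation}).
\end{itemize}

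First consider the reduction $\mu:S\to \frac{\mathbb{F}_{p^m}[x]}{\langle \varphi(x)^{p^s}\rangle}$ modulo $u$. It is surjective with kernel $\langle u\rangle$. Its target is a chain ring, since $\varphi^{p^s}$ is the reduction of $x^{2p^s}+\tilde\delta x^{p^s}+\tilde\delta^2$ and $\varphi$ is irreducible. So every ideal of the target is $\langle \varphi(x)^a\rangle$ with $0\le a\le p^s$. Proposition \ref{FormofIdeals} then shows that any ideal $J$ of $S$ has the form $\langle \varphi^a+uf(x)\rangle+u(J:u)$, and the first generator can be omitted when $J\subseteq\langle u\rangle$.

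Next we induct on $t$ through $\Phi:S\to S'$, reduction modulo $u^{t-1}$, where $S'$ is the analogous ring over $R^{t-1}$. The base case $t=1$ is the chain ring itself. For $J\subseteq\langle u\rangle$ we use the map $c\mapsto uc$. It identifies $(J:u)$ modulo $\operatorname{Ann}(u)=\langle u^{t-1}\rangle$ with an ideal of $S'$. By the induction hypothesis, that ideal has generators of the shape $\theta_i$ with $t$ replaced by $t-1$. Multiplying these by $u$ gives generators $u^{(t-1)-i}\varphi^{a_i}+\dots$, and the $\langle u^{t-1}\rangle$ part contributes the remaining generator $u^{t-1}\varphi^{a_{t-2}}$. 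This yields (ii) in the raw form of Theorem \ref{Maintheorem}(ii).

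To put the generators in normal form, expand each lower-order tail $u^{(t-1)-(i-1)}g(x)$ $\varphi$-adically. Group the terms by powers of $u$, and in each $u$-layer factor out the smallest power $\varphi^{t_{i-1,j}}$. What remains, $g_{i-1,j}$, is either $0$ or a unit by the non-unit criterion above. Terms with exponent $\ge a_i$ can be absorbed into a multiple of the leading term $u^{(t-1)-i}\varphi^{a_i}$, and they can be cleared inductively layer by layer. Terms lying in ideals generated by the other $\theta_{i_j}$ can be absorbed the same way. This gives the strict inequalities $t_{i-1,i-1}<\dots<t_{i-1,0}<a_i$. A generator of lower $u$-order whose leading exponent is not smaller than that of a higher-order one, that is with $a_{i_j}\ge a_{i_{j+1}}$, is redundant and can be discarded, which leaves the increasing chain of $a$'s. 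For (iii), set $I=u(J:u)$, which is of type (ii). The leading exponent $a$ of the new generator satisfies $a>a_{i_d}$, since otherwise $u^{\cdot}\varphi^{a_{i_d}}$ would be implied and the generator would be non-minimal.

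The main obstacle is the reduction of the tails to normal form. Whenever a $\varphi$-power reaches $p^s$ in this process, it must be rewritten via $\varphi^{p^s}=-u^k\cdot(\text{unit})$. That rewriting pushes terms into higher $u$-layers, so the normalization must proceed from the lowest $u$-power upward. I would handle this exactly as in Corollary 3.6 of \cite{AkaKanSar}: reduce $\varphi$-exponents modulo $p^s$ layer by layer, and check that these carries only alter terms in layers that have not yet been normalized.
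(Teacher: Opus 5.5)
Your overall route (reduction mod $u$ to the chain ring $\mathbb{F}_{p^m}[x]/\langle\varphi^{p^s}\rangle$, Proposition \ref{FormofIdeals}, induction on $t$ via $c\mapsto uc$, then $\varphi$-adic normalisation) is the one the paper intends; the paper itself just refers to Corollary 3.6 of \cite{AkaKanSar}. However, the inductive step as written is wrong.

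\textbf{The inductive step.} Since $u\cdot u^{t-1}=0$, the $\langle u^{t-1}\rangle$ part of $(J:u)$ contributes nothing to $J=u(J:u)$. Also, a generator $u^{t-1}\varphi^{a}$ would be the $i=0$ generator, not $\theta_{t-2}$, whose leading term is $u\varphi^{a_{t-2}}$. What you missed is that $K=(J:u)/\langle u^{t-1}\rangle$ need not lie in $uS'$; for $J=\langle u\varphi\rangle$ one gets $K=\langle\varphi\rangle$. So you must use the inductive hypothesis in all of its forms (i)--(iii). The generator $\theta_{t-2}$ arises as $u(\varphi^{a}+uf)$ from a type (iii) ideal $K=\langle\varphi^{a}+uf\rangle+I'$ of $S'$. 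Multiplying the generators of $I'$ by $u$ gives the remaining $\theta_i$, with tails already normalised and the chain $a_{i_1}<\dots<a_{i_d}$ inherited from $S'$. Since (iii) puts no condition on $f$, the real work is normalising the tail $u^2f$ of $\theta_{t-2}$.

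\textbf{The normalisation step.} Your normalisation paragraph does not accomplish this.
\begin{itemize}
\item Absorbing terms of exponent $\geq a_i$ into the leading term, or into the other $\theta_{i_j}$, only yields $t_{i-1,j}<a_i$. It says nothing about $t_{i-1,j}<t_{i-1,j-1}$.
\item To remove a term $u^{r+j_2}\varphi^{s_{j_2}}g_{j_2}$ with $s_{j_2}\geq s_{j_1}$ and $j_1<j_2$ from $u^{r}\varphi^{a}+\sum_{j\geq 1}u^{r+j}\varphi^{s_j}g_j$, you must multiply by the unit $1-u^{j_2-j_1}\varphi^{s_{j_2}-s_{j_1}}g_{j_2}g_{j_1}^{-1}$. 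This cancels that term but feeds new terms into the layers $u^{r+j_2-j_1},\dots,u^{r+j_2-1}$, and into deeper layers via carries from $\varphi^{p^s}=-u^k(\cdots)$. You need an argument that this process terminates in the required shape.
\end{itemize}

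\textbf{Discarding generators is not valid.} A generator whose leading term is implied by another cannot simply be dropped. The difference $\theta_{i_j}-u^{i_{j+1}-i_j}\varphi^{a_{i_j}-a_{i_{j+1}}}\theta_{i_{j+1}}$ lies deeper in the $u$-adic filtration but is in general nonzero. For example, take $t=4$ and $\tilde\delta\in\mathbb{F}_{p^m}$. Then $\langle u^2\varphi+u^3,\,u\varphi\rangle=\langle u^3,\,u\varphi\rangle\neq\langle u\varphi\rangle$. Indeed $u^3\notin\langle\varphi\rangle$, because $S/\langle\varphi\rangle\cong\mathbb{F}_{p^{2m}}[u]/\langle u^4\rangle$. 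The same objection applies to your justification of $a_{i_d}<a$ in (iii), which the induction does not give for free.

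\textbf{A clean fix.} Define $T_\ell$ by $\mu(\{c: u^\ell c\in J\})=\langle\varphi^{T_\ell}\rangle$. At each level $\ell\geq 1$ with $T_\ell<T_{\ell-1}$, pick one element $u^{\ell}\varphi^{T_\ell}+u^{\ell+1}(\cdots)\in J$. A level-by-level reduction shows that these, together with $\varphi^{a}+uf$, generate $J$. The inequalities $a_{i_1}<\dots<a_{i_d}<a$ then hold automatically. What remains is the tail normalisation above.
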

As a special case, using Theorem \ref{ideals of Rt[x]x2psdeltaxpstildedelta2} and an argument similar to the one used to prove Corollary \ref{chainfornpscase}, we get the following corollary.
\begin{corollary}\label{k=1 case} Let $p^m\equiv 2 (\textnormal{mod }3).$ Then, 
    the ideals of $\frac{R^t[x]}{\langle x^{2p^s}+\tilde{\delta}x^{p^s}+\tilde{\delta}^2\rangle},$ where $\tilde{\delta}$ is a unit in $R^t$, form a chain if and only if $\tilde{\delta}_1\neq 0$. In fact,  the ideals of $\frac{R^t[x]}{\langle x^{2p^s}+\tilde{\delta}x^{p^s}+\tilde{\delta}^2\rangle}$ are
    $$\langle 0\rangle \subset \langle (x^{2}+\tilde{\delta}_{0,0}x+\tilde{\delta}_{0,0}^2)^{tp^s-1}\rangle\subset \langle (x^{2}+\tilde{\delta}_{0,0}x+\tilde{\delta}_{0,0}^2)^{tp^s-2}\rangle\subset \dots \subset \langle (x^{2}+\tilde{\delta}_{0,0}x+\tilde{\delta}_{0,0}^2)\rangle\subset \langle 1\rangle,$$
    where $\tilde{\delta}_0=\tilde{\delta}_{0, 0}^{p^s}$. Moreover, $x^{2}+\tilde{\delta}_{0,0}x+\tilde{\delta}_{0,0}^2 $
is nilpotent with index $tp^s$.
\end{corollary}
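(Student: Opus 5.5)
Set $h(x)=x^{2}+\tilde{\delta}_{0,0}x+\tilde{\delta}_{0,0}^2$ and $S=\frac{R^t[x]}{\langle x^{2p^s}+\tilde{\delta}x^{p^s}+\tilde{\delta}^2\rangle}$. The plan follows the proof of Corollary \ref{chainfornpscase}, using the facts established just before Theorem \ref{ideals of Rt[x]x2psdeltaxpstildedelta2}: $h$ is irreducible over $\mathbb{F}_{p^m}$, and in $S$ we have $h^{p^s}=-u^k\bigl(\tilde{\delta}_k(x^{p^s}+2\tilde{\delta}_0)+u(\cdots)\bigr)$ with $\tilde{\delta}_k(x^{p^s}+2\tilde{\delta}_0)=\{\tilde{\delta}_{k,0}(x+\alpha_0\tilde{\delta}_{0,0})\}^{p^s}$ a unit by Corollary \ref{ax+b invertible in 2ps case}. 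Hence $\langle h^{p^s}\rangle=\langle u^k\rangle$. The nilpotency statement is read in the chain case $\tilde{\delta}_1\neq 0$, i.e. $k=1$, where the index is $\lceil t/1\rceil p^s=tp^s$.

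\textbf{Sufficiency.} Assume $\tilde{\delta}_1\ne0$, so $k=1$ and $u=h^{p^s}w$ for a unit $w$. Then every element of $\langle u\rangle$ lies in $\langle h^{p^s}\rangle$, so $S$ is generated over $\mathbb{F}_{p^m}[x]$ by powers of $h$ alone. Write $c\in S$ via the representation of Remark \ref{generalrepresentation} and substitute $u=h^{p^s}w$. This should give $c=h^{j}v$ with $v$ a unit, where $j$ is the least power of $h$ with a nonzero coefficient among the degree-$<2$ coefficients. The unit test of Remark \ref{generalrepresentation}, which is linear polynomials being invertible, makes the leading factor invertible. So every nonzero element is a unit times $h^j$ with $0\le j<tp^s$.

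Consequently every ideal is $\langle h^j\rangle$ for the least such $j$ occurring. This gives exactly the displayed chain, which is strict because $h$ has nilpotency index $tp^s$. Alternatively one can read this off Theorem \ref{ideals of Rt[x]x2psdeltaxpstildedelta2}, since each $\theta_i$ collapses to a single power of $h$ once $u^{j}$ is replaced by $h^{jp^s}$ times a unit.

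\textbf{Necessity.} Assume $\tilde{\delta}_1=0$, so $k\ge2$. The claim is that $\langle u,h\rangle$ is not principal. Since $S$ is local with maximal ideal $\langle u,h\rangle$, a principal maximal ideal in a chain ring would force $u\in\langle h\rangle$ or $h\in\langle u\rangle$.

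\emph{Case $h\in\langle u\rangle$.} Reduce mod $u$ to get $h=h^{p^s}\cdot n(x)$ in $\mathbb{F}_{p^m}[x]$, which is impossible by degree/valuation at $h$.

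\emph{Case $u\in\langle h\rangle$.} Write $u=h\,m_1+(x^{2p^s}+\tilde{\delta}x^{p^s}+\tilde{\delta}^2)m_2$ in $R^t[x]$. Evaluate at a root $\gamma$ of $h$ in an extension $\mathbb{F}_{p^{2m}}$, working in $\mathbb{F}_{p^{2m}}[u]/\langle u^t\rangle$. Then $\gamma^{p^s}$ is a root of $x^2+\tilde{\delta}_0x+\tilde{\delta}_0^2$. The modulus at $\gamma$ equals $(\tilde{\delta}-\tilde{\delta}_0)\gamma^{p^s}+\tilde{\delta}^2-\tilde{\delta}_0^2$, which lies in $\langle u^k\rangle\subseteq\langle u^2\rangle$. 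So $u\in\langle u^2\rangle$, a contradiction.

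\textbf{Main obstacle.} The step I expect to be most delicate is the unit-factoring in sufficiency. There one must check that, after substituting $u=h^{p^s}w$, the residual series remains a unit; this requires tracking that $w$ mixes terms only into higher $h$-adic order.
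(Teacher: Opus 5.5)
Your proposal is correct and follows essentially the paper's route. For sufficiency, $u=h^{p^s}w$ forces every ideal to be generated by a power of $h$; your factorization $c=h^{j}v$ is an explicit form of reading the chain off Theorem~\ref{ideals of Rt[x]x2psdeltaxpstildedelta2}, and the step you flag as delicate is immediate, since after pulling out $h^{j}$ all remaining terms are multiples of $h$, so $v$ is a unit plus a nilpotent. For necessity you use the same two non-containments as Corollary~\ref{chainfornpscase}, namely $u\notin\langle h\rangle$ (evaluation at a root of $h$) and $h\notin\langle u\rangle$ (reduction mod $u$).

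As in the paper, that last step tacitly needs $s\geq 1$: for $s=0$ one has $h\in\langle u\rangle$, and the ring is a chain ring whatever $\tilde{\delta}_1$ is.
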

\par
Next, we assume that $\delta$ is not a cube. By Proposition \ref{delta power n iff delta_0 power n}, $\delta_{0}$ is not a cube. Also, using the notation in Section \ref{section2}, if $\delta_{0,0} \in \mathbb{F}_{p^m}$ is such that $\delta_{0,0}^{p^s}=\delta_0$ then $\delta_{0,0}$ is also not a cube. Thus, $x^3-\delta_{0,0} \in \mathbb{F}_{p^m}[x]$ is an irreducible polynomial. 
Hence, taking $n=3$ in Theorem \ref{nps ideals}, we get the generators and the structure of ideals of $R^{t,3}_{\delta}$, in this case.\\
Note that, by Corollary \ref{deg n-1 polynomial in nps case}, any non-zero polynomial of the form $ax^2+bx+c\in \mathbb{F}_{p^m}[x]$ is invertible  in $R^{t,3}_\delta$. Also, as in previous sections, if $k$ is the smallest integer such that $\delta_k \neq0$, where $1\leq k\leq t-1$, then $\langle (x^3-\delta_{0,0})^{p^s} \rangle=\langle u^k\rangle$, $x^3-\delta_{0,0}$ is nilpotent in $R^{t,3}_{\delta}$ with nilpotency index $\lceil \frac{t}{k}\rceil p^s$, and an arbitrary element $c(x)$ of $R^{t,3}_{\delta}$ can be uniquely written as
        \begin{align*}
            \underset{i=0}{\overset{p^s-1}{\sum}}(a_{0,i}^{0}+a_{1,i}^{0}x+a_{2,i}^0x^2)(x^3-\delta_{0,0})^i+\dots+u^{t-1}\underset{i=0}{\overset{p^s-1}{\sum}}(a_{0,i}^{t-1}+a_{1,i}^{t-1}x+a_{2,i}^{t-1}x^2)(x^3-\delta_{0,0})^i,
        \end{align*}
        where $a_{j,i}^k\in \mathbb{F}_{p^m}$ for $0\leq j \leq 2,\,0\leq i\leq p^s-1,\,$ and $0\leq k\leq t-1.$
        Further, an arbitrary element $c(x)$ of $R^{t,3}_{\delta}$, as given above, is a non-unit if and only if $a_{0,0}^0,\, a_{1,0}^0,\, a_{2,0}^0=0.$ 
        
Also, as in  Corollary \ref{chainfornpscase}, the ideals of $R^{t,3}_{\delta}$ form a chain if and only if $\delta_1\ne 0$. In fact, the ideals  of $R^{t,2}_{\delta}$, in this case, are
       $$\langle 0\rangle \subset \langle (x^{3}-\delta_{0,0})^{tp^s-1}\rangle\subset \langle (x^{3}-\delta_{0,0})^{tp^s-2}\rangle\subset \dots \subset \langle (x^{3}-\delta_{0,0})\rangle\subset \langle 1\rangle,$$
    where $\tilde{\delta}_0=\tilde{\delta}_{0, 0}^{p^s}$. Moreover, $x^{3}-\delta_{0,0}$ is nilpotent with index $tp^s$. 

In the next theorem, we summarize the description of ideals of the ring $R^{t,3}_{\delta}$ in all cases.
 \begin{theorem}\label{final theorem for section 5}
      Let $\delta=\delta_0+u\delta_1+\dots+u^{t-1}\delta_{t-1} \in R^t$ be a unit and let $\delta_{0, 0}^{p^s}=\delta_0$.
\begin{enumerate}
    \item[(a)] If $\delta$ is a cube in $R^t$, say $\delta=\tilde{\delta}^3$ and $p^m\equiv 1(\textnormal{mod }3 )$, then $x^{3p^s} - \delta =(x^{p^s} - \tilde{\delta})(x^{p^s} - b\tilde{\delta})(x^{p^s} - c\tilde{\delta})$, $R^{t, 3}_\delta\cong R^{t, 1}_{\tilde{\delta}}\times R^{t, 1}_{b\tilde{\delta}} \times R^{t, 1}_{c\tilde{\delta}}$, any ideal of $R^{t, 3}_\delta$ has the form $I_1 \times I_2 \times I_3$ where $I_1$, $I_2$, and $I_3$ are ideals of $R^{t,1}_{\tilde{\delta}}$, $R^{t,1}_{b\tilde{\delta}}$, and $R^{t,1}_{c\tilde{\delta}}$, respectively. The structure of $I_1$, $I_2$, and $I_3$ can be obtained by using Theorem \ref{nps ideals} with appropriate substitutions, as described above.
     \item [(b)]  If $\delta$ is a cube in $R^t$, say $\delta=\tilde{\delta}^3$ and $p^m\equiv 2 (\textnormal{mod }3)$,  then $R^{t,3}_{\delta}\cong R^{t,1}_{\tilde{\delta}}\times \frac{R^t[x]}{\langle x^{2p^s}+\tilde{\delta}x^{p^s}+\tilde{\delta}^2\rangle}$, and any ideal of $R^{t,3}_{\delta}$ has the form $I_1\times I_2,$ where $I_1$ is an ideal of $R^{t,1}_{\tilde{\delta}}$ the structure of which can be obtained using Theorem \ref{nps ideals} with appropriate substitutions, and $I_2$ is an ideal of $\frac{R^t[x]}{\langle x^{2p^s}+\tilde{\delta}x^{p^s}+\tilde{\delta}^2\rangle}$ the structure of which is described in Theorem \ref{ideals of Rt[x]x2psdeltaxpstildedelta2}.
     \item[(c)] If $\delta$ is not a cube, then the ideals of the ring $R^{t,3}_{\delta}$ are given by Theorem \ref{nps ideals} for $n=3$.
 \end{enumerate}
 \end{theorem}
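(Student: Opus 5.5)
The proof is a case analysis on whether $\delta$ is a cube in $R^t$, putting together the facts collected in this section. By Proposition \ref{delta power n iff delta_0 power n} (with $n=3$, $p\nmid 3$), $\delta$ is a cube in $R^t$ exactly when $\delta_0$ is a cube in $\mathbb{F}_{p^m}$. So the three parts (a), (b), (c) exhaust all possibilities, once the cube case is split by the residue of $p^m$ modulo $3$.

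\textbf{Cube case.} Suppose $\delta=\tilde\delta^3$; then $\tilde\delta$ is a unit. I would first justify the factorization $x^{3p^s}-\delta=(x^{p^s}-\tilde\delta)(x^{2p^s}+\tilde\delta x^{p^s}+\tilde\delta^2)$ and the coprimality of the two factors. For coprimality it suffices to work modulo $u$, since two polynomials comaximal modulo the nilpotent ideal $\langle u\rangle$ are comaximal. Modulo $u$ the factors are $(x-\tilde\delta_{0,0})^{p^s}$ and $(x^2+\tilde\delta_{0,0}x+\tilde\delta_{0,0}^2)^{p^s}$. These are coprime because $3\tilde\delta_{0,0}^2\ne0$, as $p\ne3$ and $\tilde\delta_{0,0}\neq 0$. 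The Chinese remainder theorem then gives the decomposition (\ref{chinese1}), and ideals of a finite product of rings are products of ideals.

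\textbf{Part (a).} If $p^m\equiv1\pmod 3$, Lemma 8 of \cite{dinh2020constacyclic} splits the quadratic factor as $(x^{p^s}-b\tilde\delta)(x^{p^s}-c\tilde\delta)$ with $b\ne c$ and $b,c\ne1$. Pairwise coprimality follows as before from $1,b,c$ being distinct, so a second application of CRT gives the triple product. Each component is an $R^{t,1}_{\lambda}$ with $\lambda$ a unit, and its ideals are described by Theorem \ref{nps ideals} with $n=1$, since $x-\lambda_{0,0}$ is irreducible.

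\textbf{Part (b).} If $p^m\equiv2\pmod3$, the first factor is handled in the same way. The second factor's ideals are given by Theorem \ref{ideals of Rt[x]x2psdeltaxpstildedelta2}, whose hypotheses were established in the preceding discussion: irreducibility of $x^2+\tilde\delta_{0,0}x+\tilde\delta_{0,0}^2$ (Lemma 11 of \cite{dinh2020constacyclic}), invertibility of linear polynomials by Corollary \ref{ax+b invertible in 2ps case}, and $\langle (x^2+\tilde\delta_{0,0}x+\tilde\delta_{0,0}^2)^{p^s}\rangle=\langle u^k\rangle$.

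\textbf{Part (c).} If $\delta$ is not a cube, then $\delta_0$ is not a cube by Proposition \ref{delta power n iff delta_0 power n}. Neither is $\delta_{0,0}$, since $\delta_{0,0}^{p^s}=\delta_0$ and the Frobenius map $y\mapsto y^{p^s}$ is an automorphism of $\mathbb{F}_{p^m}$ that preserves cubes. A cubic with no root is irreducible, so $x^3-\delta_{0,0}$ is irreducible over $\mathbb{F}_{p^m}$, and Theorem \ref{nps ideals} applies with $n=3$.

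The only step needing care is the coprimality and CRT in the cube case over $R^t$ rather than over a field. I expect the reduction modulo the nilpotent ideal $\langle u\rangle$, followed by lifting the Bézout relation, to settle it.
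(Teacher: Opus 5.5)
Your proposal is correct and follows essentially the same route as the paper. In the cube case you use the CRT decomposition, splitting the quadratic factor via Lemma 8 of \cite{dinh2020constacyclic} when $p^m\equiv 1\pmod 3$ and invoking Theorem \ref{ideals of Rt[x]x2psdeltaxpstildedelta2} when $p^m\equiv 2\pmod 3$; in the non-cube case you use Proposition \ref{delta power n iff delta_0 power n} to get irreducibility of $x^3-\delta_{0,0}$ and apply Theorem \ref{nps ideals} with $n=3$. Your comaximality argument goes beyond the paper, which only asserts ``these are coprime factors'': you reduce modulo the nilpotent ideal $\langle u\rangle$, use $3\tilde\delta_{0,0}^2\neq 0$, and note that $1+uh$ is a unit.
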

Now, as a special case, we give all the types of ideals of $R^{3,3}_\delta = \frac{\frac{\mathbb{F}_{p^m}[u]}{\langle u^3\rangle}[x]}{\langle x^{3p^s}-\delta \rangle}$ and give their cardinalities, that is, we discuss the case when $t=3$. Let $\delta=\delta_0 + u\delta_1 + u^2\delta_2 \in R^3$ be a unit. If $\delta$ is a cube, say $\delta=\tilde{\delta}^3$, and $p^m\equiv 1 (\textnormal{mod } 3)$ then, as discussed at the beginning of this section, any ideal $I$ of $\frac{R^t[x]}{\langle x^{3p^s}-\delta\rangle}$ has the form $I_1 \times I_2 \times I_3$ where $I_1$ is an ideal of $\frac{R^t[x]}{\langle x^{p^s}-\tilde{\delta}\rangle}$, $I_2$ is an ideal of $\frac{R^t[x]}{\langle x^{p^s}-b\tilde{\delta}\rangle}$, and $I_3$ is an ideal of $\frac{R^t[x]}{\langle x^{p^s}-c\tilde{\delta}\rangle}$. We can, now, use Lemma \ref{Torsions} and Theorem \ref{Cardinality} to get the torsional degrees and cardinalities of $I_1$, $I_2$, and $I_3$. The cardinality of $I$ is, then, the product of cardinalities of $I_1$, $I_2$, and $I_3$. \\ 
Next, let $\delta$ be a cube, say $\delta=\tilde{\delta}^3$, and $p^m\equiv 2 (\textnormal{ mod } 3)$. Recall that, in this case, any ideal $I$ of $R^{3,3}_{\delta}$ has the form $I_1\times I_2$ where $I_1$ is an ideal of $\frac{R^3[x]}{\langle x^{p^s}-\tilde{\delta} \rangle}$ and $I_2$ is an ideal of $\frac{R^3[x]}{\langle x^{2p^s}+\tilde{\delta}x^{p^s}+\tilde{\delta}^2 \rangle}$. Now, the cardinality of $I_1$ can be obtained using Theorem \ref{Cardinality}. Thus, to obtain the cardinality of $I$, it is enough to obtain the cardinality of $I_2$. Let $\tilde{\delta}=\tilde{\delta}_0+u\tilde{\delta}_1+u^2\tilde{\delta}_2$. If $\tilde{\delta}_1\neq 0$ then by Corollary \ref{k=1 case}, the ideals of $\frac{R^3[x]}{\langle x^{2p^s}+\tilde{\delta}x^{p^s}+\tilde{\delta}^2 \rangle}$ where $\tilde{\delta}=\tilde{\delta}_0+u\tilde{\delta}_1+u^2\tilde{\delta}_2$ with $\tilde{\delta}_1\neq 0$. In fact, the ideals are
    $$\langle 0\rangle \subset \langle (x^{2}+\tilde{\delta}_{0,0}x+\tilde{\delta}_{0,0}^2)^{3p^s-1}\rangle\subset \langle (x^{2}+\tilde{\delta}_{0,0}x+\tilde{\delta}_{0,0}^2)^{3p^s-2}\rangle\subset \dots \subset \langle (x^{2}+\tilde{\delta}_{0,0}x+\tilde{\delta}_{0,0}^2)\rangle\subset \langle 1\rangle.$$
  Moreover, $x^{2}+\tilde{\delta}_{0,0}x+\tilde{\delta}_{0,0}^2 $
is nilpotent with index $3p^s$. For the cardinality of these ideals, we have the following theorem.
\begin{theorem}
    For $0 \leq i \leq 3p^s -1$, the cardinality of the ideal $\langle (x^{2}+\tilde{\delta}_{0,0}x+\tilde{\delta}_{0,0}^2)^{i}\rangle$ is $p^{m(3p^s-i)}$. 
\end{theorem}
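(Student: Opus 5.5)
The plan is to exploit the chain structure from Corollary \ref{k=1 case}. Write $f(x)=x^{2}+\tilde{\delta}_{0,0}x+\tilde{\delta}_{0,0}^2$ and $S=\frac{R^3[x]}{\langle x^{2p^s}+\tilde{\delta}x^{p^s}+\tilde{\delta}^2\rangle}$. Since $\tilde{\delta}_1\neq 0$, the ideals of $S$ are exactly the $\langle f^j\rangle$ for $0\le j\le 3p^s$, and $f$ is nilpotent of index $3p^s$. So $\langle f^{i}\rangle$ has a filtration
$$\langle f^{i}\rangle\supset\langle f^{i+1}\rangle\supset\dots\supset\langle f^{3p^s}\rangle=\langle 0\rangle,$$
all of whose inclusions are strict. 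Hence $|\langle f^i\rangle|=\prod_{j=i}^{3p^s-1}|\langle f^j\rangle/\langle f^{j+1}\rangle|$, and it suffices to compute the size of a single layer.

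\textbf{Step 1.} Show that each layer $\langle f^j\rangle/\langle f^{j+1}\rangle$ is a one-dimensional vector space over the residue field $K=S/\langle f\rangle$. Multiplication by $f^j$ induces a surjection $S\to\langle f^j\rangle/\langle f^{j+1}\rangle$. Its kernel is an ideal containing $\langle f\rangle$, and it is proper because $f^j\notin\langle f^{j+1}\rangle$ (the chain is strict). Since $\langle f\rangle$ is the unique maximal ideal in the chain, the kernel is exactly $\langle f\rangle$. Therefore every layer is isomorphic to $K$, and
$$|\langle f^i\rangle|=|K|^{3p^s-i}.$$

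\textbf{Step 2.} Identify $|K|$, using Remark \ref{generalrepresentation} and the fact that $\langle f^{p^s}\rangle=\langle u\rangle$ (here $k=1$). Reducing modulo $f$ kills $u$, so $K\cong\mathbb{F}_{p^m}[x]/\langle f\rangle$. By the unique representation recalled before Theorem \ref{ideals of Rt[x]x2psdeltaxpstildedelta2}, each layer is parametrized by the coefficient data $(a^{k}_{0,j'},a^{k}_{1,j'})$ attached to a single power of $f$, where $j=kp^s+j'$ with $0\le k\le 2$ and $0\le j'\le p^s-1$. The stated exponent $m(3p^s-i)$ corresponds to assigning $p^m$ elements to each layer, so the cardinality of the ideal should be checked directly against the normal form by counting the free coefficients that occur in a general element of $\langle f^i\rangle$.

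\textbf{Main obstacle.} This step is the crux, and it is where I expect trouble. Each layer in the normal form carries the pair $a_{0,j'}+a_{1,j'}x$, that is, two $\mathbb{F}_{p^m}$-coordinates. The reconciliation with the stated formula therefore has to be made explicit. I would first verify the boundary case $i=0$: the formula must then equal $|S|$, and $|S|$ is obtained by counting coefficients of $S$ as a free $R^3$-module of rank $2p^s$. Only after that check would I commit to the final exponent.
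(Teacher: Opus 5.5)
Your Step 1 is correct, and it settles the matter rather than leaving an obstacle. Since $\tilde{\delta}_1\neq 0$, we have $u\in\langle f^{p^s}\rangle\subseteq\langle f\rangle$, so the residue field is $K=S/\langle f\rangle\cong\mathbb{F}_{p^m}[x]/\langle f\rangle\cong\mathbb{F}_{p^{2m}}$, because $f$ is irreducible of degree $2$. Each layer $\langle f^j\rangle/\langle f^{j+1}\rangle$ is a cyclic $K$-module generated by $f^j$, and it is nonzero for $j<3p^s$. Indeed, $f^j\neq 0$, and $f^j=f^{j+1}r$ would give $f^j(1-fr)=0$ with $1-fr$ a unit. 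So each layer has exactly $p^{2m}$ elements, and
\[
|\langle f^i\rangle|=p^{2m(3p^s-i)},\qquad 0\le i\le 3p^s .
\]
The boundary check you propose confirms this and refutes the printed formula. $S$ is a free $R^3$-module of rank $2p^s$, so $|S|=(p^{3m})^{2p^s}=p^{6mp^s}$, whereas the statement gives $p^{3mp^s}$ at $i=0$. There is therefore nothing to reconcile: the statement as printed is false, with the exponent off by a factor of $2$. The only defect in your proposal is that you stop short of saying so and proving the corrected value.

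The paper's own argument takes a general element in the normal form, with coefficients $a_{k,j}+b_{k,j}x$, and multiplies it by $f^i$. It substitutes $f^{p^s}=-u\bigl(\tilde{\delta}_1(x^{p^s}+2\tilde{\delta}_0)+u(\cdots)\bigr)$ and then simply asserts the count $p^{m(3p^s-i)}$. That count is inconsistent with its own expansion, which visibly carries two $\mathbb{F}_{p^m}$-coordinates per surviving power of $f$. Moreover, the displayed expansion only makes sense for $i\le p^s$. The correct value $p^{2m(3p^s-i)}$ matches the count $p^{2m(tp^s-i)}$ the paper gives for $\langle (x^2-\delta_{0,0})^i\rangle$ in Section~\ref{section4}. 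It also matches the formula $(p^{2m})^{3p^s-\sum T_i}$ the paper uses later in Section~\ref{section5}. Your composition-length argument is the cleaner route. It treats all $0\le i\le 3p^s-1$ uniformly and needs only the nilpotency index of $f$ and $|K|$, with no coefficient bookkeeping.
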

\begin{proof} We have,

$(x^{2}+\tilde{\delta}_{0,0}x+\tilde{\delta}_{0,0}^2)^{i}\Bigg(\underset{j=0}{\overset{p^s-1}{\sum}}(a_{0,j}+b_{0,j}x)(x^{2}+\tilde{\delta}_{0,0}x+\tilde{\delta}_{0,0}^2)^{j}+u\underset{j=0}{\overset{p^s-1}{\sum}}(a_{1,j}+b_{1,j}x)(x^{2}+\tilde{\delta}_{0,0}x+\tilde{\delta}_{0,0}^2)^{j}\\+u^2\underset{j=0}{\overset{p^s-1}{\sum}}(a_{2,j}+b_{2,j}x)(x^{2}+\tilde{\delta}_{0,0}x+\tilde{\delta}_{0,0}^2)^{j}\Bigg)\\=\underset{j=i}{\overset{p^s-1}{\sum}}(a_{0,j}+b_{0,j}x)(x^{2}+\tilde{\delta}_{0,0}x+\tilde{\delta}_{0,0}^2)^{j}-u\big(\tilde{\delta}_1(x^{p^s}+2\tilde{\delta}_0)+u(\dots)\big)\underset{j=0}{\overset{i-1}{\sum}}(a_{0,j}+b_{0,j}x)(x^{2}+\tilde{\delta}_{0,0}x+\tilde{\delta}_{0,0}^2)^{j}+u\underset{j=i}{\overset{p^s-1}{\sum}}(a_{1,j}+b_{1,j}x)(x^{2}+\tilde{\delta}_{0,0}x+\tilde{\delta}_{0,0}^2)^{j}-u^2\big(\tilde{\delta}_1(x^{p^s}+2\tilde{\delta}_0)+u(\dots)\big)\underset{j=0}{\overset{i-1}{\sum}}(a_{1,j}+b_{1,j}x)(x^{2}+\tilde{\delta}_{0,0}x+\tilde{\delta}_{0,0}^2)^{j}+u^2\underset{j=i}{\overset{p^s-1}{\sum}}(a_{2,j}+b_{2,j}x)(x^{2}+\tilde{\delta}_{0,0}x+\tilde{\delta}_{0,0}^2)^{j}$.
   
   It follows that the cardinality of the ideal is $p^{m(3p^s-i)}.$\hfill{$\square$}
\end{proof}
Next, let  $\tilde{\delta}_1=0$ and $\tilde{\delta}_2= 0$ so that $\tilde{\delta}=\tilde{\delta}_0$. Also, using the notation from Section \ref{section2}, $\tilde{\delta}_{0,0}^{p^s}=\tilde{\delta}_0$. Hence, $x^{2p^s}+\tilde{\delta}x^{p^s}+\tilde{\delta}^2=x^{2p^s}+\tilde{\delta}_0x^{p^s}+\tilde{\delta}_0^2=(x^2+\tilde{\delta}_{0,0}x+\tilde{\delta}_{0,0}^2)^{p^s}$. Thus, 
the ideals of $\frac{R^3[x]}{\langle x^{2p^s}+\tilde{\delta}x^{p^s}+\tilde{\delta}^2 \rangle},$ are same as ideals of $\frac{R^3[x]}{\langle (x^2+\tilde{\delta}_{0,0}x+\tilde{\delta}_{0,0}^2)^{p^s}\rangle}$. The torsional degree of the ideals, in this case, can be determined using the approach in \cite{hesari2024torsion}, replacing $x-1$ with $x^2+\tilde{\delta}_{0,0}x+\tilde{\delta}_{0,0}^2$ and taking the coefficients in $\mathbb{F}_{p^m}+x\mathbb{F}_{p^m}$. The cardinality can, then, be computed using Theorem 4.5(iv) of \cite{AkaKanSar}.\\
Finally, let $\tilde{\delta}_1=0$ and $\tilde{\delta}_2\neq 0$. In this situation, we first list different types of ideals of the ring $\frac{R^3[x]}{\langle x^{2p^s}+\tilde{\delta}x^{p^s}+\tilde{\delta}^2 \rangle}$ in the next theorem.

\begin{theorem}\label{ideals of 3ps specific case1}
    The ideals of the ring $\frac{R^3[x]}{\langle x^{2p^s}+\tilde{\delta}x^{p^s}+\tilde{\delta}^2 \rangle}$ have one of the following eight types:
     \begin{enumerate}
     \item{\label{Type 1''}} $\langle 0 \rangle,\, \langle 1 \rangle.$
           \item{\label{Type 2''}} $\langle u^2 (x^{2}+\tilde{\delta}_{0,0}x+\tilde{\delta}_{0,0}^2)^a \rangle$ where $0\leq a \leq p^s-1.$
           \item{\label{Type 3''}} $\langle u(x^{2}+\tilde{\delta}_{0,0}x+\tilde{\delta}_{0,0}^2)^a+u^2(x^{2}+\tilde{\delta}_{0,0}x+\tilde{\delta}_{0,0}^2)^{t}h(x)\rangle$ where $0\leq t<L< a\leq p^s-1,$ $h(x)$ is either $0$ or a unit in $\frac{\mathbb{F}_{p^m}[x]}{\langle (x^{2}+\tilde{\delta}_{0,0}x+\tilde{\delta}_{0,0}^2)^{p^s}\rangle},$ and $L$ is the smallest non-negative integer such that $u^2(x^{2}+\tilde{\delta}_{0,0}x+\tilde{\delta}_{0,0}^2)^L\in \langle u(x^{2}+\tilde{\delta}_{0,0}x+\tilde{\delta}_{0,0}^2)^a+u^2(x^{2}+\tilde{\delta}_{0,0}x+\tilde{\delta}_{0,0}^2)^{t}h(x)\rangle.$
           
           \item{\label{Type 4''}}$\langle u(x^{2}+\tilde{\delta}_{0,0}x+\tilde{\delta}_{0,0}^2)^a+u^2(x^{2}+\tilde{\delta}_{0,0}x+\tilde{\delta}_{0,0}^2)^{t}h(x),\, u^2(x^{2}+\tilde{\delta}_{0,0}x+\tilde{\delta}_{0,0}^2)^b\rangle$ where $0\leq t<b<L< a\leq p^s-1$, $h(x)$ is either $0$ or a unit in $\frac{\mathbb{F}_{p^m}[x]}{\langle (x^{2}+\tilde{\delta}_{0,0}x+\tilde{\delta}_{0,0}^2)^{p^s}\rangle},$ and $L$ the smallest non-negative integer such that $u^2(x^{2}+\tilde{\delta}_{0,0}x+\tilde{\delta}_{0,0}^2)^{L}\in \langle u(x^{2}+\tilde{\delta}_{0,0}x+\tilde{\delta}_{0,0}^2)^a+u^2(x^{2}+\tilde{\delta}_{0,0}x+\tilde{\delta}_{0,0}^2)^{t}h(x)\rangle.$
           
           \item{\label{Type 5''}} $\langle (x^{2}+\tilde{\delta}_{0,0}x+\tilde{\delta}_{0,0}^2)^{a}+u(x^{2}+\tilde{\delta}_{0,0}x+\tilde{\delta}_{0,0}^2)^{t_0}h_0(x)+u^2(x^{2}+\tilde{\delta}_{0,0}x+\tilde{\delta}_{0,0}^2)^{t_1}h_1(x)\rangle$ where $0\leq t_1<t_0<L< a\leq p^s-1,\,0\leq t_1<M<L,\,$ $h_i(x)$ is either $0$ or a unit in $\frac{\mathbb{F}_{p^m}[x]}{\langle (x^{2}+\tilde{\delta}_{0,0}x+\tilde{\delta}_{0,0}^2)^{p^s}\rangle}$ for $i=0,\,1$ and $L,M$ are the smallest non-negative integers such that $u(x^{2}+\tilde{\delta}_{0,0}x+\tilde{\delta}_{0,0}^2)^L+u^2g(x),\, u^2(x^{2}+\tilde{\delta}_{0,0}x+\tilde{\delta}_{0,0}^2)^M\in \langle (x^{2}+\tilde{\delta}_{0,0}x+\tilde{\delta}_{0,0}^2)^{a}+u(x^{2}+\tilde{\delta}_{0,0}x+\tilde{\delta}_{0,0}^2)^{t_0}h_0(x)+u^2(x^{2}+\tilde{\delta}_{0,0}x+\tilde{\delta}_{0,0}^2)^{t_1}h_1(x)\rangle$ for some $g(x)\in \frac{\mathbb{F}_{p^m}[x]}{\langle (x^{2}+\tilde{\delta}_{0,0}x+\tilde{\delta}_{0,0}^2)^{p^s}\rangle}.$

           \item{\label{Type 6''}}$\langle (x^{2}+\tilde{\delta}_{0,0}x+\tilde{\delta}_{0,0}^2)^{a}+u(x^{2}+\tilde{\delta}_{0,0}x+\tilde{\delta}_{0,0}^2)^{t_0}h_0(x)+u^2(x^{2}+\tilde{\delta}_{0,0}x+\tilde{\delta}_{0,0}^2)^{t_1}h_1(x),\,u^2 (x^{2}+\tilde{\delta}_{0,0}x+\tilde{\delta}_{0,0}^2)^b \rangle\rangle$ for $0\leq t_1<t_0< a\leq p^s-1,\, 0\leq t_1<b <L<a,\,0\leq t_0<M<L$, $h_i(x)$ is either $0$ or a unit in $\frac{\mathbb{F}_{p^m}[x]}{\langle (x^{2}+\tilde{\delta}_{0,0}x+\tilde{\delta}_{0,0}^2)^{p^s}\rangle}$ for $i=0,\,1,$ and $L,M$ are the smallest non-negative integers such that $u^2(x^{2}+\tilde{\delta}_{0,0}x+\tilde{\delta}_{0,0}^2)^L,\, u(x^{2}+\tilde{\delta}_{0,0}x+\tilde{\delta}_{0,0}^2)^M+u^2g(x)\in \langle (x^{2}+\tilde{\delta}_{0,0}x+\tilde{\delta}_{0,0}^2)^{a}+u(x^{2}+\tilde{\delta}_{0,0}x+\tilde{\delta}_{0,0}^2)^{t_0}h_0(x)+u^2(x^{2}+\tilde{\delta}_{0,0}x+\tilde{\delta}_{0,0}^2)^{t_1}h_1(x) \rangle$ for some $g(x)\in \frac{\mathbb{F}_{p^m}[x]}{\langle (x^{2}+\tilde{\delta}_{0,0}x+\tilde{\delta}_{0,0}^2)^{p^s}\rangle}.$

           \item{\label{Type 7''}} $\langle (x^{2}+\tilde{\delta}_{0,0}x+\tilde{\delta}_{0,0}^2)^{a}+u(x^{2}+\tilde{\delta}_{0,0}x+\tilde{\delta}_{0,0}^2)^{t_0}h_0(x)+u^2(x^{2}+\tilde{\delta}_{0,0}x+\tilde{\delta}_{0,0}^2)^{t_1}h_1(x),\, u(x^{2}+\tilde{\delta}_{0,0}x+\tilde{\delta}_{0,0}^2)^b+u^2(x^{2}+\tilde{\delta}_{0,0}x+\tilde{\delta}_{0,0}^2)^{t_2}h_2(x)\rangle$ where $0\leq t_1<t_0<b<L< a\leq p^s-1\,, 0\leq t_2<b,\,0\leq t_i<M<b$ for $i=1,2$, $h_i(x)$ is either $0$ or a unit in $\frac{\mathbb{F}_{p^m}[x]}{\langle (x^{2}+\tilde{\delta}_{0,0}x+\tilde{\delta}_{0,0}^2)^{p^s}\rangle}$ for $0\leq i\leq 2,$ $L,M$ are the smallest non-negative integers such that $u(x^{2}+\tilde{\delta}_{0,0}x+\tilde{\delta}_{0,0}^2)^L+u^2g(x)\in \langle (x^{2}+\tilde{\delta}_{0,0}x+\tilde{\delta}_{0,0}^2)^{a}+u(x^{2}+\tilde{\delta}_{0,0}x+\tilde{\delta}_{0,0}^2)^{t_0}h_0(x)+u^2(x^{2}+\tilde{\delta}_{0,0}x+\tilde{\delta}_{0,0}^2)^{t_1}h_1(x)\rangle$ and $u^2(x^{2}+\tilde{\delta}_{0,0}x+\tilde{\delta}_{0,0}^2)^M\in \langle (x^{2}+\tilde{\delta}_{0,0}x+\tilde{\delta}_{0,0}^2)^{a}+u(x^{2}+\tilde{\delta}_{0,0}x+\tilde{\delta}_{0,0}^2)^{t_0}h_0(x)+u^2(x^{2}+\tilde{\delta}_{0,0}x+\tilde{\delta}_{0,0}^2)^{t_1}h_1(x),\, u(x^{2}+\tilde{\delta}_{0,0}x+\tilde{\delta}_{0,0}^2)^b+u^2(x^{2}+\tilde{\delta}_{0,0}x+\tilde{\delta}_{0,0}^2)^{t_2}h_2(x)\rangle$ for some $g(x)\in \frac{\mathbb{F}_{p^m}[x]}{\langle (x^{2}+\tilde{\delta}_{0,0}x+\tilde{\delta}_{0,0}^2)^{p^s}\rangle}.$

           \item{\label{Type 8''}}$\langle (x^{2}+\tilde{\delta}_{0,0}x+\tilde{\delta}_{0,0}^2)^{a}+u(x^{2}+\tilde{\delta}_{0,0}x+\tilde{\delta}_{0,0}^2)^{t_0}h_0(x)+u^2(x^{2}+\tilde{\delta}_{0,0}x+\tilde{\delta}_{0,0}^2)^{t_1}h_1(x),\, u(x^{2}+\tilde{\delta}_{0,0}x+\tilde{\delta}_{0,0}^2)^b+u^2(x^{2}+\tilde{\delta}_{0,0}x+\tilde{\delta}_{0,0}^2)^{t_2}h_2(x),\, u^2(x^{2}+\tilde{\delta}_{0,0}x+\tilde{\delta}_{0,0}^2)^c\rangle$ where $0\leq t_1<t_0< a\leq p^s-1,\, 0\leq t_1,t_2<c<b<a,\,0\leq t_0<b$,  $h_i(x)$ is either $0$ or a unit in $\frac{\mathbb{F}_{p^m}[x]}{\langle (x^{2}+\tilde{\delta}_{0,0}x+\tilde{\delta}_{0,0}^2)^{p^s}\rangle}$ for $0\leq i\leq 2$, $c<M$, the smallest non-negative integer such that $u^2(x^{2}+\tilde{\delta}_{0,0}x+\tilde{\delta}_{0,0}^2)^M\in \langle (x^{2}+\tilde{\delta}_{0,0}x+\tilde{\delta}_{0,0}^2)^{a}+u(x^{2}+\tilde{\delta}_{0,0}x+\tilde{\delta}_{0,0}^2)^{t_0}h_0(x)+u^2(x^{2}+\tilde{\delta}_{0,0}x+\tilde{\delta}_{0,0}^2)^{t_1}h_1(x),\, u(x^{2}+\tilde{\delta}_{0,0}x+\tilde{\delta}_{0,0}^2)^b+u^2(x^{2}+\tilde{\delta}_{0,0}x+\tilde{\delta}_{0,0}^2)^{t_2}h_2(x)\rangle$, and $b<L$, the smallest non-negative integer such that $u(x^{2}+\tilde{\delta}_{0,0}x+\tilde{\delta}_{0,0}^2)^L+u^2g(x)\in \langle (x^{2}+\tilde{\delta}_{0,0}x+\tilde{\delta}_{0,0}^2)^{a}+u(x^{2}+\tilde{\delta}_{0,0}x+\tilde{\delta}_{0,0}^2)^{t_0}h_0(x)+u^2(x^{2}+\tilde{\delta}_{0,0}x+\tilde{\delta}_{0,0}^2)^{t_1}h_1(x)\rangle$ for some $g(x)\in \frac{\mathbb{F}_{p^m}[x]}{\langle (x^{2}+\tilde{\delta}_{0,0}x+\tilde{\delta}_{0,0}^2)^{p^s}\rangle}.$
       \end{enumerate}
\end{theorem}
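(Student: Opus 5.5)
Write $f(x)=x^{2}+\tilde{\delta}_{0,0}x+\tilde{\delta}_{0,0}^2$ and $S=\frac{R^3[x]}{\langle x^{2p^s}+\tilde{\delta}x^{p^s}+\tilde{\delta}^2\rangle}$. The plan is to specialize Theorem \ref{ideals of Rt[x]x2psdeltaxpstildedelta2} to $t=3$ and sort the resulting generator patterns into the eight listed types, exactly as Theorem \ref{ideals of ps specific case} was obtained from Theorem \ref{nps ideals}.

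\textbf{Setting up the reduction.} Since $\tilde{\delta}_1=0\neq\tilde{\delta}_2$, we have $k=2$ in the discussion preceding Theorem \ref{ideals of Rt[x]x2psdeltaxpstildedelta2}. Hence $\langle f^{p^s}\rangle=\langle u^2\rangle$ in $S$, and $f$ is nilpotent of index $2p^s$. Every element of $S$ has the unique representation given there, with coefficients in $\mathbb{F}_{p^m}+x\mathbb{F}_{p^m}$. Every nonzero coefficient $a+bx$ is a unit by Corollary \ref{ax+b invertible in 2ps case}, and this is what lets all generators be normalized with powers of $f$ times units $h_i(x)$.

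\textbf{Enumerating the cases.} For $t=3$, part (ii) of Theorem \ref{ideals of Rt[x]x2psdeltaxpstildedelta2} allows $i\in\{0,1\}$. So $\theta_0=u^2f^{a}$ and $\theta_1=uf^{a}+u^2f^{t}h(x)$. A nontrivial ideal inside $\langle u\rangle$ is generated by one of the following:
\begin{itemize}
\item $\theta_0$ alone, giving Type \ref{Type 2''};
\item $\theta_1$ alone, giving Type \ref{Type 3''};
\item both $\theta_1$ and $\theta_0$, giving Type \ref{Type 4''}.
\end{itemize}
Part (iii) adds a generator $f^a+uf^{t_0}h_0+u^2f^{t_1}h_1$. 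Appending nothing, $u^2f^b$, $uf^b+u^2f^{t_2}h_2$, or both gives Types \ref{Type 5''}--\ref{Type 8''}. Together with the trivial ideals, that is eight types.

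\textbf{Justifying the inequalities.} For each type I would argue that a non-redundant generating set forces the stated constraints. In Type \ref{Type 4''}, if $b\ge L$ then $u^2f^b$ already lies in $\langle\theta_1\rangle$, and the ideal collapses to Type \ref{Type 3''}; hence $b<L$. Similarly, $t<L$ can be arranged by absorbing into the generator any $u^2$-part of degree at least $L$. The same absorption arguments, now using the minimal exponents $L$ and $M$ of $u$- and $u^2$-multiples in the principal part, give the conditions in Types \ref{Type 5''}--\ref{Type 8''}.

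\textbf{Main obstacle.} The delicate step is that reduction modulo $f^{p^s}$ introduces $u^2$ times a unit. In $S$ we have $f^{p^s}=-u^2(\tilde{\delta}_2(x^{p^s}+2\tilde{\delta}_0))$, and the factor $\tilde{\delta}_2(x^{p^s}+2\tilde{\delta}_0)$ is a unit. So multiplying generators by $f^{p^s-a}$ produces new $u$- and $u^2$-terms. I would handle this by multiplying through by the inverse of that unit, which is available by Corollary \ref{ax+b invertible in 2ps case}. This keeps the normal forms intact, so the case analysis proceeds verbatim as for Theorem \ref{ideals of ps specific case}, with $x-\delta_{0,0}$ replaced by $f$.
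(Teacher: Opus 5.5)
Your proposal takes the same route as the paper, which gives no separate proof: the eight types are Theorem~\ref{ideals of Rt[x]x2psdeltaxpstildedelta2} specialized to $t=3$ (so $k=2$ and $f^{p^s}=u^2\cdot(\text{unit})$), with the generator patterns sorted exactly as in Theorem~\ref{ideals of ps specific case}. One caveat concerns the printed statement rather than your strategy: your step ``a non-redundant generating set forces the stated constraints'' cannot produce the strict inequality $L<a$, because $\langle uf^a\rangle$ and $\langle f^a\rangle$ are ideals with $L=a$ that fit none of the eight types unless that constraint is read as $L\le a$.
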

As in Section \ref{section3}, we can give the torsional degree as well as cardinalities of the ideals in Theorem \ref{ideals of 3ps specific case1}. The results are similar to Lemma \ref{Torsions} and Theorem \ref{Cardinality}. The critical piece for obtaining the torsional degree and cardinality is once again obtaining the parameters used in the description of ideals in Theorem \ref{ideals of 3ps specific case1}. For the sake of completeness, we state here the results giving the parameters. The proofs of these results and computations are similar to those of Lemmas \ref{first parameter computation}, Lemma \ref{second parameter computation}, and Lemma \ref{third parameter computation}, respectively. The main differences include using $c_i(x)=\underset{j=0}{\overset{p^s-1}{\sum}}(a_{i,j}+b_{i,j}x)(x^{2}+\tilde{\delta}_{0,0}x+\tilde{\delta}_{0,0}^2)^j$, where $a_{i,j},b_{i,j}\in \mathbb{F}_{p^m}$ for $i=0,1,\,2$ and $0\leq j\leq p^s-1$, writing $c_0(x)=c_{0,0}(x)+(x^{2}+\tilde{\delta}_{0,0}x+\tilde{\delta}_{0,0}^2)^{p^s-a}c_{0,1}(x)$ where $c_{0,0}(x)=\underset{j=0}{\overset{p^s-1-a}{\sum}}(a_{0,j}+b_{0,j}x)(x^{2}+\tilde{\delta}_{0,0}x+\tilde{\delta}_{0,0}^2)^j$ and $c_{0,1}(x)=\underset{j=0}{\overset{a-1}{\sum}}(a_{0,{j+p^s-a}}+b_{0,{j+p^s-a}}x)(x^{2}+\tilde{\delta}_{0,0}x+\tilde{\delta}_{0,0}^2)^j,$ and replacing $(x-\delta_{0,0})$ with $(x^{2}+\tilde{\delta}_{0,0}x+\tilde{\delta}_{0,0}^2).$ Further, the cardinality is given by Theorem 4.6(iv) in \cite{AkaKanSar}, that is, for an ideal $C$ in Theorem \ref{ideals of 3ps specific case1}, the cardinality of $C=(p^{2m})^{3p^s-\underset{i=0}{\overset{2}{\sum}}T_i}$.
\begin{lemma}
        Let $L$ be the smallest non-negative integer such that $u^2(x^{2}+\tilde{\delta}_{0,0}x+\tilde{\delta}_{0,0}^2)^L\in \langle u(x^{2}+\tilde{\delta}_{0,0}x+\tilde{\delta}_{0,0}^2)^a+u^2(x^{2}+\tilde{\delta}_{0,0}x+\tilde{\delta}_{0,0}^2)^{t}h(x)\rangle,$ where $h(x)$, if non-zero, is a unit in $\frac{\mathbb{F}_{p^m}[x]}{\langle (x^{2}+\tilde{\delta}_{0,0}x+\tilde{\delta}_{0,0}^2)^{p^s}\rangle}$. Then
        $$L= \begin{cases} 
             a & \textnormal{ if } h(x)=0, \\
             \Min\{a,\,p^s-a+t\}  & \textnormal{ if } h(x) \ne 0. \\
   \end{cases}$$
\end{lemma}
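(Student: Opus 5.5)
We follow the proof of Lemma \ref{first parameter computation}, with $x-\delta_{0,0}$ replaced by $f(x):=x^{2}+\tilde{\delta}_{0,0}x+\tilde{\delta}_{0,0}^2$. Scalars in $\mathbb{F}_{p^m}$ are replaced by polynomials of the form $a+bx$, and all such nonzero polynomials are units by Corollary \ref{ax+b invertible in 2ps case}.

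The key structural fact, established before Theorem \ref{ideals of Rt[x]x2psdeltaxpstildedelta2} with $k=2$ (since $\tilde\delta_1=0$, $\tilde\delta_2\neq0$), is that $f(x)^{p^s}=u^2w(x)$ with $w(x)$ a unit. Hence $\langle f(x)^{p^s}\rangle=\langle u^2\rangle$, and $u^2f(x)^{p^s}=0$ because $u^4=0$. We let $C=\langle uf^a+u^2f^th\rangle$.

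\textbf{Case $h=0$.} The generator times $u$ gives $u^2f^a\in C$, so $L\le a$. Any element of $C$ has the form $c_0(uf^a)+c_1u^2f^a$, and its $u^2$-part is $f^a(\cdots)$ reduced modulo $f^{p^s}$. Using the unique representation of Remark \ref{generalrepresentation}, we get $L\ge a$.

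\textbf{Case $h\neq0$.} Write an arbitrary multiplier as $c_0+uc_1$ with $c_0,c_1\in\frac{\mathbb{F}_{p^m}[x]}{\langle f^{p^s}\rangle}$, where $c_i=\sum_j(a_{i,j}+b_{i,j}x)f^j$. Split $c_0=c_{0,0}+f^{p^s-a}c_{0,1}$, where $c_{0,0}$ collects the terms with $j\le p^s-1-a$.

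\textbf{Reduction step.} Then $uf^ac_0=uf^ac_{0,0}+uf^{p^s}c_{0,1}=uf^ac_{0,0}+u^3(\cdots)=uf^ac_{0,0}$. Similarly, $u^2f^tc_0=u^2f^tc_{0,0}$, since $u^2f^{t+p^s-a}\cdot(\cdots)$ needs care. Here lies the point the Remark after Lemma \ref{third parameter computation} warns about. One must reduce $f^{p^s}$ to $u^2w$ before comparing coefficients, which is why $u^2f^{p^s-a+t}hc_{0,1}$ survives only as an honest $u^2$-term of $f$-degree $p^s-a+t<p^s$.

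Comparing the $u$- and $u^2$-components in the unique representation then gives $f^ac_{0,0}=0$ in the $u$-part, which forces $c_{0,0}=0$ since its $f$-degrees are below $p^s-a$. The $u^2$-part becomes
\[
f^ac_1+f^{p^s-a+t}h\,c_{0,1}=f^{\alpha}.
\]
From this, $\alpha\ge\min\{a,p^s-a+t\}$.

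\textbf{Upper bound.} Choosing $c_1=1$, $c_{0,1}=0$ gives $u^2f^a\in C$. Choosing $c_1=0$, $c_{0,1}=h^{-1}$, i.e. multiplying the generator by $f^{p^s-a}h^{-1}$, gives $u^2f^{p^s-a+t}\in C$. Here we use that $uf^{p^s}=u^3w=0$. Together these give $L=\min\{a,p^s-a+t\}$.

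\textbf{Main obstacle.} The delicate step is the bookkeeping in the reduction. We must check that a product of two coefficient polynomials of degree $\le1$ is re-expanded correctly in the $f$-adic basis. A product of two polynomials of degree $\le1$ can have degree 2, which carries into higher $f$-powers and only increases exponents. So the divisibility conclusion ``$f^\alpha$ is divisible by $f^{\min\{a,p^s-a+t\}}$'' is unaffected. We argue this via $f$-adic valuation in $\mathbb{F}_{p^m}[x]/\langle f^{p^s}\rangle$, which is a chain ring with maximal ideal $\langle f\rangle$. The valuation of $f^ac_1+f^{p^s-a+t}hc_{0,1}$ is at least the minimum of the two exponents.
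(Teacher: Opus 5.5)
Your proposal is correct and follows the paper's route: you transplant the proof of Lemma \ref{first parameter computation} to $f(x)=x^{2}+\tilde{\delta}_{0,0}x+\tilde{\delta}_{0,0}^2$ (split $c_0=c_{0,0}+f^{p^s-a}c_{0,1}$, use $uf^{p^s}=0$ so the $u$-component forces $c_{0,0}=0$, reduce to $f^ac_1+f^{p^s-a+t}hc_{0,1}=f^{\alpha}$, and exhibit the two explicit multipliers for the upper bound). One slip needs fixing: the sentence claiming $u^2f^tc_0=u^2f^tc_{0,0}$ is false as written, since the correct identity is $u^2f^thc_0=u^2f^thc_{0,0}+u^2f^{p^s-a+t}hc_{0,1}$, whose second term is exactly the one you rightly keep and whose first term vanishes only after $c_{0,0}=0$ has been established.
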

\begin{lemma}
        Let $L$ be the smallest non-negative integer such that $u^2(x^{2}+\tilde{\delta}_{0,0}x+\tilde{\delta}_{0,0}^2)^L\in \langle (x^{2}+\tilde{\delta}_{0,0}x+\tilde{\delta}_{0,0}^2)^{a}+u(x^{2}+\tilde{\delta}_{0,0}x+\tilde{\delta}_{0,0}^2)^{t_0}h_0(x)+u^2(x^{2}+\tilde{\delta}_{0,0}x+\tilde{\delta}_{0,0}^2)^{t_1}h_1(x)\rangle$, where for $i=0$ and $1$, $h_i(x)$, if non-zero, is a unit in $\frac{\mathbb{F}_{p^m}[x]}{\langle (x^{2}+\tilde{\delta}_{0,0}x+\tilde{\delta}_{0,0}^2)^{p^s}\rangle}$.Then
        $$L= \begin{cases} 
             0 & \textnormal{ if } h_0(x)=0, \\
             0&\textnormal{ if } h_0(x)\neq 0 \textnormal{ and } a<p^s-a+t_0,\\
             \Min\{a,\,\beta\}  & \textnormal{ if } h_0(x) \ne 0 \textnormal{ and } a\geq p^s-a+t_0, \\
   \end{cases}$$
   where $\beta:=\max\{k:(x^{2}+\tilde{\delta}_{0,0}x+\tilde{\delta}_{0,0}^2)^k\mid ((x^{2}+\tilde{\delta}_{0,0}x+\tilde{\delta}_{0,0}^2)^{t_0}h_0(x) -h_0(x)^{-1}(x^{2}+\tilde{\delta}_{0,0}x+\tilde{\delta}_{0,0}^2)^{2a-p^s-t_0}-h_1(x)h_0(x)^{-1}(x^{2}+\tilde{\delta}_{0,0}x+\tilde{\delta}_{0,0}^2)^{a+t_1-t_0}h_1(x)h_0(x)^{-1})\}.$
\end{lemma}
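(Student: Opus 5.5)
Write $f(x)=x^{2}+\tilde{\delta}_{0,0}x+\tilde{\delta}_{0,0}^2$ and $C=\langle f^{a}+uf^{t_0}h_0+u^2f^{t_1}h_1\rangle$. The plan is to copy the proof of Lemma \ref{second parameter computation} line by line. The only structural change is the relation used to reduce powers of $f$. Since $\tilde{\delta}_1=0$, the computation before Theorem \ref{ideals of Rt[x]x2psdeltaxpstildedelta2} (with $k=2$) gives $f^{p^s}=-u^2w(x)$ with $w(x)=\tilde{\delta}_2(x^{p^s}+2\tilde{\delta}_0)$ modulo $u^3$. Because $t=3$, the $u^3$ tail vanishes, and $w$ is a unit by Corollary \ref{ax+b invertible in 2ps case}.

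Suppose $u^2f^{\alpha}\in C$. Write the multipliers $c_i(x)=\sum_j(a_{i,j}+b_{i,j}x)f^j$ for $i=0,1,2$, using the unique representation of Remark \ref{generalrepresentation}. Split $c_0=c_{0,0}+f^{p^s-a}c_{0,1}$, where $c_{0,0}$ has $f$-degree at most $p^s-1-a$. Expand $(f^a+uf^{t_0}h_0+u^2f^{t_1}h_1)(c_0+uc_1+u^2c_2)$ and replace each $f^{p^s}$ by $-u^2w$. Comparing the $u^0,u^1,u^2$ components, which is legitimate by uniqueness of the representation, gives the analogues of equations (1)--(3):
\begin{itemize}
\item $f^ac_{0,0}=0$, hence $c_{0,0}=0$;
\item $f^ac_1+f^{p^s-a+t_0}h_0c_{0,1}=0$;
\item $f^ac_2-wc_{0,1}+f^{t_0}h_0c_1+f^{p^s-a+t_1}h_1c_{0,1}=f^{\alpha}$.
\end{itemize}
Following the paper's Remark, the reduction $f^{p^s}\mapsto -u^2w$ must be performed before comparing coefficients, not after.

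The cases then follow the earlier proof.

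\textbf{Case $h_0=0$.} Multiplying the generator by $f^{p^s-a}$ gives $u^2(-w+f^{p^s-a+t_1}h_1)$. This is $u^2$ times a unit, since $-w$ is a unit and the second term is nilpotent. Hence $u^2\in C$ and $L=0$.

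\textbf{Case $h_0\ne0$ and $a<p^s-a+t_0$.} Multiply the generator by $f^{p^s-a}-uf^{p^s-2a+t_0}h_0$. The $u^0$ and $u^1$ parts cancel, and what remains is $u^2\bigl(-w+f^{p^s-a+t_1}h_1-f^{p^s-2a+2t_0}h_0^2\bigr)$. Both exponents are positive, so this is again $u^2$ times a unit, and $L=0$.

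\textbf{Case $a\ge p^s-a+t_0$.} The second equation determines $c_{0,1}=-h_0^{-1}f^{2a-p^s-t_0}c_1$. Substituting into the third gives
$$f^ac_2+\Delta(x)\,c_1=f^{\alpha},$$
where $\Delta$ is the bracketed polynomial in the definition of $\beta$, now carrying the unit $w$ in front of the $h_0^{-1}f^{2a-p^s-t_0}$ term. This yields $\alpha\ge\min\{a,\beta\}$. For the reverse inequality, take $(c_1,c_2)=(0,1)$ to get $u^2f^a\in C$, and $(c_1,c_2)=(1,0)$ with $c_{0,1}$ as above to get $u^2f^{\beta}\cdot(\text{unit})\in C$. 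Together these give $L=\min\{a,\beta\}$.

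The main obstacle is the unit $w$, which has no counterpart in the case $n=1$. Its value modulo $f^{p^s}$ is not $1$, so the valuation of $\Delta$ could in principle depend on it. I would handle this by normalizing first: replace $h_0,h_1$ by $w$-twisted units, or equivalently use $u'=u\sqrt{w}$-style rescaling in the relation, where such a rescaling exists. Then check that $\beta$ as stated is the $f$-adic valuation of $\Delta$ once these units are absorbed. If that absorption fails, I would state $\beta$ with the explicit factor $w$ kept inside $\Delta$.
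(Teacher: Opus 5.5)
You follow the paper's route: redo Lemma \ref{second parameter computation} with $f$ in place of $x-\delta_{0,0}$. You are more careful than the paper about the relation, since here $f^{p^s}=-u^2w$ with $w=\tilde{\delta}_2(x^{p^s}+2\tilde{\delta}_0)$ a non-constant unit. Your component equations and the two cases giving $L=0$ are correct.

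\textbf{First gap: the final step of the case $a\ge p^s-a+t_0$ is left open.} You never show that the $f$-adic valuation of
\[\Delta_w=f^{t_0}h_0+w\,h_0^{-1}f^{2a-p^s-t_0}-h_1h_0^{-1}f^{a+t_1-t_0}\]
equals the stated $\beta$, and the normalization you suggest cannot do this. Twisting $h_0,h_1$ changes the generator, hence the ideal. Rescaling $u'=u\sqrt{w}$ only replaces $h_0$ by $h_0w^{-1/2}$, and clearing that unit brings $w$ straight back into $\Delta$.

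The missing idea is a valuation count. The three terms have valuations $t_0$, $2a-p^s-t_0$ and $a+t_1-t_0$ (the last only if $h_1\neq0$). The third exceeds the second by $p^s-a+t_1\ge1$. For odd $p$, $p^s$ is odd, so $t_0\neq 2a-p^s-t_0$ and the first two terms cannot cancel. Hence $v(\Delta_w)=\min\{t_0,2a-p^s-t_0\}$ whatever the units $w,h_0,h_1$ are. The same count gives this value for the bracket defining $\beta$, so $w$ drops out and $\min\{a,\beta\}=\beta$.

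\textbf{Second gap: $c_{0,1}$ is not uniquely determined.} The equation $f^ac_1+f^{p^s-a+t_0}h_0c_{0,1}=0$ holds in $\mathbb{F}_{p^m}[x]/\langle f^{p^s}\rangle$, so it fixes $c_{0,1}$ only modulo $f^{a-t_0}$. For instance, $c_0=f^{p^s-t_0}$, $c_1=c_2=0$ gives $u^2f^{a-t_0}(-w+f^{p^s-a+t_1}h_1)\in C$. The correct lower bound is therefore $\alpha\ge\min\{v(\Delta_w),a-t_0\}$, and this minimum is exactly $L$ in this case. For odd $p$ the extra term is harmless, because $\beta\le 2a-p^s-t_0<a-t_0$.

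For $p=2$ (allowed in this section when $m$ is odd), both effects bite and the statement as written is false, so your suspicion was justified. Take $m=1$, $s=2$, $\tilde\delta=1+u^2$, so that $f=x^2+x+1$ and $f^4=u^2x^4$. Take $a=3$, $t_0=1$, $h_0=1$, $h_1=0$. The bracket defining $\beta$ is $f-f=0$, so the formula predicts $L=a=3$. But $(f^2+u)(f^3+uf)=u^2f(1+x^4)$ with $1+x^4=(1+x)^4$ a unit, so $L\le1$; in fact $L=1$.

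A complete proof is therefore your argument plus the parity count and the annihilator correction above, under the hypothesis that $p$ is odd. The paper's proof by analogy silently uses $f^{p^s}=u^2$ and a unique $c_{0,1}$, so it does not supply these steps either.
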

\begin{lemma}
        Let $L$ be the smallest non-negative integer such that $u^2(x^{2}+\tilde{\delta}_{0,0}x+\tilde{\delta}_{0,0}^2)^L\in \langle (x^{2}+\tilde{\delta}_{0,0}x+\tilde{\delta}_{0,0}^2)^{a}+u(x^{2}+\tilde{\delta}_{0,0}x+\tilde{\delta}_{0,0}^2)^{t_0}h_0(x)+u^2(x^{2}+\tilde{\delta}_{0,0}x+\tilde{\delta}_{0,0}^2)^{t_1}h_1(x),\, u(x^{2}+\tilde{\delta}_{0,0}x+\tilde{\delta}_{0,0}^2)^b+u^2(x^{2}+\tilde{\delta}_{0,0}x+\tilde{\delta}_{0,0}^2)^{t_2}h_2(x)\rangle$  where for $i=0,1$ and $2$, $h_i(x)$, if non-zero, is a unit in $\frac{\mathbb{F}_{p^m}[x]}{\langle (x^{2}+\tilde{\delta}_{0,0}x+\tilde{\delta}_{0,0}^2)^{p^s}\rangle}$. Then
        $$L= \begin{cases} 
             0 & \textnormal{ if } h_0(x)=0, \\
             0  & \textnormal{ if } h_0(x) \ne 0 \textnormal{ and } a<p^s-a+t_0, \\
              \Min\{b,\,\beta_1,\,\beta_2\}  & \textnormal{ if } h_0(x) \ne 0 \textnormal{ and } b\geq p^s-a+t_0, \\
             
   \end{cases}$$
   where $\beta_1:=\max\{k:(x^{2}+\tilde{\delta}_{0,0}x+\tilde{\delta}_{0,0}^2)^k\mid ((x^{2}+\tilde{\delta}_{0,0}x+\tilde{\delta}_{0,0}^2)^{t_0}h_0(x) -(x^{2}+\tilde{\delta}_{0,0}x+\tilde{\delta}_{0,0}^2)^{2a-p^s-t_0}h_0(x)^{-1}-(x^{2}+\tilde{\delta}_{0,0}x+\tilde{\delta}_{0,0}^2)^{a+t_1-t_0}h_1(x)h_0(x)^{-1})\}$ and $\beta_2:=\max\{k: (x^{2}+\tilde{\delta}_{0,0}x+\tilde{\delta}_{0,0}^2)^k\mid ((x^{2}+\tilde{\delta}_{0,0}x+\tilde{\delta}_{0,0}^2)^{t_2}h_2(x)-(x^{2}+\tilde{\delta}_{0,0}x+\tilde{\delta}_{0,0}^2)^{a+b-p^s-t_0}h_0(x)^{-1}-(x^{2}+\tilde{\delta}_{0,0}x+\tilde{\delta}_{0,0}^2)^{b+t_1-t_0}h_1(x)h_0(x)^{-1})\}.$
\end{lemma}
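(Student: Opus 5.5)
Write $F=x^{2}+\tilde{\delta}_{0,0}x+\tilde{\delta}_{0,0}^2$ and let $C$ be the two-generator ideal in the statement. I would adapt the proof of Lemma \ref{third parameter computation}. The key relation is the analogue of $\langle (x-\delta_{0,0})^{p^s}\rangle=\langle u^2\rangle$. Since $\tilde{\delta}_1=0$ and $\tilde{\delta}_2\neq 0$, we have $F^{p^s}=-u^2\tilde{\delta}_2(x^{p^s}+2\tilde{\delta}_0)$ in the quotient. By Corollary \ref{ax+b invertible in 2ps case}, $x^{p^s}+2\tilde{\delta}_0=(x+\alpha_0\tilde{\delta}_{0,0})^{p^s}$ is a unit. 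So $F^{p^s}=u^2w$ with $w$ a unit, and after absorbing $w$ (it only rescales the coefficient polynomials) the computation runs formally as in Section \ref{section3}. Elements are written with coefficients $a_{i,j}+b_{i,j}x$, as in Remark \ref{generalrepresentation}.

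Next I would fix $\alpha$ with $u^2F^\alpha\in C$ and write $u^2F^\alpha$ as a combination of the two generators. The coefficients are $c_0+uc_1+u^2c_2$ on the first generator and $d_0+ud_1$ on the second. I would split $c_0=c_{0,0}+F^{p^s-a}c_{0,1}$ and reduce using $F^{p^s}=u^2w$. Comparing the $u^0,u^1,u^2$ components, using uniqueness of the representation, gives three equations. The first forces $c_{0,0}=0$. This leaves a $u^1$-equation linking $d_0,c_1,c_{0,1}$ and a $u^2$-equation equal to $F^\alpha$.

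I would then treat the cases.
\begin{itemize}
\item If $h_0=0$: multiplying the first generator by $F^{p^s-a}$ gives $u^2(1+F^{p^s-a+t_1}h_1)$ times a unit. This is a unit multiple of $u^2$, so $L=0$.
\item If $h_0\neq0$ and $b<p^s-a+t_0$: I would form $F^{p^s-a}\cdot(\text{gen}_1)-F^{p^s-a-b+t_0}h_0\cdot(\text{gen}_2)$. The $u$-terms cancel, leaving $u^2$ times a unit, so again $L=0$. This case covers the stated hypothesis $a<p^s-a+t_0$, since $b<a$.
\item If $b\ge p^s-a+t_0$: I would solve the $u^1$-equation for $c_{0,1}$, using invertibility of $h_0$ in $\mathbb{F}_{p^m}[x]/\langle F^{p^s}\rangle$. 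Substituting into the $u^2$-equation expresses $F^\alpha$ as
\[
F^a c_2+P_1 c_1+P_2 d_0+F^b d_1 ,
\]
where $P_1,P_2$ are exactly the polynomials inside $\beta_1,\beta_2$.
\end{itemize}

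In the last case, $F$ is irreducible over $\mathbb{F}_{p^m}$, so divisibility by powers of $F$ gives the lower bound $\alpha\ge\min\{a,b,\beta_1,\beta_2\}=\min\{b,\beta_1,\beta_2\}$. For the upper bound I would choose one of $d_1,c_1,d_0$ equal to $1$ and the others $0$; each choice yields an element of $C$, namely $u^2F^b$, $u^2F^{\beta_1}$ or $u^2F^{\beta_2}$ up to a unit factor.

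The main obstacle is the bookkeeping with the unit $w$ and the coefficients $a+bx$. One must check that absorbing $w$ does not change the $F$-adic valuations defining $\beta_1,\beta_2$. It is also not immediate that $P_1,P_2$ factor as $F^{\beta_i}$ times a unit, which is what the realization step needs. I would verify both by noting that an element of $\mathbb{F}_{p^m}[x]/\langle F^{p^s}\rangle$ is $F^k$ times a unit exactly when its $F$-adic valuation is $k$.
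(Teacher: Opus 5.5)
Your outline follows the paper's own route: the proof of Lemma~\ref{third parameter computation} with $x-\delta_{0,0}$ replaced by $F$. However, the step ``absorbing $w$ only rescales the coefficient polynomials'' is false. The paper's argument glosses over the same point when it replaces $(x-\delta_{0,0})^{p^s}$ by $u^2$.

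The unknown $c_{0,1}$ appears in two equations. In the $u^1$-equation it appears as $F^{p^s-a+t_0}h_0c_{0,1}$. In the $u^2$-equation it appears as $(w+F^{p^s-a+t_1}h_1)c_{0,1}$, where the $w$ comes from $F^{p^s}c_{0,1}=u^2wc_{0,1}$. No rescaling removes $w$ from both. Carried through honestly, your elimination yields
\begin{align*}
P_1&=F^{t_0}h_0-wF^{2a-p^s-t_0}h_0^{-1}-F^{a+t_1-t_0}h_1h_0^{-1},\\
P_2&=F^{t_2}h_2-wF^{a+b-p^s-t_0}h_0^{-1}-F^{b+t_1-t_0}h_1h_0^{-1},
\end{align*}
not the polynomials in the statement. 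Write $v_F$ for the $F$-adic valuation. When $t_2=a+b-p^s-t_0$, $v_F(P_2)$ can differ from $\beta_2$, because the leading terms may cancel in one version but not the other. Your closing remark, that valuation $k$ means $F^k$ times a unit, does not address this.

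In fact the stated formula fails. Take $p=5$, $m=s=1$, $\tilde\delta=1+u^2$. Then $F=x^2+x+1$ is irreducible over $\mathbb{F}_5$ and $F^5=-u^2(x^5+2)$. Let $G_1=F^4+uF$ (so $a=4$, $t_0=1$, $h_0=1$, $h_1=0$) and $G_2=uF^2+u^2(1+F)$ (so $b=2$, $t_2=0$, $h_2=1+F$). Here $b=p^s-a+t_0$. The statement gives $\beta_1=v_F(F-F^2)=1$ and $\beta_2=v_F\bigl((1+F)-1\bigr)=1$, hence $L=1$. But
\[
G_2-FG_1=u^2(1+F)-F^5=u^2\bigl((x+3)^5+F\bigr).
\]
Since $(x+3)^5+F\equiv (x+3)^5\not\equiv 0 \pmod F$, this factor is a unit. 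So $u^2\in C$ and $L=0$.

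There is a second, smaller gap. The $u^1$-equation determines $c_{0,1}$ only modulo $F^{a-t_0}$. The lower bound must therefore also account for
\[
F^{p^s-t_0}G_1=u^2F^{a-t_0}\bigl(w+F^{p^s-a+t_1}h_1\bigr).
\]
For odd $p$ this is harmless. Then $t_0\neq 2a-p^s-t_0$ by parity, so $v_F(P_1)=\min\{t_0,2a-p^s-t_0\}<a-t_0$. For $p=2$, which the hypothesis $p^m\equiv 2 \pmod 3$ allows, it can lower $L$, and the two leading terms of $P_1$ can also cancel there.

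Keep $w$ in $P_1,P_2$ and include $a-t_0$ in the minimum. Then your elimination and realization steps prove the corrected value $L=\min\{b,\,v_F(P_1),\,v_F(P_2),\,a-t_0\}$ in the last case, using $b<a$.
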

Now, let $\delta$ be not a cube in $R^3$. As in Theorem \ref{ideals whose parameters are to be computed}, the ring $R^{3,3}_{\delta}$, in this case, has eight different types of ideals. Replacing $x^2-\delta_{0,0}$ in the ideal types given in Theorem \ref{ideals whose parameters are to be computed} with $x^3-\delta_{0,0}$ gives the ideal types in this case. The parameters which help us in computing the torsional degree and cardinality of the ideals, in this case, can be computed in a similar manner as Lemma \ref{first parameter computation}, Lemma \ref{second parameter computation}, and Lemma \ref{third parameter computation}. The main differences include writing $c_i(x)=\underset{j=0}{\overset{p^s-1}{\sum}}(a_{i,j}+b_{i,j}x+c_{i,j}x^2)(x^{3}-\delta_{0,0})^j$, where $a_{i,j},b_{i,j},c_{i,j}\in \mathbb{F}_{p^m}$ for $0\leq i\leq 2$ and $0\leq j \leq p^s-1$, writing $c_0(x)=c_{0,0}(x)+(x^{3}-\delta_{0,0})^{p^s-a}c_{0,1}(x)$ where $c_{0,0}(x)=\underset{j=0}{\overset{p^s-1-a}{\sum}}(a_{0,j}+b_{0,j}x+c_{0,j}x^2)(x^{3}-\delta_{0,0})^j$ and $c_{0,1}(x)=\underset{j=0}{\overset{a-1}{\sum}}(a_{0,{j+p^s-a}}+b_{0,{j+p^s-a}}x+c_{0,j+p^s-a}x^2)(x^{3}-\delta_{0,0})^j,$ and replacing $(x-\delta_{0,0})$ with $(x^{3}-\delta_{0,0})$. The torsional degree ($T_i$ for $0\leq i\leq 2$) is, then, computed as in Lemma \ref{Torsions} and the cardinalities of the ideal are, then, given by Theorem 4.6(iv) in \cite{AkaKanSar}, that is, for an ideal $C$ of $R^{3,3}_{\delta}$, the cardinality of $C=(p^{3m})^{3p^s-\underset{i=0}{\overset{2}{\sum}}T_i}$.
\section{Conclusion}\label{section6}
In this paper, we studied the structure of ideals in polynomial quotient rings over the finite chain ring $R^t=\frac{\mathbb{F}_{p^m}[u]}{\langle u^t\rangle}$. By determining explicit generators, we obtained a complete description of the ideals of
$
\frac{R^t[x]}{\langle x^{np^s}-\delta\rangle},
$
where $\delta $ is a unit in $R^t$ and $(n,p)=1$. These results were then applied to several special cases corresponding to constacyclic codes of lengths $p^s$, $2p^s$, and $3p^s$. In particular, for the case $t=3$, we derived explicit parameters that help us in the computation of the cardinalities of the corresponding constacyclic codes. The results obtained in this work not only generalize several known classifications of constacyclic codes over finite fields, but also provide a unified framework for analyzing constacyclic codes over $R^t$. These structural insights and cardinality formulas may serve as a foundation for further investigations into dual codes, distance for constacyclic codes over finite chain rings.\\

\textbf{Conflict of Interest.} All authors declare that they have no conflict of interest.
\section*{Acknowledgements}
The first author would like to acknowledge PMRF (PMRF Id: 1403187) for its financial support.
\bibliographystyle{abbrv}
\bibliography{sample}

\end{document}